\title{Why the Metric Backbone\\ Preserves Community Structure}
\author{%
  Maximilien Dreveton \\
  EPFL \\
  \texttt{maximilien.dreveton@epfl.ch} 
  \And 
  Charbel Chucri  \\
  EPFL \\
  \texttt{charbel.chucri@epfl.ch} 
  \AND 
  Matthias Grossglauser \\
  EPFL \\
  \texttt{matthias.grossglauser@epfl.ch} 
  \And 
  Patrick Thiran \\
  EPFL \\
  \texttt{patrick.thiran@epfl.ch} 
  % examples of more authors
  % \And
  % Coauthor \\
  % Affiliation \\
  % Address \\
  % \texttt{email} \\
  % \AND
  % Coauthor \\
  % Affiliation \\
  % Address \\
  % \texttt{email} \\
  % \And
  % Coauthor \\
  % Affiliation \\
  % Address \\
  % \texttt{email} \\
  % \And
  % Coauthor \\
  % Affiliation \\
  % Address \\
  % \texttt{email} \\
}
\theoremstyle{plain}
\newtheorem{theorem}{Theorem}
\newtheorem{lemma}{Lemma}
\newtheorem{proposition}{Proposition}
\newtheorem*{theorem*}{Theorem}
\newtheorem*{lemma*}{Lemma}
\newtheorem*{proposition*}{Proposition}
\newtheorem*{conjecture*}{Conjecture}
\newtheorem{fact*}{Fact}
\theoremstyle{definition}
\newtheorem{definition}{Definition}
\newtheorem{example}{Example}
\newtheorem{remark}{Remark}
\newtheorem{assumption}{Assumption}
\newtheorem*{definition*}{Definition}
\newtheorem*{question*}{Question}
\newtheorem*{example*}{Example}
\newtheorem*{remark*}{Remark}
\newtheorem*{remarks*}{Remarks}
\newtheorem*{exercise*}{Exercise}
\newtheorem*{assumption*}{Assumption}
\newcommand{\mb}{ \mathrm{mb} }
\newcommand{\wSBM}{ \mathrm{wSBM} }
\newcommand{\Unif}{ \mathrm{Unif} }
\newcommand{\Nei}{ \mathrm{Nei} }
\newcommand{\sss}{ \mathrm{ss} }
\newcommand{\similarity}{ \mathrm{sim} }
\newcommand{\ham}{ \mathrm{Ham} }
\newcommand{\1}{ \mathbb{1} }
\newcommand{\loss}{ \mathrm{loss} }
\begin{document}

\maketitle

\begin{abstract}
    The \new{metric backbone} of a weighted graph is the union of all-pairs shortest paths. It is obtained by removing all edges $(u,v)$ that are not on the shortest path between $u$ and $v$. In networks with well-separated communities, the metric backbone tends to preserve many inter-community edges, because these edges serve as bridges connecting two communities, but tends to delete many intra-community edges because the communities are dense. This suggests that the metric backbone would dilute or destroy the community structure of the network. However, this is not borne out by prior empirical work, which instead showed that the metric backbone of real networks preserves the community structure of the original network well. In this work, we analyze the metric backbone of a broad class of weighted random graphs with communities, and we formally prove the robustness of the community structure with respect to the deletion of all the edges that are not in the metric backbone. An empirical comparison of several graph sparsification techniques confirms our theoretical finding and shows that the metric backbone is an efficient sparsifier in the presence of communities.

\end{abstract}

\section{Introduction}

\new{Graph clustering} partitions the vertex set of a graph into non-overlapping groups, so that the vertices of each group share some typical pattern or property. For example, each group might be composed of vertices that interact closely with each other. Graph clustering is one of the main tasks in the statistical analysis of networks~\cite{avrachenkov2022statistical,Newman2018_networksbook}.

In many scenarios, the observed pairwise interactions are weighted.
In a \new{proximity graph}, these weights measure the degree of similarity between edge endpoints (e.g., frequency of interaction in a social network), whereas in a \new{distance graph}, they measure dissimilarity instead (for example, the length of segments in a road network or travel times in a flight network).
%Depending on the situation, weights measure either a similarity or a dissimilarity between pairs of vertices. For example, in a transportation network, a weight may represent the number of flights between two cities or the price of a flight ticket from one city to another. In the former case, a large weight occurs between two well-connected cities (hence these cities are likely to be in the same community). In the latter case, a large weight indicates that the two cities are less likely to be in the same community, as the price of travelling between them is high.
%To disambiguate the word 'weight', we call \new{proximity graph} a weighted graph whose weights measure similarities between vertices, and \new{distance graph} a weighted graph whose weights measure dissimilarities between vertices. 
To avoid confusion, we refer to the weights in a distance graph as \new{costs}, so that the cost of a path will be naturally defined as the sum of the edge costs on this path.\footnote{This additive property does not hold for similarity weights, but a proximity graph can be transformed into a distance graph by applying an isomorphic non-linear transformation on the weights~\cite{simas2015distance}.}

Distance graphs obtained from real-world data typically violate the triangle inequality. More precisely, the shortest distance between two vertices in the graph is not always equal to the cost of the direct edge, but rather equal to the cost of an indirect path via other vertices. For example, the least expensive flight between two cities is often a flight including one or more layovers. An edge whose endpoints are connected by an indirect shorter path is called \new{semi-metric}; otherwise, it is called~\new{metric}.

We obtain the \new{metric backbone} of a distance graph by removing all its semi-metric edges. It has been experimentally observed that the metric backbone retains only a small fraction of the original edges, typically between 10\% and 30\% in social networks~\cite{simas2021distance}. Properties of the original graph that depend only on the shortest paths (such as connectivity, diameter, and betweenness centrality) are preserved by its metric backbone. Moreover, experiments on contact networks indicate that other properties (such as spreading processes and community structures) are also empirically well approximated by computing them on the metric backbone, rather than on the original graph~\cite{brattig2023contact}.

The preservation of these properties by the backbone highlights a well-known empirical feature of complex networks: \new{redundancy}. This is the basis for \new{graph sparsification}: the task of building a sparser graph from a given graph so that important structural properties are approximately preserved while reducing storage and computational costs. Many existing network sparsifiers identify the statistically significant edges of a graph by comparing them to a null-model~\cite{serrano2009extracting,yassin2023evaluation}. These methods typically require hyperparameters and can leave some vertices isolated (by removing all edges from a given vertex). \textit{Spectral sparsification} aims at preserving the spectral structure of a graph but also relies on a hyperparameter, and the edges in the sparsified graph might have different weights than in the original graph~\cite{Spielman2011spectral}. In contrast, the metric backbone is parameter-free and automatically preserves all properties linked to the shortest path structure. Moreover, all-pairs shortest paths can be efficiently computed~\cite{dijkstra1959,peres2013all}. This makes the metric backbone an appealing mechanism for sparsification.

Among all the properties empirically shown to be well preserved by the metric backbone, the community structure is perhaps the most surprising. 
Indeed, if a network contains dense communities that are only loosely connected by a small number of inter-community edges, then all shortest paths between vertices in different communities must go through one of these edges. 
This suggests that these "bridge" edges are less likely to be semi-metric than intra-community edges, where the higher density provides shorter alternative paths.\footnote{In fact, this intuitive observation is at the core of one of the very first community detection algorithms~\cite{girvan2002community}.}
This in turn implies that metric sparsification should thin out intra-community edges more than inter-community edges, thereby diluting the community structure. The central contribution of this paper is to show that this intuition is wrong.

To do so, we formally characterize the metric backbone of a weighted stochastic block model (wSBM). We assume that the vertices are separated into $k$ blocks (also referred to as clusters or communities). An edge between two vertices belonging to blocks $a$ and~$b$ is present with probability~$p_{ab}$ and is independent of the presence or absence of other edges. This edge, if present, has a weight sampled from a distribution with cdf $F_{ab}$, and this weight represents the cost of traveling through this edge. We denote by $p_{ab}^{\mb}$ the probability that an edge between a vertex in block $a$ and a vertex in block~$b$ is present in the backbone.\footnote{The events $\1\{(u,v) \text{ is metric} \}$ and $\1\{ (w,x) \text{ is metric}\}$ are in general not independent, but the probability that $(u,v)$ is metric depends only on the cluster assignment of vertices $u$ and $v$.} Under loose assumptions on the costs distributions $F_{ab}$ and on the probabilities $p_{ab}$, we show that $ p^\mb_{ab} / p^\mb_{cd}= (1+o(1)) p_{ab} / p_{cd}$ for every $a,b,c,d \in [k]$. This shows that the metric backbone thins the edge set approximately uniformly and, therefore, preserves the community structure of the original graph. Moreover, we also prove that a spectral algorithm recovers almost exactly the communities.

We also conduct numerical experiments with several graph clustering algorithms and datasets to back up these theoretical results. We show that all clustering algorithms achieve similar accuracy on the metric backbone as on the original network. These simulations, performed on different types of networks and with different clustering algorithms, generalize the experiments of~\cite{brattig2023contact}, which are restricted to contact networks and the Louvain algorithm. Another type of graph we consider are graphs constructed from data points in a Euclidean space, typically by a kernel similarity measure between $q$-nearest neighbors ($q$-NN)~\cite{dong2011efficient}. This is an important technique for graph construction, with applications in non-linear embedding and clustering. 
Although~$q$ controls the sparsity of this embedding graph, this procedure is not guaranteed to produce only metric edges, and varying~$q$ often impacts the clustering performance. By investigating the metric backbone of $q$-NN graphs, we notice that it makes clustering results more robust against the value of~$q$. Consequently, leveraging graph sparsification alongside $q$-NN facilitates graph construction. 
%We show numerically that the metric sparsification. By exploring the metric sparsification of $q$-NN graphs, we observe that the metric backbone of $q$-NN graphs stabilizes the results of clustering algorithms with respect to the choice of $q$. 
%this does not affect the performance of clustering algorithms while being robust to the choice of $q$. % and, yet again, greatly reduces the number of edges.
%Therefore, graph sparsification can be used hand in hand with $q$-NN to build graphs. 

The paper is structured as follows. We introduce the main definitions and theoretical results on the metric backbone of wSBMs in Section~\ref{sec:model_theoretical_results}. We discuss these results in Section~\ref{sec:discussion} and compare them with the existing literature. Sections~\ref{sec:numerics} and~\ref{sec:applications} are devoted to numerical experiments and applications. Finally, we conclude in Section~\ref{sec:conclusion}.  

\paragraph{Code availability} We provide the code used for the experiments: \url{https://github.com/Charbel-11/Why-the-Metric-Backbone-Preserves-Community-Structure}

\paragraph{Notations}

The notation $1_n$ denotes the vector of size $n\times 1$ whose entries are all equal to one. 
For any vector $\pi \in \R^n$ and any matrix $B \in \R^{n \times m}$, we denote by $\pi_{\min } = \min_{a \in [n]} \pi_a$ and $B_{\min} = \min_{a,b} B_{ab}$, and similarly for $\pi_{\max}$ and $B_{\max}$. 
Given two matrices $A$ and $B$ of the same size, we denote by $A \odot B$ the entry-wise matrix product (\textit{i.e.,} their Hadamard product). $A^T$ is the transpose of a matrix $A$. For a vector $\pi$, we denote $\diag(\pi)$ the diagonal matrix whose diagonal element $(a,a)$ is~$\pi_a$. 

The indicator of an event $A$ is denoted $\1\{A\}$. 
Binomial and exponential random variables are denoted by $\Bin(n,p)$ and $\Exp(\lambda)$, respectively. The uniform distribution over an interval $[a,b]$ is denoted $\Unif(a,b)$. Finally, given a cumulative distribution function $F$, we write $X \sim F$ for a random variable~$X$ sampled from $F$, \textit{i.e.,} $\pr( X \le x) = F(x)$, and we denote by $f$ the pdf of this distribution. We write whp (with high probability) for events with probability tending to 1 as $n \rightarrow \infty$.

\section{The Metric Backbone of Weighted Stochastic Block Models}
\label{sec:model_theoretical_results}

\subsection{Definitions and Main Notations}

Let $G = ( V, E, c )$ be an undirected weighted graph, where $V = [n]$ is the set of vertices, $E \subseteq V \times V$ is the set of undirected edges, and $c \colon E \to \R_+$ is the cost function.%:~$c(e)$ is the cost of traveling through the edge $e \in E$. 
The \new{cost} of a path $(u_1, \cdots, u_p)$ is naturally defined as $\sum_{q=1}^{p-1} c(u_q, u_{q+1})$, and a \new{shortest path} between~$u$ and $v$ is a path of minimal cost starting from vertex~$u$ and finishing at vertex~$v$. We define the \new{metric backbone} as the union of all shortest paths of~$G$. 

\begin{definition}%[Metric backbone]
 The \new{metric backbone} of a weighted graph $G = (V, E, c)$ where~$c$ represents the edge cost is the subgraph $G^{\mb} = (V, E^{\mb}, c^{\mb} )$ of~$G$, where $E^{\mb} \subseteq E$ is such that $e \in E^{\mb}$ if and only if $e$ belongs to a shortest path from two vertices in $G$, and $c^{\mb} \colon E^{\mb} \to \R_+; e \mapsto c(e)$ is the restriction of~$c$ to~$E^{\mb}$. 
\end{definition}

\begin{comment}
We also define an \new{extended cost} $\tc \colon V \times V \to \R_+ \cup \{ \infty \}$ such that 
\begin{align*}
 \tc_{u,v} \weq 
 \begin{cases}
   c(u,v) & \text{ if } (u,v) \in E, \\
   \infty & \text{ otherwise.}
 \end{cases}
\end{align*}
%$\tc( u,v ) = c(u,v)$ if $(u,v) \in E$ and $\tc(u,v) = \infty$ otherwise. 
We note that a graph $G = (V,E,c)$ is represented by its extended cost~$\tc$. %, by a slight abuse of notation, we simply denote by $G_{uv}$ the quantity $\tc(u,v)$. 
\end{comment}

We investigate the structure of the metric backbone of weighted random graphs with community structure. We generate these graphs as follows. Each vertex $u \in [n]$ is randomly assigned to the cluster $a \in [k]$ with probability $\pi_a$. We denote by $z_u \in [k]$ the cluster of vertex $u$. Conditioned on $z_u$ and $z_v$, an edge~$(u,v)$ is present with probability~$p_{z_u z_v}$, independently of the presence or absence of other edges. If an edge $(u,v)$ is present, it is assigned a cost $c(u,v)$. The cost $c(u,v)$ is sampled from $F_{z_uz_v}$ where $F = (F_{ab})_{1 \le a,b \le k}$ denotes a collection of cumulative distribution functions such that $F_{ab} = F_{ba}$. This defines the \new{weighted stochastic block model}, and we denote $(z,G) \sim \wSBM(n, \pi, p, F)$ with $G=([n],E,c)$, $z \in [k]^n$ and 
\begin{align*}
 \pr \left( z \right) & \weq \prod_{u=1}^n \pi_{z_u}, \\
 \pr \left( E \cond z \right) & \weq \prod_{1\le u < v \le n} p_{z_u z_v}^{ \1 \{ \{u,v\} \in E \} } \left( 1 - p_{z_u z_v} \right)^{ \1 \{ \{u,v\} \not\in E \} }, \\ 
 \pr\left( c \cond E, z \right) & \weq \prod_{ \substack{ \{u,v\} \in E } } \pr \left( c(u,v) \cond z_u, z_v \right),  
\end{align*}
and $c(u,v) \cond z_u = a, z_v = b$ is sampled from $F_{ab}$. 

The wSBM is a direct extension of the standard (non-weighted) SBM into the weighted setting. SBM is the predominant benchmark for studying community detection and establishing theoretical guarantees of recovery by clustering algorithms~\cite{abbe2017community}. Despite its shortcomings, such as a low clustering coefficient, the SBM is analytically tractable and is a useful model for inference purposes~\cite{peixoto2022implicit}. 

Throughout this paper, we will make the following assumptions.

\begin{assumption}[Asymptotic scaling]
\label{assumption:wSBM_scaling}
 The edge probabilities $p_{ab}$ between blocks $a$ and $b$ can depend on~$n$, such that $p_{ab} = B_{ab} \rho_n $ with $\rho_n = \omega( n^{-1} \log n )$ and $B_{ab}$ is independent of $n$. Furthermore, the number of communities~$k$, the matrix $B$, the probabilities~$\pi_{a}$ and the cdf~$F_{ab}$ are all fixed (independent of $n$). We also assume $\pi_{\min} > 0$ and $B_{\min} > 0$. 
\end{assumption}

 To rule out some issues, such as edges with a zero cost,\footnote{An edge $(u,v) \in E$ such that $c(u,v)=0$ yields the possibility of teleportation from vertex $u$ to vertex $v$.} we also assume that the probability distributions of the costs have no mass at $0$. More precisely, we require that the cumulative distribution functions $F_{ab}$ verify $F_{ab}(0) = 0$ and $F_{ab}'(0) = \lambda_{ab} > 0$, where $F'$ denotes the derivative of $F$ (\textit{i.e.,} the pdf). The first condition ensures that the distribution has support $\R_+$ and no mass at $0$, and the second ensures that, around a neighborhood of $0$, $F_{ab}$ behaves as the exponential distribution $\Exp(\lambda_{ab})$ (or as the uniform distribution $\Unif([0,\lambda_{ab}^{-1}])$.

\begin{assumption}[Condition on the cost distributions]
\label{assumption:regularity_fs} The costs are sampled from continuous distributions, and there exists $\Lambda = (\lambda_{ab})_{a,b}$ with $\lambda_{ab} >0$ such that $F_{ab}(0) = 0$ and $F_{ab}'(0) = \lambda_{ab}$. 
%$F_{ab}(x) = \lambda_{ab} x + o(x^2)$, where $F_{ab}(x) = \int_0^x f_{ab}(x) dx$. 
\end{assumption}

We define the following matrix
\begin{align}
\label{eq:def_operator}
 T \weq \left[ \Lambda \odot B \right] \diag(\pi),
\end{align}
where $B$ and $\Lambda$ are defined in Assumptions~\ref{assumption:wSBM_scaling} and~\ref{assumption:regularity_fs}. Finally, we denote by $\tau_{\min}$ and $\tau_{\max}$ the minimum and maximum entries of the vector $\tau = T 1_k$. 

\begin{remark}
\label{remark:same_rate}
 Assume that $\Lambda = \lambda 1_k 1_k^T$ with $\lambda >0$. 
 We notice that $\tau_a = \lambda \sum_{b} \pi_b B_{ab}$.
 Denote by $\bd = (\bd_1, \cdots, \bd_k)$ the vector whose $a$-entry $\bd_a$ is the expected degree of a vertex in community~$a$. We have $\bd_a = n \sum_b \pi_b p_{ab}$. Then $\tau_{\min} = \lambda \bd_{\min} (n\rho_n)^{-1}$ and $\tau_{\max}=\lambda \bd_{\max} (n\rho_n)^{-1}$, where $\bd_{\min}$ and $\bd_{\max}$ are the minimum and maximum entries of~$\bd$. 
\end{remark}

\subsection{Cost of Shortest Paths in wSBMs}

Given a path $(u_1, \cdots, u_p)$, recall that its cost is $\sum_{q=1}^{p-1} c(u_q, u_{q+1})$, and the \new{hop count} is the number of edges composing the path (that is, the hop count of $(u_1, \cdots, u_p)$ is $p-1$). 

For two vertices $u, v \in V$, we denote by $C(u,v)$ the cost of the shortest path from $u$ to $v$.\footnote{We notice that, given a wSBM, whp there is only one shortest path.}
%, and by $H(u,v)$ the \new{hop count} of the shortest path from $u$ to $v$. 
The following proposition provides asymptotics for the cost of shortest paths in wSBMs. 

\begin{proposition}
\label{prop:weight_hop_count_shortest_paths}
Let $(z,G) \sim \wSBM( n, \pi, p, F)$.
Suppose that Assumptions~\ref{assumption:wSBM_scaling} and~\ref{assumption:regularity_fs} hold and let $\tau_{\min}$ and $\tau_{\max}$ be defined following Equation~\eqref{eq:def_operator}. Then, for two vertices $u$ and $v$ chosen uniformly at random in blocks $a$ and $b$, respectively. We have whp  
\begin{align*}
 \left( \tau_{\max} \right)^{-1} \wle \frac{n \rho_n}{\log n } C(u,v) & \wle \left( \tau_{\min} \right)^{-1}. 
\end{align*}
\end{proposition}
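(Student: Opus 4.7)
The strategy is to analyze the growth of the cost ball $B_u(t) = \{w \in V : C(u,w) \le t\}$ via a multi-type continuous-time branching-process approximation, and then sandwich it between two single-type Yule processes with growth rates $n\rho_n\tau_{\min}$ and $n\rho_n\tau_{\max}$.

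\textbf{Step 1: reduction to exponential weights.} Assumption~\ref{assumption:regularity_fs} gives $F_{ab}(t) = \lambda_{ab}t + o(t)$ near $0$, and all edges used on shortest paths will turn out to carry weights of order $1/(n\rho_n) = o(1)$. Using the inverse-CDF coupling, one can simultaneously realize $c(u,v)\sim F_{z_uz_v}$ and $\tilde c(u,v)\sim \Exp(\lambda_{z_uz_v})$ in such a way that $c(u,v) = (1+o(1))\,\tilde c(u,v)$ on the event $\min(c,\tilde c)\le \delta_n$ for any $\delta_n \to 0$ slowly enough. Consequently, up to a $(1+o(1))$ multiplicative factor on $C(u,v)$, it suffices to analyse the model with exponential edge costs.

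\textbf{Step 2: Dijkstra as a multi-type branching process.} With exponential weights, Dijkstra's exploration from $u$ becomes a continuous-time Markov process. While $|B_u(t)| = o(n)$, a boundary vertex $w$ with $z_w = a$ produces its next unseen neighbour after an exponential waiting time whose rate is, to leading order,
\begin{align*}
\sum_{b\in[k]} (n\pi_b)\cdot p_{ab}\cdot \lambda_{ab} \weq n\rho_n\,\tau_a + o(n\rho_n),
\end{align*}
since among the $n\pi_b$ potential neighbours in block $b$, each has independently an edge of weight $\Exp(\lambda_{ab})$ with probability $p_{ab} = B_{ab}\rho_n$. The newly-discovered vertex is of type $b$ with probability proportional to $\pi_b\lambda_{ab}B_{ab}$. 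This matches exactly a multi-type Markov branching process with mean growth matrix $n\rho_n T$, where $T$ is defined in~\eqref{eq:def_operator}.

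\textbf{Step 3: sandwich and conclusion.} Couple the exploration with two single-type Yule processes: an upper one $Y^+$ where every particle gives birth at rate $n\rho_n\tau_{\max}$, and a lower one $Y^-$ at rate $n\rho_n\tau_{\min}$. Standard martingale asymptotics yield $Y^{\pm}(t) = \exp\!\left( n\rho_n\tau_{\max/\min}\,t\,(1+o_{\pr}(1)) \right)$, so whp, while $|B_u(t)| = o(n)$,
\begin{align*}
e^{n\rho_n\tau_{\min}\,t\,(1-o(1))} \wle |B_u(t)| \wle e^{n\rho_n\tau_{\max}\,t\,(1+o(1))}.
\end{align*}
For the lower bound of the proposition, set $t_- = (1-\epsilon)\log n/(n\rho_n\tau_{\max})$; then $|B_u(t_-)| \le n^{1-\epsilon+o(1)} = o(n)$, so a random $v\in$ block $b$ lies outside $B_u(t_-)$ whp and $C(u,v) \ge t_-$. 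For the upper bound, run the exploration simultaneously from $u$ and $v$: both balls grow at rate at least $n\rho_n\tau_{\min}$, so by time $t_+ = (1+\epsilon)\log n/(2n\rho_n\tau_{\min})$ each has size $n^{1/2+\epsilon+o(1)}$; a standard birthday argument over the remaining unexposed edges between the two boundaries forces a meeting edge whp, and then $C(u,v) \le 2 t_+ \le (1+2\epsilon)\log n/(n\rho_n\tau_{\min})$.

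\textbf{Main obstacle.} The most delicate point is the validity of the branching-process approximation past the "birthday" threshold $|B_u(t)|\asymp n^{1/2}$ needed for the upper bound on $C(u,v)$: once the ball is that large, collisions (an edge landing on an already-explored vertex) start to bias the exploration. Controlling this requires either a sprinkling argument or a direct second-moment estimate on the number of collisions, in the spirit of first-passage percolation analyses on inhomogeneous random graphs. A secondary nuisance is the positive extinction probability of the dominated process $Y^-$, which is handled by the meet-in-the-middle construction and by the fact that, in the regime $\rho_n\gg n^{-1}\log n$, every vertex has whp a supercritical local neighbourhood.
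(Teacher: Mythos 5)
Your plan is correct and follows essentially the same route as the paper: reduce to exponential costs by an inverse-CDF coupling, analyse the FPP exploration from $u$ and $v$ as a branching-type process whose per-particle discovery rate is sandwiched between $n\rho_n\tau_{\min}$ and $n\rho_n\tau_{\max}$, and conclude with a meet-in-the-middle collision argument once the balls reach size roughly $\sqrt{n}$. The paper's version tracks the cost $C_u(q)$ of reaching the $q$-th nearest neighbour (exponential increments of rate $\approx n\rho_n\tau q$, controlled by binomial concentration and Chebyshev) rather than the ball size at time $t$, and obtains the lower bound by showing the two balls of size $\sqrt{n/\log n}$ are disjoint rather than by your one-sided ball-growth bound, but these are presentational variants of the same sandwich, and the obstacle you flag (collisions near the birthday threshold) is exactly what the paper's explicit collision-probability computation handles.
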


To prove Proposition~\ref{prop:weight_hop_count_shortest_paths}, in the first stage, we simplify the problem by assuming exponentially distributed weights. 
We then analyze two first passage percolation (FPP) processes, originating from vertices~$u$ and~$v$, respectively. Using the memoryless property of the exponential distribution, we analyze two first passage percolation (FPP) processes, originating from vertex~$u$ and~$v$, respectively. Each FPP explores the nearest neighbors of its starting vertex until it reaches $q$ neighbors. 
As long as $q = o(\sqrt{n})$, the two FPP processes remain disjoint (with high probability). Thus, the cost $C(u,v)$ is lower-bounded by the sum of (i) the cost of the shortest path from~$u$ to its $q$-nearest neighbor and of (ii) the cost of the shortest path from~$v$ to its $q$-nearest neighbor. 
On the contrary, when $q = \omega(\sqrt{n})$, the two FPPs intersect, revealing a path from~$u$ to~$v$, and the cost of this path upper-bounds the cost $C(u,v)$ of the shortest path from $u$ to~$v$. 
In the second stage, we extend the result to general weight distributions by noticing that the edges belonging to the shortest paths have very small costs. Moreover, Assumption~\ref{assumption:regularity_fs} yields that the weight distributions behave as an exponential distribution in a neighborhood of $0$. We can thus adapt the coupling argument of~\cite{janson1999one} to show that the edge weights distributions do not need to be exponential, as long as Assumption~\ref{assumption:regularity_fs} is verified. The proof of Proposition~\ref{prop:weight_hop_count_shortest_paths} is provided in Section~\ref{section:proof}.

\subsection{Metric Backbone of wSBMs}
\label{subsection:metric_backbone_wSBMs}
Let $(z,G) \sim \wSBM( n, \pi, p, F)$ and denote by $G^{\mb}$ the metric backbone of $G$. Choose two vertices~$u$ and~$v$ uniformly at random, and notice that the probability that the edge~$(u,v)$ is present in $G^{\mb}$ depends only on~$z_u$ and $z_v$, and not on~$z_w$ for $w \notin \{u,v\}$. Denote by $p_{ab}^{\mb}$ the probability that an edge between a vertex in community $a$ and a vertex in community $b$ appears in the metric backbone,~\textit{i.e.,} 
\begin{align*}
 p_{ab}^{\mb} \weq \pr \left( (u,v) \in G^{\mb} \cond z_u = a, z_v = b \right). 
\end{align*}
The following theorem shows that the ratio $\frac{ p_{ab}^{\mb} }{ p_{ab}} $ scales as $\Theta\left( \frac{\log n}{n \rho_n} \right)$.

\begin{theorem} 
\label{thm:mb_keeps_community}
Let $(z,G) \wsim \wSBM(n, \pi, p, F)$ and suppose that Assumptions~\ref{assumption:wSBM_scaling} and~\ref{assumption:regularity_fs} hold. 
Let $\tau_{\min}$ and $\tau_{\max}$ be defined after Equation~\eqref{eq:def_operator}. 
Then 
\begin{align*}
  (1+o(1)) \frac{ \lambda_{ab} }{ \tau_{\max} } \wle \frac{ n \rho_n }{ \log n } \frac{ p_{ab}^{\mb} }{ p_{ab}} \wle (1+o(1)) \frac{ \lambda_{ab} }{ \tau_{\min} }. 
 \end{align*}
\end{theorem}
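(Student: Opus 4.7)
The plan is to express $p^{\mb}_{ab}$ as $p_{ab}$ times a conditional probability that can be evaluated using Proposition~\ref{prop:weight_hop_count_shortest_paths}, and then control the latter via the local linearity of $F_{ab}$ at the origin. Let $C'(u,v)$ denote the cost of a shortest path from $u$ to $v$ in the graph $G \setminus \{(u,v)\}$ obtained by removing the direct edge (if it is present). The edge $(u,v)$ belongs to the metric backbone if and only if it is present in $G$ and $c(u,v) \le C'(u,v)$. Conditionally on $z_u = a$, $z_v = b$, and $(u,v) \in E$, the weight $c(u,v) \sim F_{ab}$ is by construction independent of $C'(u,v)$, since the latter is a function of the edges and weights other than the direct edge $(u,v)$. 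I would therefore write
\begin{align*}
 p^{\mb}_{ab} \weq p_{ab} \cdot \mathbb{E}\bigl[ F_{ab}(C'(u,v)) \cond z_u = a, z_v = b \bigr],
\end{align*}
reducing the task to obtaining matching asymptotics on $\mathbb{E}[F_{ab}(C'(u,v))]$.

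Next, I would argue that $C'(u,v)$ satisfies the same whp bounds as $C(u,v)$ in Proposition~\ref{prop:weight_hop_count_shortest_paths}. Indeed, by independence of edges in the wSBM, deleting a single edge does not alter the joint distribution of the remaining edges, and the FPP argument sketched after Proposition~\ref{prop:weight_hop_count_shortest_paths} uses only the independence structure away from the direct edge; the bounds therefore carry over verbatim. This gives, whp,
\begin{align*}
 (1+o(1))\,\frac{\log n}{n\rho_n\,\tau_{\max}} \wle C'(u,v) \wle (1+o(1))\,\frac{\log n}{n\rho_n\,\tau_{\min}}.
\end{align*}
Since $\rho_n \gg n^{-1}\log n$ by Assumption~\ref{assumption:wSBM_scaling}, one has $C'(u,v) = o(1)$ whp, and Assumption~\ref{assumption:regularity_fs} guarantees $F_{ab}(x) = \lambda_{ab} x\,(1+o(1))$ as $x \downarrow 0$. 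Combining these two facts gives, whp,
\begin{align*}
 (1+o(1))\,\lambda_{ab}\,\frac{\log n}{n\rho_n\,\tau_{\max}} \wle F_{ab}(C'(u,v)) \wle (1+o(1))\,\lambda_{ab}\,\frac{\log n}{n\rho_n\,\tau_{\min}}.
\end{align*}

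The main obstacle will be to transfer these pointwise (whp) bounds into bounds on the expectation $\mathbb{E}[F_{ab}(C'(u,v))]$, because the target quantity is itself $o(1)$ in $n$ and one must make sure the exceptional event does not dominate. My strategy is to exploit the deterministic bound $F_{ab}(C'(u,v)) \le 1$: the contribution of the exceptional event to the expectation is at most its probability. It then suffices to have a quantitative version of Proposition~\ref{prop:weight_hop_count_shortest_paths} in which the exceptional probability is $o(\log n/(n\rho_n))$, which I would expect to follow for free from the underlying FPP concentration estimates (these are typically polynomially small in $n$, hence much smaller than $\log n/(n\rho_n)$). Once this refinement is available, integrating the whp bounds on $F_{ab}(C'(u,v))$ and dividing by $\log n/(n\rho_n)$ yields the claimed two-sided asymptotic on $p^{\mb}_{ab}/p_{ab}$.
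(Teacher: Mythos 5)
Your decomposition is correct and is a genuinely cleaner route than the paper's. The paper invokes \cite[Corollary~1]{VanMieghem2009ObservablePart} to write $p_{ab}^{\mb}$ as an integral $-\int_0^\infty c^*_{uv}(x)\log(1-p_{ab}F_{ab}(x))\,\diff x$ against the density of the shortest-path cost $C(u,v)$, and then localizes that integral using Proposition~\ref{prop:weight_hop_count_shortest_paths}. You instead derive the exact identity $p^{\mb}_{ab} = p_{ab}\,\E[F_{ab}(C'(u,v))]$ directly from the independence of $c(u,v)$ and the indirect shortest-path cost $C'(u,v)$ (using the standard fact that an edge lies on some shortest path iff it is itself a shortest path between its endpoints). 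This is self-contained, avoids the external citation, and your observation that $C'(u,v)$ obeys the same bounds as $C(u,v)$ is sound: $C'\ge C$ gives the lower bound for free, and the FPP upper bound is insensitive to the removal of one edge. The two approaches are asymptotically equivalent since $-\log(1-p_{ab}F_{ab}(x)) = (1+o(1))\,p_{ab}F_{ab}(x)$ in the relevant range.

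The one soft spot is your claim that the required quantitative tail bound follows ``for free'' from the FPP concentration estimates. The concentration in the proof of Proposition~\ref{prop:weight_hop_count_shortest_paths} is obtained by Chebyshev, with $\Var[C_u]/(\E C_u)^2 = O(1/\log^2 n)$; the exceptional probability is therefore only $O(1/\log^2 n)$ (after shrinking the deviation to get the $1+o(1)$ factor, possibly worse), which is \emph{not} $o(\log n/(n\rho_n))$ when $\rho_n$ is of constant order, so the crude bound $F_{ab}\le 1$ on the exceptional event does not close the upper bound by itself. The fix is a two-scale splitting: on the event $\{x_{\max} < C' \le x_2\}$ with $x_2 = \Theta(\log n/(n\rho_n))$ a larger constant multiple, bound $F_{ab}(C') \le F_{ab}(x_2) = O(\log n/(n\rho_n))$ and multiply by the Chebyshev probability $o(1)$; on $\{C' > x_2\}$ use $F_{ab}\le 1$ together with a polynomially small tail bound, which is exactly what the Chernoff argument of Lemma~\ref{lemma:max_cost} provides (at the price of a worse constant $x_2$, which is harmless there). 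With that splitting your proof is complete; note that the paper's own proof elides the same issue by asserting that the density $c^*_{uv}$ ``tends to zero'' outside the whp interval, so you have in fact identified a point where both arguments need this extra care.
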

We prove Theorem~\ref{thm:mb_keeps_community} in Appendix~\ref{subsection:proof_thm:mb_keeps_community}. Theorem~\ref{thm:mb_keeps_community} shows that the metric backbone maintains the same proportion of intra- and inter-community edges as in the original graph. We illustrate the theorem with two important examples.

\begin{example}
\label{example:planted_partition}
 Consider a weighted version of the planted partition model, where for all $a,b\in[k]$ we have $\pi_a = 1/k$ and 
 \[
 B_{ab} \weq 
\begin{cases}
  p_0 & \text{ if } a=b, \\
  q_0 & \text{ otherwise,}
\end{cases}
 \]
 where $p_0$ and $q_0$ are constant. 
 Assume that $\Lambda = \lambda 1_k 1_k^T$ with $\lambda >0$. Using Remark~\ref{remark:same_rate}, we have $\tau_{\min} = \tau_{\max} = \lambda k^{-1} \left( p_0 + (k-1)q_0 \right)$, and Theorem~\ref{thm:mb_keeps_community} states that 
 \begin{align*}
    p^{\mb} \weq (1+o(1)) \frac{ k p_0 }{ p_0 + (k-1) q_0 } \frac{\log n}{ n } 
    \quad \text{ and } \quad 
    q^{\mb} \weq (1+o(1)) \frac{ k q_0 }{ p_0 + (k-1) q_0 } \frac{\log n}{ n }. 
 \end{align*}
 In particular, 
 $
    \frac{p^{\mb}}{q^{\mb}} \weq (1+o(1)) \frac{p_0}{q_0}.  
 $
\end{example}

\begin{example}
\label{example:generalSBM}
 Consider a stochastic block model with edge probabilities $p_{ab} = B_{ab} \rho_n$ such that the vertices of different communities have the same expected degree $\bd$. Note that $\bd = \Theta(n \rho_n)$. If $\Lambda = \lambda 1_k 1_k^T$, then 
 \[
 p_{ab}^{\mb} \weq (1+o(1)) B_{ab} \frac{ n \rho_n }{ \bd } \frac{\log n}{n}.
 \]
\end{example}

\subsection{Recovering Communities from the Metric Backbone}
\label{subsection:recovering_communities_from_backbone}

In this section, we prove that a spectral algorithm on the (weighted) adjacency matrix of the metric backbone of a wSBM asymptotically recovers the clusters whp. Given an estimate $\hz \in [k]^n$ of the clusters $z \in [k]^n$, we define the loss as
\begin{align*}
 \loss( z, \hz) \weq \frac1n \inf_{\sigma \in \Sym(k)} \ham\left( z, \sigma \circ \hz \right),
\end{align*}
where $\ham$ denotes the Hamming distance and $\Sym(k)$ is the set of all permutations of $[k]$. 

\begin{algorithm}[!ht]
\caption{Spectral Clustering on the weighted adjacency matrix of the metric backbone}
\label{algo:sc_local_improvement}
\KwInput{ Graph $G$, number of clusters $k$ }
\KwOutput{ Predicted community memberships $\hz \in [k]^n$ }
%\KwProcessStep{ \\
%Compute the metric backbone $G^{\mb}$ of $G$

Denote $W^{\mb} \in \R_+^{n \times n}$ the weighted adjacency matrix of the metric backbone $G^{\mb}$ of $G$

Let $W^{\mb} = \sum_{i=1}^n \sigma_i u_i u_i^T$ be an eigen-decomposition of $W^{\mb}$, with eigenvalues ordered in decreasing absolute value ($|\sigma_1| \ge \cdots \ge |\sigma_n|$) and eigenvectors $u_1, \cdots, u_n \in \R^n$
  
Denote $U = [u_1, \cdots, u_k] \in \R^{n \times k}$ and $\Sigma = \diag(\sigma_1, \cdots, \sigma_k)$ 
  
Let $\hz \in [k]^n$ be a $(1+\epsilon)$-approximate solution of $k$-means performed on the rows of $U \in \R^{n \times k}$ 
\end{algorithm}

The following theorem states that, as long as the matrix $T$ defined in~\eqref{eq:def_operator} is invertible, the loss of spectral clustering applied on the metric backbone asymptotically vanishes whp. 

%For the simplicity of the Theorem's exposition, we assume that $\tau_{\max} = \tau_{\min}$, and denote by $\mu$ the smallest eigenvalue of the matrix $T$ defined in~\eqref{}.  

\begin{theorem}
\label{thm:proof_spectralClustering_consistency}
 Let $(z,G) \wsim \wSBM(n, \pi, p, F)$ and suppose that Assumptions~\ref{assumption:wSBM_scaling} and~\ref{assumption:regularity_fs} hold. Let~$\mu$ be the minimal absolute eigenvalue of the matrix $T$ defined in~\eqref{eq:def_operator}. Moreover, assume that $\tau_{\max} = \tau_{\min}$ and $\mu \ne 0$. Then, the output $\hz$ of Algorithm~\ref{algo:sc_local_improvement} on $G$ verifies whp 
 \begin{align*}
   \loss( z, \hz) \weq O \left( \frac{ 1 }{ \mu^2 \,  \log n } \right). 
 \end{align*} 
\end{theorem}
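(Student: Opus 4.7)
The strategy is the standard spectral-clustering consistency pipeline applied to the backbone: (i) identify the expectation of the observed matrix, (ii) establish concentration of the observed matrix around that expectation, (iii) apply the Davis--Kahan theorem to control the distance between the top-$k$ eigenspaces, and (iv) convert this bound, via a $k$-means approximation guarantee, into a misclustering rate. For concreteness I describe the plan with $A^{\mb}_{uv} = \1\{(u,v) \in E^{\mb}\}$; the weighted case is analogous because the conditional mean cost of a metric edge depends only on the block labels $(z_u, z_v)$ and therefore preserves the block structure of the expectation.

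\textbf{Population matrix and its spectrum.} Conditionally on $z$, the matrix $\bar{A} := \mathbb{E}[A^{\mb} \mid z]$ has entries $\bar{A}_{uv} = p^{\mb}_{z_u z_v}$. Combining Theorem~\ref{thm:mb_keeps_community} with the hypothesis $\tau_{\min} = \tau_{\max} =: \tau$ gives $p^{\mb}_{ab} = (1+o(1)) \frac{\log n}{\tau\, n}\,\lambda_{ab} B_{ab}$. Letting $\Phi \in \{0,1\}^{n \times k}$ be the community membership matrix, this yields
\[
 \bar{A} \weq (1+o(1))\,\tfrac{\log n}{\tau\, n}\,\Phi (\Lambda \odot B) \Phi^{T},
\]
a rank-$k$ matrix. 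Since $n^{-1} \Phi^T \Phi \to \diag(\pi)$ whp, the nonzero eigenvalues of $\bar{A}$ equal, up to a $(1+o(1))$ factor, $\frac{\log n}{\tau}$ times those of $T = [\Lambda \odot B]\diag(\pi)$. In particular, the smallest nonzero absolute eigenvalue of $\bar{A}$ is of order $\mu \log n / \tau$, and this is also the eigengap separating the top-$k$ eigenvalues from the zero ones.

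\textbf{Concentration.} The crucial estimate is $\|A^{\mb} - \bar{A}\| = O(\sqrt{\log n})$ whp. This is the step at which the proof truly departs from the classical SBM analysis, because the entries of $A^{\mb}$ are \emph{not} mutually independent: whether a given edge lies in the backbone is determined by the all-pairs shortest paths through the rest of the graph. The plan is to couple $A^{\mb}$ to an auxiliary matrix $\tilde{A}$ with independent $\mathrm{Bern}(p^{\mb}_{z_u z_v})$ entries, apply a standard sparse-matrix concentration bound such as Bandeira--van Handel or Feige--Ofek to obtain $\|\tilde{A} - \bar{A}\| = O(\sqrt{\log n})$, and then bound the spectral norm of the symmetric difference $\|A^{\mb} - \tilde{A}\|$ by the same order. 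The coupling can be built on the first-passage-percolation exploration used in the proof of Proposition~\ref{prop:weight_hop_count_shortest_paths}, exploiting the fact that the backbone status of a given edge is typically determined by a neighborhood of subpolynomial size and therefore interacts only weakly with distant edges.

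\textbf{Davis--Kahan and conclusion.} Let $U, \bar{U} \in \R^{n \times k}$ be the matrices of top-$k$ eigenvectors of $A^{\mb}$ and $\bar{A}$. A standard form of the Davis--Kahan theorem produces an orthogonal $Q \in \R^{k \times k}$ with
\[
 \|U - \bar{U} Q\|_F \weq O\!\left( \frac{\sqrt{k}\,\|A^{\mb} - \bar{A}\|}{\mu \log n / \tau} \right) \weq O\!\left( \frac{1}{\mu \sqrt{\log n}} \right).
\]
The rows of $\bar{U}$ take exactly $k$ distinct values, one per community, separated by $\Omega(1/\sqrt{n})$. Plugging this into the standard misclassification lemma for $(1+\epsilon)$-approximate $k$-means (in the spirit of Lei--Rinaldo) yields $\loss(z,\hz) = O\bigl(n\,\|U - \bar{U}Q\|_F^2 / n\bigr) = O(1/(\mu^2 \log n))$, matching the stated bound. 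The main obstacle is clearly the concentration step: classical matrix concentration tools require independent entries, and the dependencies introduced by the backbone construction force one to either build an explicit coupling to an independent-edge model or to develop a dedicated combinatorial tail bound that exploits the FPP structure of Section~\ref{section:proof}.
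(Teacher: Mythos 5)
Your pipeline (population spectrum, concentration, Davis--Kahan, Lei--Rinaldo) is the same as the paper's, and your arithmetic at the ends matches: the population matrix is rank $k$ with smallest nonzero eigenvalue of order $\mu \log n$, and a spectral-norm deviation of order $\sqrt{\log n}$ does yield $\loss = O(1/(\mu^2\log n))$. But the step you yourself flag as "the main obstacle" -- proving $\|A^{\mb}-\bar A\| = O(\sqrt{\log n})$ despite the dependence among the indicators $\1\{(u,v)\in E^{\mb}\}$ -- is left as a declaration of intent, and the specific plan you propose (couple $A^{\mb}$ to an independent Bernoulli matrix $\tilde A$ and bound $\|A^{\mb}-\tilde A\|$ "by the same order") is not something you give any mechanism for. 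A coupling that matches marginals does not control the spectral norm of the difference: $A^{\mb}-\tilde A$ could a priori have $\Theta(n\log n)$ discrepant entries arranged adversarially, and "backbone status is determined by a subpolynomial neighborhood" does not by itself localize those discrepancies. As written, this is a genuine gap.

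The paper resolves exactly this difficulty by a different and concrete route, and it does so for the \emph{weighted} adjacency matrix $W^{\mb}$ (which is what Algorithm~\ref{algo:sc_local_improvement} actually uses -- your "the weighted case is analogous" glosses over the fact that the whole concentration argument lives in the weighted world). It introduces the threshold graph $G^{\tau}$ with $\tau = \frac{4}{\tau_{\min}}\frac{\log n}{n\rho_n}$, chosen via a uniform bound on all shortest-path costs (Lemma~\ref{lemma:max_cost}) so that $E^{\mb}\subseteq E^{\tau}$ whp, and writes $\|W^{\mb}-\E W^{\mb}\| \le \|W^{\tau}-\E W^{\tau}\| + \|(W^{\tau}-W^{\mb}) - \E[W^{\tau}-W^{\mb}]\|$. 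The first term has genuinely independent bounded entries (thresholding is entrywise on the iid costs), so standard sparse-matrix concentration applies. The second term is supported on $E^{\tau}\setminus E^{\mb}$, a random graph whose maximum degree is $O(\log n)$ whp (Lemma~\ref{lemma:degree_thresholdGraph}) and whose entries are bounded by $\tau$; Bandeira--van Handel for matrices with sparse support then gives $O(\tau\sqrt{\log n})$. In other words, the dependence of the backbone on the global shortest-path structure is never confronted head-on: it is absorbed into a correction term that is small purely because of its sparsity and bounded entries. If you want to salvage your route, this sandwiching by the threshold graph is the missing idea; without it (or an equivalently explicit argument), your concentration step does not go through.
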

We prove Theorem~\ref{thm:proof_spectralClustering_consistency} in Appendix~\ref{appendix:proof_thm:proof_spectralClustering_consistency}. 
We saw in Example~\ref{example:planted_partition} and~\ref{example:generalSBM} that the condition $\tau_{\max} = \tau_{\min}$ is verified in several important settings. The additional assumption $\mu \ne 0$ (equivalent to $T$ being invertible) also often holds: in the planted partition model of Example~\ref{example:planted_partition}, $T$ is invertible if $p_0 \ne q_0$.

\section{Comparison with Previous Work}
\label{sec:discussion}

The metric backbone has been introduced under different names, such as the \textit{essential subgraph}~\cite{mcgeoch1995all}, the \textit{transport overlay network}~\cite{wang2008betweenness}, or simply the \textit{union of shortest path trees}~\cite{VanMieghem2009ObservablePart}. 
In this section, we discuss our contribution with respect to closely related earlier works.

\subsection{Computing the Metric Backbone}

Computing the metric backbone requires solving the All Pairs Shortest Path (APSP) problem, a classic and fundamental problem in computer science. 
Simply running Dijkstra's algorithm on each vertex of the graph solves the APSP in $O(n m + n^2 \log n)$ worst-case time, where $m = |E|$ is the number of edges in the original graph~\cite{dijkstra1959}, whereas \cite{mcgeoch1995all} proposed an algorithm running in $O( n m' + n^2 \log n)$ worst-case time, where $m'$ is the number of edges in the metric backbone. 
APSP has also been studied in weighted random graphs in which the weights are independent and identically distributed~\cite{hassin1985shortest,frieze1985shortest}. 
In particular, the APSP can be solved in $O(n^2)$ time with high probability on complete weighted graphs whose weights are drawn independently and uniformly at random from $[0,1]$~\cite{peres2013all}. 

However, practical implementations of APSP can achieve faster results. For example, in~\cite{kalavri2016shortest}, an empirical observation regarding the low hop count of shortest paths is leveraged to compute the metric backbone efficiently. Although exact time complexity is not provided, the implementation scales well for massive graphs, such as a Facebook graph with 190 million nodes and 49.9 billion edges, and the empirical running time appears to be linear with the number of edges~\cite[Table 1 and Figure 5]{kalavri2016shortest}. Additionally, our simulations reveal that some popular clustering algorithms such as spectral and subspace clustering have higher running times than computing the metric backbone.

\subsection{First-Passage Percolation in Random Graphs}

To study the metric backbone theoretically, we need to understand the structure of the shortest path between vertices in a random graph. This classical and fundamental topic of probability theory is known as first-passage percolation (FPP)~\cite{hammersley1965first}. The paper~\cite{janson1999one} originally studied the weights and hop counts of shortest paths on the complete graph with iid weights. This was later generalized to \Erdos-\Renyi graphs and configuration models (see, for example,~\cite{leskela2019first,daly2023first} and references therein). 

Closer to our setting, \cite{kolossvary2015first} studied the FPP on inhomogeneous random graphs. Indeed, SBMs are a particular case of inhomogeneous random graphs, for which the set of vertex types is discrete (we refer to~\cite{bollobas2007phase} for general statements on inhomogeneous random graphs). Assuming that the edge weights are independent and $\Exp(\lambda)$-distributed, \cite{kolossvary2015first} established a central limit theorem of the weight and hop count of the shortest path between two vertices chosen uniformly at random among all vertices. Using the notation of Section~\ref{sec:model_theoretical_results}, this result implies that $\frac{n \rho_n}{\log n} C(u,v)$ converges in probability to $\tilde{\tau}^{-1}$, where $\tilde{\tau}$ is the Perron-Frobenius eigenvalue of $\lambda B \diag (\pi)$. 

The novelty of our work is two-fold. First, we allow different cost distributions for each pair of communities, whereas all previous works in FPP on random graphs assume that the costs are sampled from a single distribution. Furthermore, we examine the cost of a path between two vertices, $u$ and $v$, chosen \textit{uniformly at random among vertices in block $a$ and in block $b$}, respectively. 
This differs from previous work (and, in particular,~\cite{kolossvary2015first}) in which vertices $u$ and $v$ are \textit{selected uniformly at random among all vertices}. As a result, even for a single cost distribution, Proposition~\ref{prop:weight_hop_count_shortest_paths} cannot be obtained directly from~\cite[Theorem~1.2]{kolossvary2015first}. This difference is key, as this proposition is required to establish Theorem~\ref{thm:mb_keeps_community}. 

The closest result to Theorem~\ref{thm:mb_keeps_community} appearing in the literature is \cite[Corollary~1]{VanMieghem2009ObservablePart}; it establishes a formula for the probability $p_{uv}^{\mb}$ that an edge between two vertices $u$ and $v$ exists in the metric backbone of a random graph whose edge costs are iid. This previous work does not focus on community structure, so the costs are sampled from a single distribution. More importantly, the expression of $p_{uv}^{\mb}$ given by~\cite[Theorem~2 and Corollary~1]{VanMieghem2009ObservablePart} is mainly of theoretical interest (and we use it in the proof of  Theorem~\ref{thm:mb_keeps_community}). Indeed, understanding the asymptotic behavior of $p_{uv}^{\mb}$ requires a complete analysis of the cost $C(u,v)$ of the shortest path between $u$ and $v$. \cite{VanMieghem2009ObservablePart} propose such an analysis only in one simple scenario (namely, a complete graph with iid exponentially distributed costs).

\section{Experimental Results}
\label{sec:numerics}

In this section, we test whether the metric backbone preserves a graph community structure in various real networks for which a ground truth community structure is known (see Table~\ref{tab:data} in Appendix~\ref{appendix_additional_sec:numerics} for an overview). 

As in many weighted networks, such as social networks, the edge weights represent a measure (e.g., frequency) of interaction between two entities over time, we need to preprocess these proximity graphs into distance graphs. More precisely, given the original (weighted or unweighted) graph $G=(V,E,s)$, where the weights measure the similarities between pairs of vertices, we define the proximity $p(u,v)$ of vertices $u$ and $v$ as the \new{weighted Jaccard similarity} between the neighborhoods of $u$ and $v$, \textit{i.e.,} 
\begin{align*}
 p(u,v) \weq \frac{ \sum_{ w \in \Nei(u) \cap \Nei(v)} \min \{ s(u,w), s(v,w) \} }{ \sum_{u \in \Nei(u) \cup \Nei(v)} \max \{ s(u,w), s(v,w) \} }, 
\end{align*}
where $\Nei(u) = \{ w \in V \colon (u,w) \in E\}$ denotes the neighborhood of $u$, \textit{i.e.,} the vertices connected to $u$ by an edge. 
If $G$ is unweighted ($s(e) = 1$ for all $e \in E$), we simply recover the Jaccard index $\frac{  \left| \Nei(u) \cap \Nei(v) \right| }{ \left| \Nei(u) \cup \Nei(v) \right| }$. 
We note that other choices for normalization could have been made, such as the Adamic-Adar index~\cite{adamic2003friends}. We refer to~\cite{chen2009similarity} for an in-depth discussion of similarity and distance indices. 

Once the proximity graph $G=(V,E,p)$ has been computed, we construct the distance graph $D = (V,E,c)$ where $c\colon E \to \R_+$ is such that
\begin{align*}
 \forall e \in E \ \colon \ c(e) \weq \frac{1}{p(e)} - 1. 
\end{align*}
This is the simplest and most commonly used method for converting a similarity to a distance~\cite{simas2015distance}. 

We then compute the set $E^{\mb}$ of metric edges of the distance graph $D$, and let $G^{\mb} = (V,E^{\mb},p^{\mb})$ where $p^{\mb} \colon E^{\mb} \to [0,1]; e \mapsto p(e)$ is the restriction of $p$ to $E^{\mb}$. 

We will also consider the two following sparsifications (with the corresponding  restrictions of $c$ to the sparsified edge sets) to compare the resulting community structure: 
\begin{itemize}
 \item the \new{threshold graph} $G^{\theta}=(V,E^{\theta},p^{\theta})$, where an edge $e \in E$ is kept in $E^{\theta}$ iff $p(e) \ge \theta$; %We chose $\theta$ so that the number of edges kept is the same as in the metric backbone.
 \item the graph $G^{\sss}=(V,E^{\sss},p^{\sss})$ obtained by \new{spectral sparsification} on $G$. We use the Spielman-Srivastava sparsification~\cite{Spielman2011spectral}, implemented in the PyGSP package~\cite{michael_defferrard_2017_1003158}. 
\end{itemize}
 For both threshold and spectral sparsification, we tune the hyperparameters so that the number of edges kept is the same as in the metric backbone: $|E^{\mb}| = |E^{\theta}| = |E^{\sss}|$. We provide in Table~\ref{tab:stat_remaining_edges} (in Appendix) some statistics on the percentage of edges remaining in the sparsified graphs. 

For each proximity graph $G$ and its three sparsifications $G^{\mb}$, $G^{\theta}$, and $G^{\sss}$, we run a graph clustering algorithm to obtain the respective predicted clusters $\hz$, $\hz^{\mb}$, $\hz^{\theta}$ and $\hz^{\sss}$. 
We show, in Figure~\ref{fig:performance_sparsifiers_real_networks}, the \new{adjusted Rand index}\footnote{The Rand index (RI) measures similarity between two clusterings based on pair counting. This index is 'adjusted for chance' because a random clustering could lead to a large RI. The ARI is a modification of the RI such that two random clusterings (resp., two identical clusterings) give an ARI of 0 (resp., of 1)~\cite{hubert1985comparing}.} (ARI) obtained between the ground truth communities and the predicted clusters for three widely used graph clustering algorithms: \new{Bayesian} algorithm~\cite{peixoto2018nonparametric}, \new{Leiden} algorithm~\cite{traag2019louvain} and \new{spectral clustering}~\cite{von2007tutorial}. 
We use the \textit{graph-tool} implementation for the \textit{Bayesian} algorithm, with an exponential prior for the weight distributions. The \textit{Leiden} algorithm is implemented at \url{https://github.com/vtraag/leidenalg}. For \textit{spectral clustering}, we assume that the algorithm knows the correct number of clusters in advance, and we use the implementation from \textit{scikit-learn}.\footnote{We also note that the threshold graph $G^{\theta}$ is often not connected. Hence, we ignored all components comprising less than 5 nodes before running spectral clustering.} 

\begin{figure}[!ht]
 \centering
 \begin{subfigure}[b]{0.3\textwidth}
     \includegraphics[width=\textwidth]{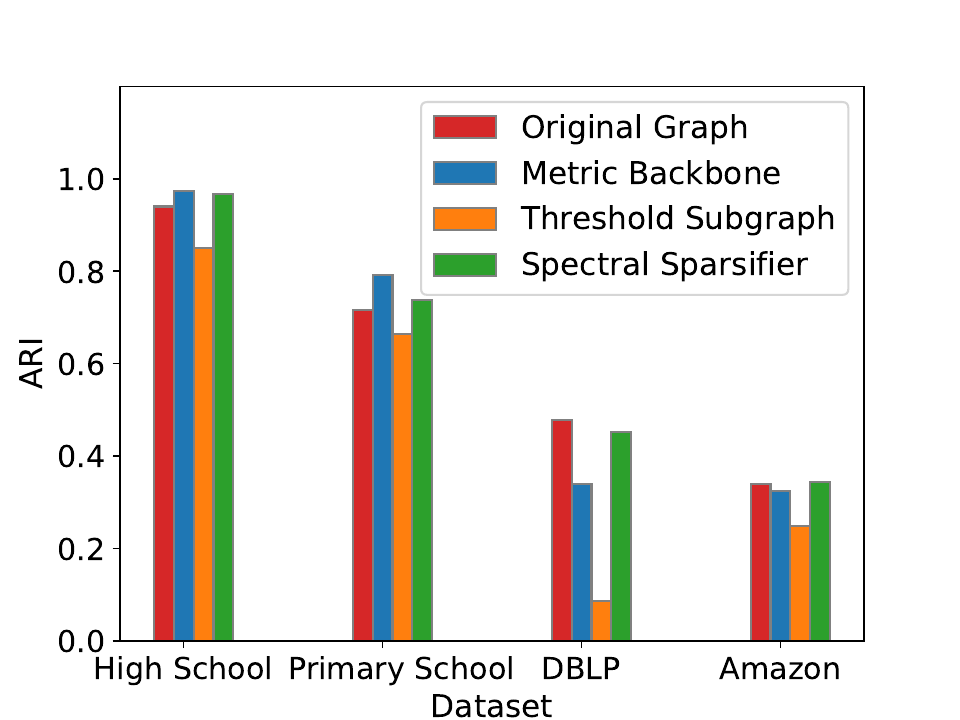}
     \caption{\textit{Bayesian MCMC}}
 \end{subfigure} \hfil
\begin{subfigure}[b]{0.3\textwidth}
     \includegraphics[width=\textwidth]{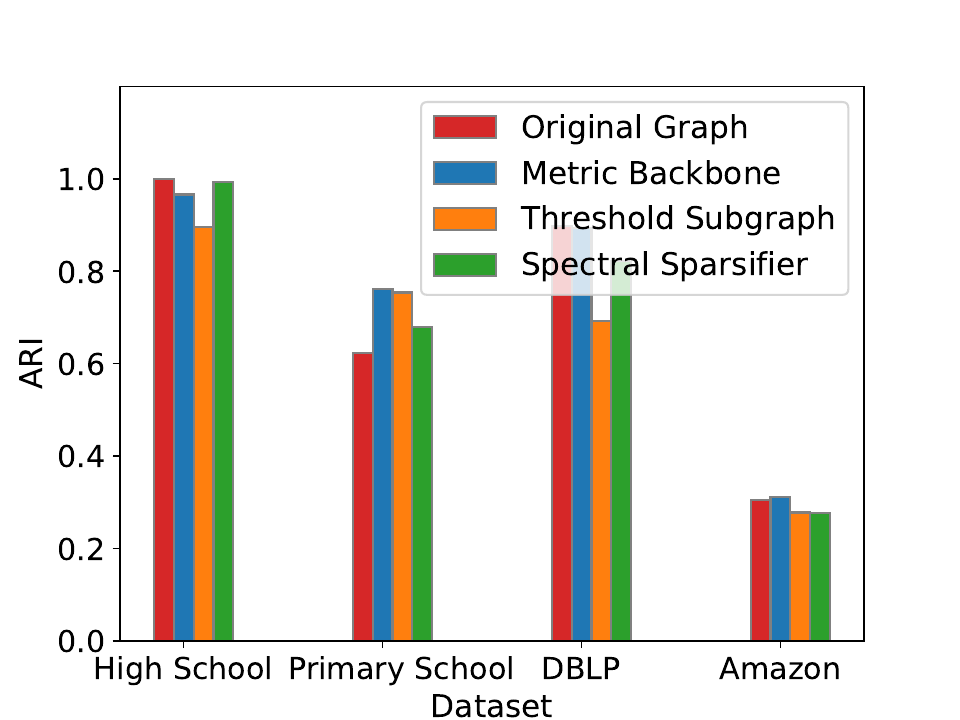}
     \caption{\textit{Leiden}}
 \end{subfigure} \hfil
\begin{subfigure}[b]{0.3\textwidth}
     \includegraphics[width=\textwidth]{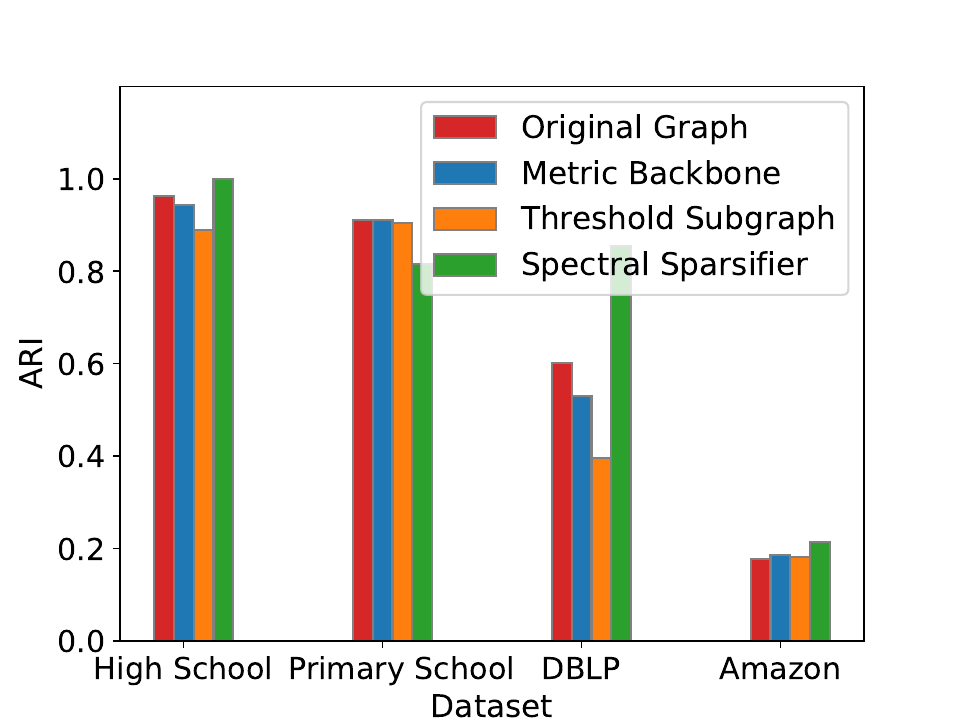}
     \caption{\textit{Spectral Clustering}}
 \end{subfigure}
 \caption{
 Effect of sparsification on the performance of clustering algorithms on various data sets. We observe that the metric backbone and the spectral sparsification retain equally well the community structure across all data sets and for all clustering algorithms tested. Thresholding often yields several disconnected components of small sizes, impacting the performance of clustering algorithms on $G^{\theta}$.}
\label{fig:performance_sparsifiers_real_networks}
\end{figure}

We highlight the difference between the metric backbone and the threshold subgraph of the \textit{Primary school} data set in Figure~\ref{fig:primarySchool_backboneVsThreshold}. We observe, in Figure~\ref{fig:primarySchool_backbone}, that the edges in red (which are present in the backbone but not in the threshold graph) are mostly inter-community edges. On the contrary, in Figure~\ref{fig:primarySchool_threshold}, the blue edges (which are present in the threshold graph but not in the backbone) are mostly intra-community edges. Despite this difference, the metric backbone retains the information about the community structure, as shown in Figure~\ref{fig:performance_sparsifiers_real_networks}. 

\begin{figure}[!ht]
 \centering
\begin{subfigure}[b]{0.48\textwidth}
     \includegraphics[width=\textwidth]{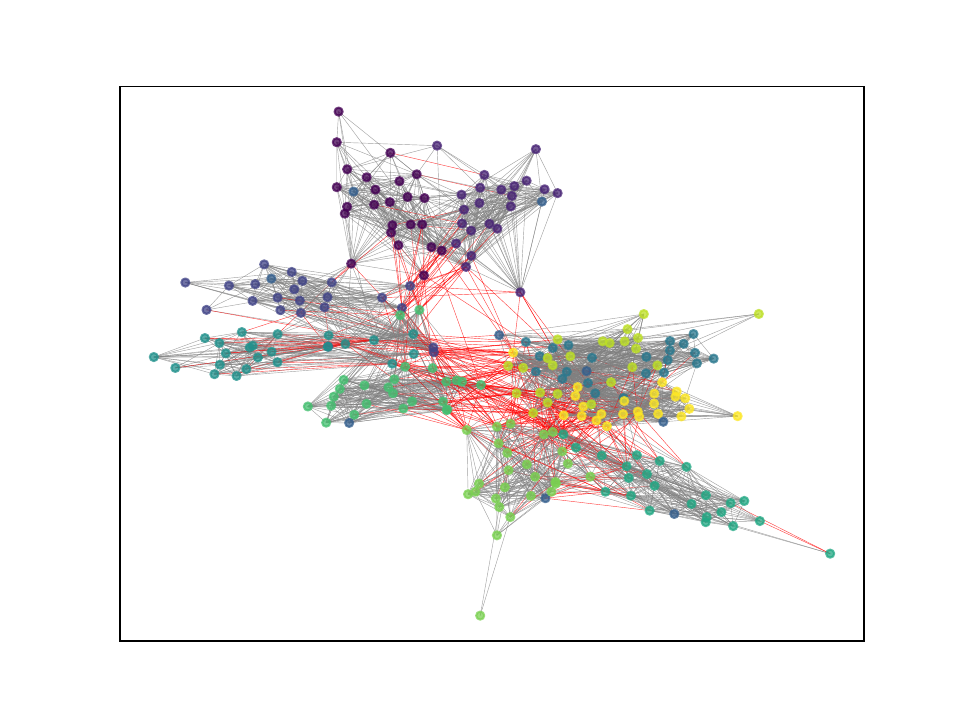}
     \caption{\textit{Metric Backbone}}
     \label{fig:primarySchool_backbone}
 \end{subfigure} 
 \hfil
 \begin{subfigure}[b]{0.48\textwidth}
     \includegraphics[width=\textwidth]{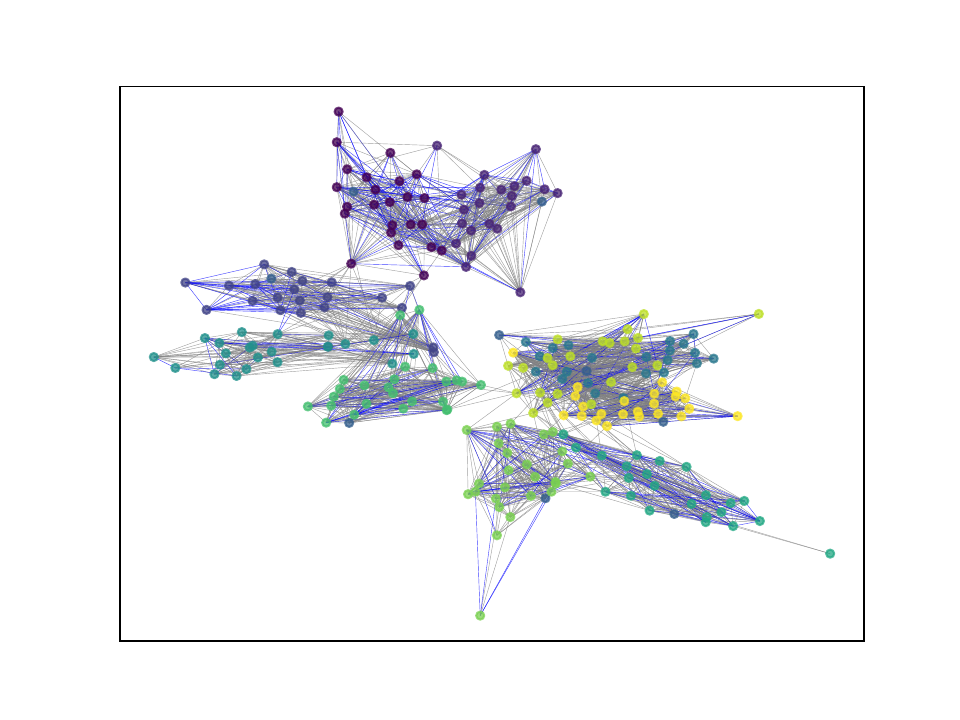}
     \caption{\textit{Threshold Subgraph}}
     \label{fig:primarySchool_threshold}
 \end{subfigure}
 \caption{Graphs obtained from \textit{Primary school} data set, after taking the metric backbone (Figure~\ref{fig:primarySchool_backbone}) and after thresholding (Figure~\ref{fig:primarySchool_threshold}), are drawn using the same layout. Vertex colors represent the true clusters. Edges present in the metric backbone but not in the threshold graph are highlighted in \textcolor{red}{red}. Edges present in the threshold graph, but not in the metric backbone, are highlighted in \textcolor{blue}{blue}. }
\label{fig:primarySchool_backboneVsThreshold}
\end{figure}

\section{Application to Graph Construction Using \texorpdfstring{$q$}{q}-NN}
\label{sec:applications}

%\subsection{Graph Construction Using \texorpdfstring{$q$}{q}-NN}
In a large number of machine-learning tasks, data does not originally come from a graph structure, but instead from a cloud of points $x_1, \cdots, x_n$ where each $x_u$ belongs to a metric space (say $\R^d$ for simplicity). The goal of \new{graph construction} is to discover a proximity graph $G =([n],E,p$) from the original data points. 
Graph construction is commonly done through the $q$-nearest neighbors ($q$-NN). Given a similarity function $\similarity \colon \R^d \times \R^d \to \R_+$ that quantifies the resemblance between two data points, we define the set $\cN(u,q)$ of $q$-nearest neighbors of $u \in [n]$. More precisely, $\cN(u,q)$ is the subset of $[n] \backslash \{u\}$ with cardinality $q$ such that for all $v \in \cN(u,q)$ and for all $w \not\in \cN(u,q)$ we have
$$
\similarity(x_u, x_v) \wge \similarity(x_u, x_w).
$$

The edge set $E$ is composed of all pairs of vertices $(u,v)$ such that $u \in \cN(v,q)$ or $v \in \cN(u,q)$,\footnote{In other words, an edge $(u,v)$ is present if at least one of its two endpoints is in the $q$-nearest neighborhood of the other. %Another possibility is to require that the two endpoints should be in the $q$-nearest neighborhood of each other.
} and the proximity $p_{uv}$ associated with the edge $(u,v)$ is $( s_{uv} + s_{vu} ) / 2$, where
\begin{align*}
 s_{uv} \weq 
 \begin{cases}
  \similarity(x_u, x_v) & \text{ if } v \in \cN(u,q), \\
  0 & \text{ otherwise.}
 \end{cases}
\end{align*} 
In the following, we use the Gaussian kernel similarity $\similarity(x_u, x_v) = \exp\left( - \frac{ \| x_u - x_v \|^2}{ d^2_K(x_u)} \right)$, where $d_K(x_i)$ is the Euclidean distance between $x_u$ and its $q$-NN. In Appendix~\ref{appendix_additional_sec:application}, we provide results using another similarity measure.
%We refer to the graph obtained by this procedure as $G_q = ([n], E,p)$. 

%\subsection{Application to Image Classification}\label{sec:digit_recognition}

We investigate the effect of graph sparsification on graphs built by $q$-NN.  
We sample $n = 10000$ images from~MNIST~\cite{mnist} and FashionMNIST~\cite{xiao2017fashion}, and use the full HAR \cite{anguita2013public} dataset ($n=10299$). 
%; they are datasets composed of images of handwritten digits, fashion products, respectively. 
%
%
%We then compare the performance of two graph clustering algorithms on each of these three graphs, namely \new{threshold-based subspace clustering} (TSC) and \new{Poisson learning}, proposed in~\cite{heckel2015robust} and~\cite{calder2020poisson}. Both algorithms have publicly available implementations\footnote{At~\url{https://www.mins.ee.ethz.ch/research/downloads/sc.html} and~\url{https://github.com/jwcalder/GraphLearning}, respectively.}. \textit{TSC} is a subspace clustering algorithm and was shown to succeed under very general conditions on the high-dimensional data set to be clustered. \textit{Poisson learning} is a semi-supervised graph clustering algorithm and was recently shown to outperform other graph-based semi-supervised algorithms~\cite{calder2020poisson}. 
%
%
From the sampled data points, we build the $q$-NN graph~$G_q$, as well as its \textit{metric backbone} $G^{\mb}_q$ and its \textit{spectral sparsification}~$G^{\sss}_q$. We then compare the performance of two clustering algorithms, \textit{spectral clustering} and \textit{Poisson learning}. \textit{Poisson learning} is a semi-supervised graph clustering algorithm and was recently shown to outperform other graph-based semi-supervised algorithms~\cite{calder2020poisson}. Results of spectral clustering using another similarity measure are presented in the Appendix. 

We compare the ARI of clustering algorithms on $q$-NN graphs and its sparsifications (metric backbone and spectral sparsification) for various choices of the number of nearest neighbors $q$. 
The results are shown in Figure~\ref{fig:spectral_clustering_results} (for spectral clustering) and Figure~\ref{fig:poissonLearning_results} (for Poisson-learning). 
Unlike the spectral sparsifier, the metric backbone retains a high ARI across all choices of $q$. Interestingly, the performance on the original graph often decreases with $q$, which is a  hyperparameter of the graph construction step. Applying a clustering algorithm on the metric backbone comes with the two advantages of significantly reducing the number of edges in the graph and of making its performance robust against the choice of the hyperparameter $q$. Indeed, a larger value of $q$ creates more edges but with a higher distance (cost), which are therefore more likely to be non-metric. %and hence do appear in the original graph but not in its metric backbone. 

%Finally, we show in Figure~\ref{fig:additionalExpe_qNN} additional experiments comparing the ARI of Poisson learning on $q$-NN graphs and its sparsifications (metric backbone and spectral sparsification) for various choices of the number of nearest neighbors $q$. We observe that, unlike the spectral sparsifier, the metric backbone retains a high ARI across all choices of $q$. Interestingly, the performance on the original graph also decreases with $q$. Because $q$ is a hyperparameter of the graph construction step, applying Poisson learning on the metric backbone instead of the original $q$-NN graph eliminates the variability in the choice of the hyperparameter. 

\begin{figure}[!ht]
 \centering
 \begin{subfigure}[b]{0.32\textwidth}
     \includegraphics[width=1\textwidth]{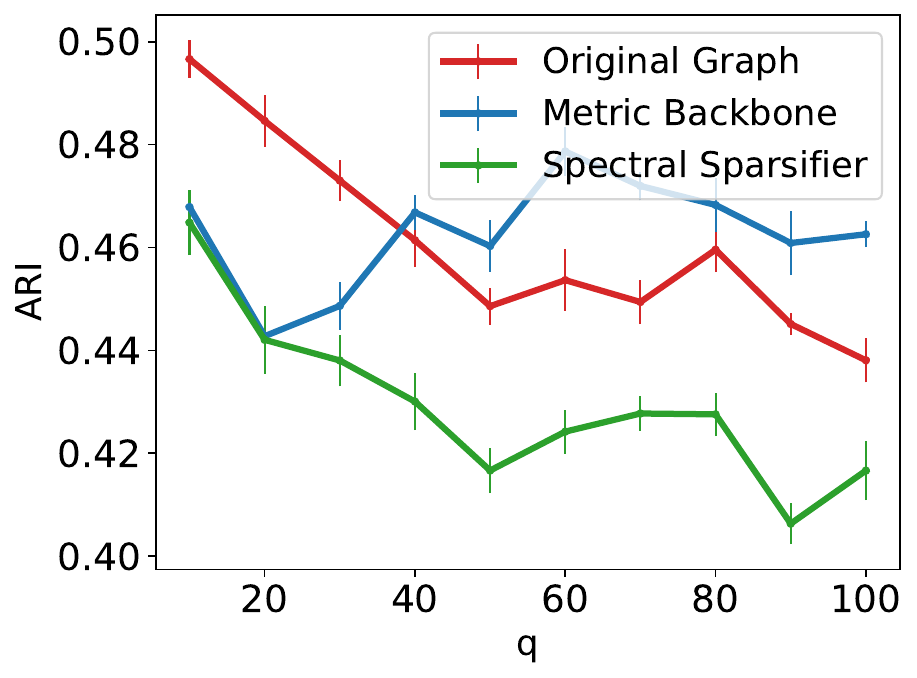}
     \caption{\textit{MNIST}}
     \label{fig:spectral_clustering_MNIST}
 \end{subfigure} \hfil
\begin{subfigure}[b]{0.32\textwidth}
     \includegraphics[width=1\textwidth]{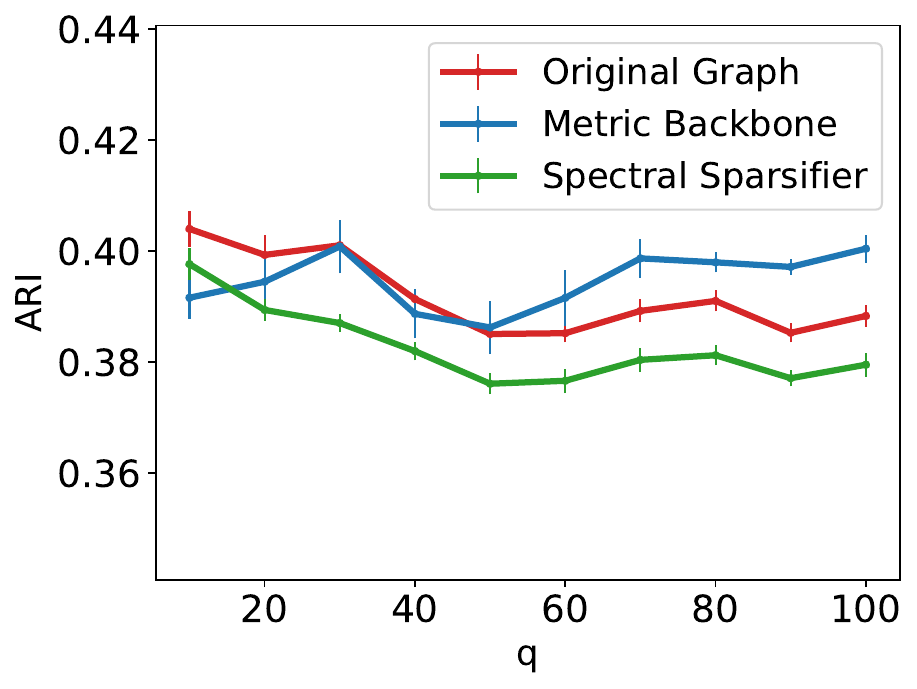}
     \caption{\textit{Fashion MNIST}}
     \label{fig:spectral_clustering_FashionMNIST}
 \end{subfigure} \hfil
   \begin{subfigure}[b]{0.32\textwidth}
     \includegraphics[width=1\textwidth]{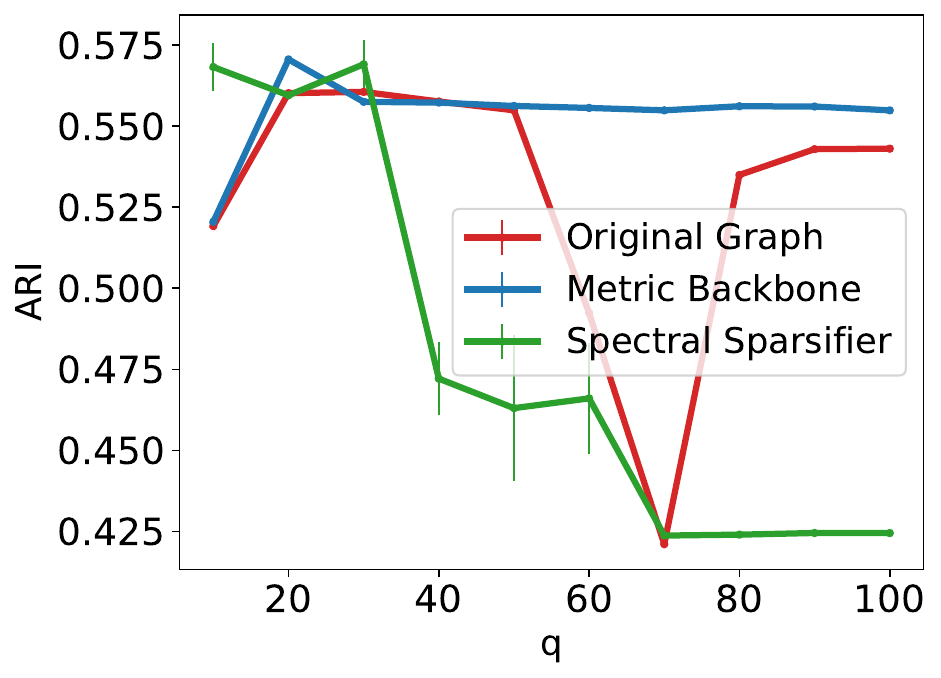}
     \caption{\textit{HAR}}
 \end{subfigure} 
 \caption{Performance of \textit{spectral clustering} on subsets of MNIST, FashionMNIST datasets, and on the HAR dataset. The ARI is averaged over 10 trials; error bars show the standard error of the mean. 
%\textcolor{blue}{We use the Gaussian kernel similarity $\similarity(x_u, x_v) = \exp\left( - \frac{ \| x_u - x_v \|^2}{ d_K(x_u)^2} \right)$, where $d_K(x_i)$ is the Euclidean distance between $x_u$ and its $q$-NN. }
}
\label{fig:spectral_clustering_results}
\end{figure}

%\subsubsection{Semi-supervised Clustering}

%Figure~\ref{fig:poissonLearning_results} shows the results of \textit{Poisson learning}.  We proceeded similarly in the previous section, by sampling $n$ pictures from $k$ classes ($k=6$ for MNIST and $k=5$ for FashionMNIST). The correct cluster assignment of $10$ pictures per class is revealed (\textit{i.e.,} a total of $50$ pictures). As the default implementation, we used $q=10$ and Gaussian-kernel similarity $\similarity(x_u, x_v) = \exp\left( - \frac{ \| x_u - x_v \|^2}{ d_K(x_u)^2}  \right)$, where $d_K(x_i)$ is the Euclidean distance between $x_u$ and its $q$-NN. %We observe that applying Poisson learning on the metric backbone of the $q$-NN graph results in a similar ARI compared to applying Poisson learning on the $q$-NN graph. 

\begin{figure}[!ht]
 \centering
 \begin{subfigure}[b]{0.32\textwidth}
     \includegraphics[width=\textwidth]{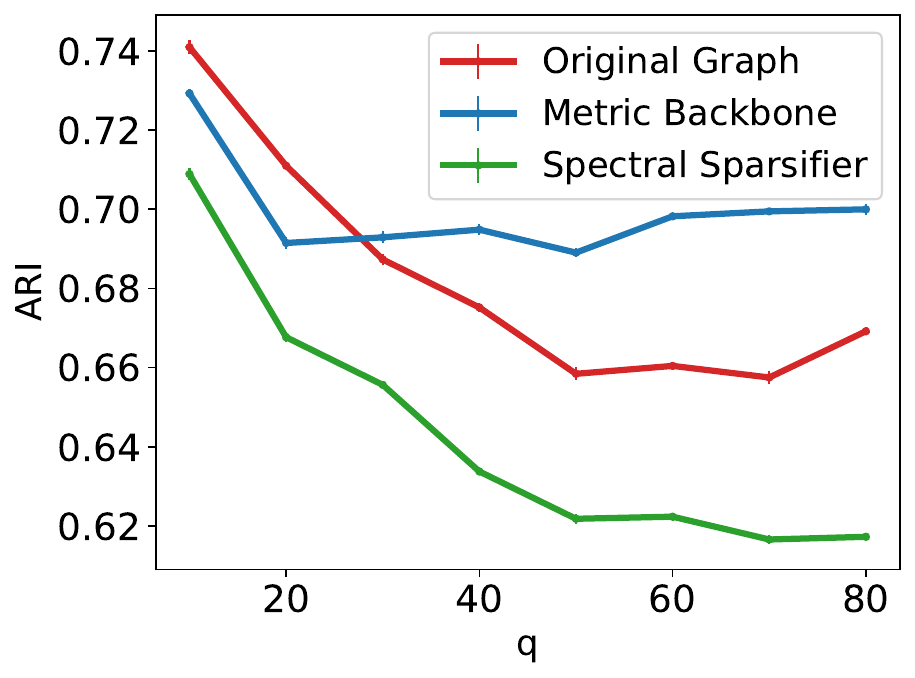}
     \caption{\textit{MNIST}}
     \label{fig:SSL_results_mnist}
 \end{subfigure} \hfil
  \begin{subfigure}[b]{0.32\textwidth}
     \includegraphics[width=\textwidth]{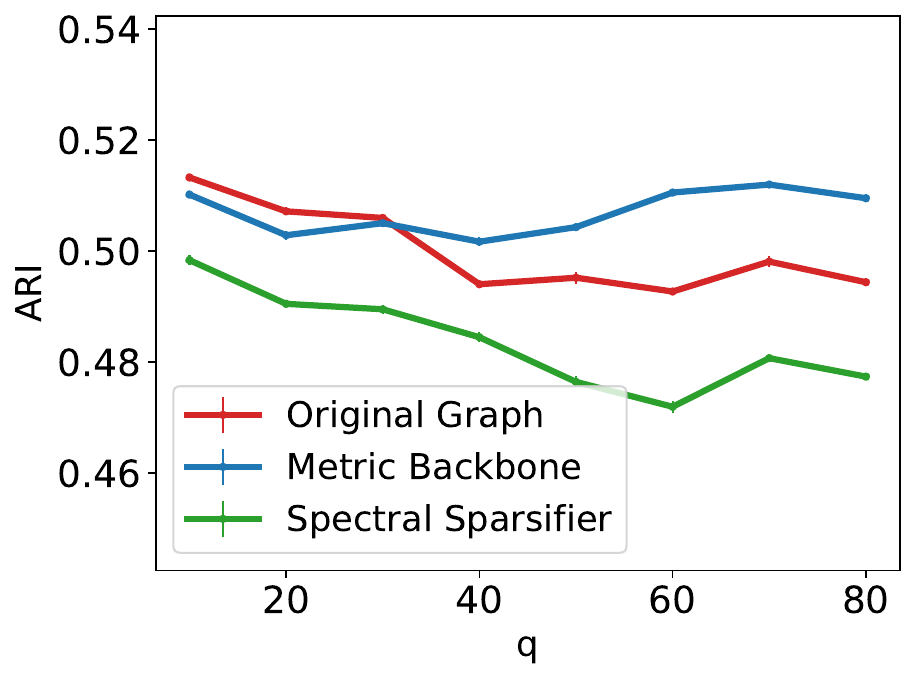} 
     \caption{\textit{FashionMNIST}}
     \label{fig:SSL_results_Fashionmnist}
 \end{subfigure} \hfil 
 \begin{subfigure}[b]{0.32\textwidth}
     \includegraphics[width=\textwidth]{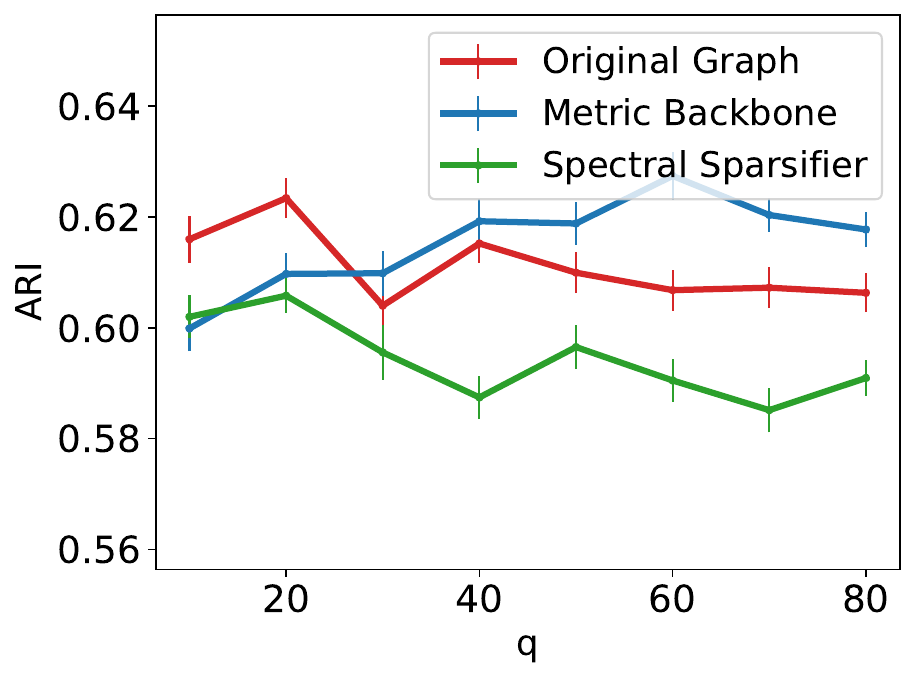}
     \caption{\textit{HAR}}
     \label{fig:SSL_results_HAR}
 \end{subfigure}
 \caption{Performance of \textit{Poisson learning} on subsets of MNIST, FashionMNIST datasets, and the HAR dataset. The ARI is averaged over 100 trials, and error bars show the standard error of the mean. 
 %Figures~\ref{fig:SSL_results_mnist} and~\ref{fig:SSL_results_Fashionmnist} show the ARI and Figures~\ref{fig:SSL_results_mnist_edges} and~\ref{fig:SSL_results_Fashionmnist_edges} show the number of edges remaining in each graph. 
 %\textcolor{blue}{As in~\cite{calder2020poisson}, we use the similarity $\similarity(x_u, x_v) = \exp\left( - \frac{ \| x_u - x_v \|^2}{ d_K(x_u)^2} \right)$, where $d_K(x_i)$ is the Euclidean distance between $x_u$ and its $q$-NN. }
 }
\label{fig:poissonLearning_results}
\end{figure}

Finally, we compare in Table~\ref{tab:resultsFullDatasets} the ARI obtained on the $q$-nearest neighbor graph using $q= 10$ (as it is a common default choice) with the metric backbone graph of a $q$-nearest neighbor graph with $q = \sqrt{n}/2$. Moreover, we compute an approximation of the metric backbone by sampling uniformly at random $2 \log n$ vertices and taking the union of the $2 \log n$ shortest-path trees rooted at each one of them instead of the union of all $n$ shortest-path trees. This produces a graph $\tG^{\mb}_{\sqrt{n}/2}$ whose edge set is a subset of the edges of the true metric backbone $G^{\mb}_{\sqrt{n}/2}$. 
We observe that $\tG^{\mb}_{\sqrt{n}/2}$ retains the community structure albeit being typically twice sparser than the $10$-nearest neighbor graph $G_{10}$.

\begin{table}[!ht]
\centering
\begin{tabular}{ccccc}
\toprule
\multicolumn{1}{l}{Algorithm} & \multicolumn{1}{c}{Data set} & \multicolumn{1}{l}{} & \multicolumn{1}{c}{\begin{tabular}[c]{@{}l@{}} $G_{10}$ \end{tabular}} & \multicolumn{1}{c}{\begin{tabular}[c]{@{}l@{}} $\tG_{\sqrt{n}/2}^{\mb}$
%Metric Backbone of\\  $q$-NN with $q=\sqrt{N}/2$
\end{tabular}} \\ 
\midrule
%\multirow{6}{*}{\begin{tabular}[c]{@{}c@{}}TSC with Dot \\ Product Similarity\end{tabular}} & \multirow{2}{*}{MNIST} & ARI & $0.626$ & $0.563$ \\
% &  & Edges & $542,877$ & $577,278$ \\
% & \multirow{2}{*}{FashionMNIST} & ARI & $0.432$ & $0.444$ \\
% &  & Edges & $597,771$ & $447,203$ \\
% & \multirow{2}{*}{HAR} & ARI & $0.507$ & $0.558$ \\
% &  & Edges & $77,199$ & $52,035$ \\ \hline
\multirow{6}{*}{\begin{tabular}[c]{@{}c@{}}
Spectral clustering
%TSC with \\ Gaussian Similarity
\end{tabular}} & \multirow{2}{*}{MNIST} & ARI & $0.533$ & $\textbf{0.563} \pm 0.011$ \\
 &  & Edges & $550,653$ & $373,379 \pm 3,018$ \\
 & \multirow{2}{*}{FashionMNIST} & ARI & $0.411$ & $\textbf{0.425} \pm 0.006$ \\
 &  & Edges & $578,547$ & $272,063 \pm 857$ \\
 & \multirow{2}{*}{HAR} & ARI & $\textbf{0.519}$ & $0.492 \pm 0.001$ \\
 &  & Edges & $77,526$ & $37,535 \pm 977$ \\ \hline
\multirow{6}{*}{Poisson learning} & \multirow{2}{*}{MNIST} & ARI & $0.814 \pm 0.015$ & $\textbf{0.821} \pm 0.021$ \\
 &  & Edges & $550,653$ & $373,379 \pm 3,018$ \\
 & \multirow{2}{*}{FashionMNIST} & ARI & $0.526 \pm 0.010$ & $\textbf{0.544} \pm 0.014$ \\
 &  & Edges & $578,547$ & $272,063 \pm 857$ \\
 & \multirow{2}{*}{HAR} & ARI & $0.618 \pm 0.039$ & $\textbf{0.636} \pm 0.037$ \\
 &  & Edges & $77,526$ & $37,535 \pm 977$ \\ 
 \bottomrule \\
\end{tabular}
\caption{
 Comparison of clustering on the \textit{full} data sets. $G_{10}$ denotes the $10$-nearest neighbors graph, and $\tG_{\sqrt{n}/2}^{\mb}$ denotes the approximate metric backbone of the $\sqrt{n}/2$-nearest neighbor graph. We approximate the metric backbone by sampling only $2\log n$ shortest-path trees. 
 %Performance of the approximate Metric Backbone with $2\log{n}$ sampled nodes and $q=\frac{\sqrt{N}}{2}$ on the different algorithms.
}
\label{tab:resultsFullDatasets}
\end{table}

\paragraph{Additional discussion}

The performance of clustering algorithms on the $q$-nearest neighbor graph~$G_q$ tends to decrease when $q$ increases. Indeed, a larger~$q$ introduces many low-similarity edges, which can act as noise. Spectral sparsification preserves the spectral properties of the graph, but this becomes ineffective if the spectral properties of~$G_q$ are insufficient to recover the communities (as attested by the poor performance of spectral clustering for large values of~$q$ in Figures~\ref{fig:spectral_clustering_MNIST} and~\ref{fig:subspace_clustering_results_mnist}). However, when the performance of spectral clustering on $G_q$ remains stable as $q$ is varied, so does the performance of spectral clustering on the spectral sparsified graph $G_q^{\sss}$ (as seen in Figures~\ref{fig:spectral_clustering_FashionMNIST} and~\ref{fig:subspace_clustering_results_Fashionmnist}). 
In contrast, the metric backbone preserves the shortest paths rather than spectral properties. Because the shortest paths are robust to the addition of numerous low-similarity edges,\footnote{Consider, for example, an adversary adding an arbitrary number of edges with a cost of $\omega\left(\frac{\log n}{n \rho_n}\right)$ in a wSBM. Proposition~\ref{prop:weight_hop_count_shortest_paths} proves that this addition does not affect the metric backbone, as none of these high-cost edges will be included in any shortest path.} the performance of clustering algorithms on the metric backbone $G_q^{\mb}$ remains stable when $q$ increases. This holds regardless of whether the performance on the original graph $G_q$ is stable or decreases with increasing~$q$. 
Finally, sparsified graphs obtained by metric sparsification are more consistent across different values of~$q$ than those obtained via spectral sparsification. For instance, on the MNIST dataset with Gaussian kernel similarity, the metric backbone $G_{30}^{\mb}$ and $G_{40}^{\mb}$ have both approximately 70k edges, with 64k edges in common. In contrast, the spectrally sparsified graphs $G_{30}^{\sss}$ and $G_{40}^{\sss}$, also with around 70k edges each, share only 22k edges in common.

\section{Conclusion}
\label{sec:conclusion}

The metric backbone plays a crucial role in preserving several essential properties of a network. Notably, the metric backbone effectively preserves the network community structure, although many inter-community edges belong to shortest paths. %despite the potential surprise stemming from inter-community edges that are often part of some of the shortest paths. 
In this paper, we have specifically proven that the metric backbone preserves the community structure in weighted stochastic block models. 
Moreover, our numerical experiments emphasize the performance of the metric backbone %for graph sparsification. 
as a powerful graph sparsification tool. 
%show that the metric backbone is a powerful tool for graph sparsification. 
%underscored the practical significance of the metric backbone as a powerful tool for graph sparsification.  
Furthermore, the metric backbone can serve as a preprocessing step for graph construction employing $q$-nearest neighbors, alleviating the sensitivity associated with selecting the hyperparameter $q$ and producing sparser graphs.

%Many properties of a network are preserved by its metric backbone. In particular, the metric backbone preserves the community structure. This might come as a surprise, as many inter-community edges belong to shortest paths. In this paper, we showed that the metric backbone preserves the community structure in weighted stochastic block models. Our numerical experiments on real data sets emphasize the usefulness of the metric backbone as a graph sparsification. 

%\begin{ack}
% Do {\bf not} include this section in the anonymized submission, only in the final paper. You can use the \texttt{ack} environment provided in the style file to automatically hide this section in the anonymized submission.
%\end{ack}

\bibliographystyle{plain}
\bibliography{main.bib}

%%%%%%%%%%%%%%%%%%%%%%%%%%%%%%%%%%%%%%%%%%%%%%%%%%%%%%%%%%%%

\appendix
\section{Proof of Proposition~\ref{prop:weight_hop_count_shortest_paths}}
\label{section:proof}

Let us set up some notations before proving Proposition~\ref{prop:weight_hop_count_shortest_paths}. 
We let $(z,G) \sim \wSBM( n, \pi, p, f )$ where the model parameters verify Assumptions~\ref{assumption:wSBM_scaling} and~\ref{assumption:regularity_fs}, and we denote by $c$ and $\tc$ the cost and the extended cost associated with $G$, respectively. We let $\Gamma_1, \cdots, \Gamma_k$ be the $k$ communities, \textit{i.e.,}
$ \Gamma_a \weq \left\{ w \in [n] \colon z_w = a \right\},$ 
and we denote by $n_1, \cdots, n_k$ their respective sizes, \textit{i.e., } $n_a = \left| \Gamma_a \right|$. 
As $k=\Theta(1)$, the concentration of multinomial distributions ensures that $n_a = (1+o(1)) \pi_a n$ whp.

\subsection{Particular case of exponentially distributed costs}
\label{section:proof_exponential_costs}

\begin{proof}[Proof of Proposition~\ref{prop:weight_hop_count_shortest_paths} for exponentially distributed costs]

We firstly assume that $F_{ab} \sim \Exp( \lambda_{ab})$ with $\lambda_{ab} > 0$. It is immediate to notice that $(f_{ab})_{ab}$ verify Assumption~\ref{assumption:regularity_fs}. 

Let $u, v$ be two vertices chosen uniformly at random in $\Gamma_a$ and $\Gamma_b$ respectively, so that $z_u = a$ and $z_v=b$. 
To explore the neighborhood of vertex $u$, we consider a first passage percolation (FPP) on~$G$ starting from~$u$. 
More precisely, for any $t>0$, we denote by
$\cB(u,t) \weq \left\{ w \in V \colon C(u,w) \le t \right\}$ 
the set of vertices within a cost~$t$ of~$u$, and by
\begin{align*}
 C_u(q) \weq \min \left\{ t \ge 0 \colon \left| \cB(u, t) \right| \ge q + 1 \right\}, 
\end{align*}
the cost going from $u$ to its $q$-th nearest neighbor (with the convention that $\min \emptyset = \infty$). In particular, $\cB(u, C_u(q))$ is the set of the $q$ nearest neighbors of $u$. 
Let $U(q) = (U_1(q), \cdots, U_k(q))$ be the collection of sets such that $U_{\ell}(q)$ is the set of vertices that are in the $q$ nearest neighborhood of~$u$ and in community $\ell$, \textit{i.e.,}
\begin{align*}
 U_{\ell}(q) \weq \cB( u, C_u(q) ) \cap \Gamma_{\ell}.
\end{align*}
Finally, we denote by $S(u,q)$ the matrix whose entry $S_{\ell \ell'}(u,q)$ is equal to the number of edges going from $U_{\ell}(q)$ to $\Gamma_{\ell'} \backslash U_{\ell'}(q)$. 

The outline of the proof is as follows. In a first step (i), we study the FPP starting from vertex $u$ to obtain upper and lower bounds on the cost $C_u(\sqrt{n \log n})$ for going from $u$ to its $ \sqrt{n \log n}$-nearest neighbor\footnote{Because $\sqrt{ n \log n }$ is not necessarily integer we should write $\lceil \sqrt{ n \log n } \rceil$ but we drop the $\lceil$ and $\rceil$ to avoid overburdening the notations.}. More precisely, we will establish that
\begin{align}
\label{eq:in_proof_upperbound_Cuq}
 %\frac{1+o(1)}{ 2 \tau_{\max} } 
 %\wle 
 \frac{n \rho_n}{\log n} 
 C_u\left( \sqrt{n \log n} \right) 
 \wle \frac{1+o(1)}{ 2 \tau_{\min} }
\end{align}
holds whp. By symmetry, the same upper bound also holds for $C_v(\sqrt{n \log n})$. 

Next, we will show in step (ii) that $\cB(u, \sqrt{n \log n}) \cap \cB(v, \sqrt{n \log n}) \ne \emptyset$ whp. 
Together with the upper bound~\eqref{eq:in_proof_upperbound_Cuq}, we can then upper bound the cost $C(u,v)$ of the shortest path from $u$ to $v$. Indeed, let $w_1 \in \cB(u, \sqrt{n \log n}) \cap \cB(v, \sqrt{n \log n})$ (such a $w_1$ exists whp by step (ii)), and consider the path $\cP = u \rightarrow w_1 \rightarrow v$, where $u \rightarrow w_1$ and $w_1 \rightarrow v$ denote the shortest path from $u$ to $w_1$ and from $w_1$ to~$v$, respectively. Because the cost $C(u,v)$ of the shortest path from $u$ to $v$ is upper-bounded by the cost of the path $\cP$, we have whp 
\begin{align*}
  C(u,v) \wle C(u,w_1) + C(w_1,v) \wle \frac{1+o(1)}{\tau_{\min}} \frac{\log n}{n \rho_n},
\end{align*}
where the second inequality holds because $w_1 \in \cB(u, \sqrt{n \log n}) \cap \cB(v, \sqrt{n \log n})$, which enables to use the upper bound~\eqref{eq:in_proof_upperbound_Cuq}. This establishes the desired upper bound on $C(u,v)$. 

To lower-bound $C(u,v)$, we will establish that 
\begin{align}
\label{eq:in_proof_lowerbound_Cuq}
 \frac{1+o(1)}{ 2 \tau_{\max} } 
 \wle \frac{n \rho_n}{\log n} 
 C_u\left( \sqrt{ \frac{n}{ \log n} } \right) 
 %\wle \frac{1+o(1)}{ 2 \tau_{\min} }, 
\end{align}
holds whp. Again, a similar bound holds for $ C_v\left( \sqrt{ n / \log n } \right)$ by symmetry. 

Finally, we will show in step (iii) that $\cB \left( u, C_u(\sqrt{n / \log n} ) \right) \cap \cB \left( v, C_v(\sqrt{n / \log n} \right) = \emptyset$ whp. Combining this with the lower bound~\eqref{eq:in_proof_lowerbound_Cuq}, we conclude then that whp
$$C(u,v) \ge \frac{1+o(1)}{\tau_{\max}} \frac{\log n}{ n \rho_n}.$$

\paragraph{(i) Upper and lower bounds on $C_u(\sqrt{n / \log n})$ and $C_u(\sqrt{n  \log n})$.} In this paragraph, we will establish~\eqref{eq:in_proof_upperbound_Cuq} and~\eqref{eq:in_proof_lowerbound_Cuq}. 

A key property of the FPP is to notice that, conditioned on $S(u, \cdot) = \left( S(u,1), \cdots, S(u,n) \right)$, the random numbers $C_u(q+1) - C_u(q)$ are independent and exponentially distributed, such that
\begin{align}
\label{eq:key_prop_fpp}
 C_u(q+1) - C_u(q) \, \cond \, S(u, \cdot) \wsim \Exp \left( \sum_{\ell, \ell'} \lambda_{\ell \ell'} S_{\ell \ell'}(u,q) \right).
\end{align}
Indeed, $C_u(q+1) - C_u(q)$ is the difference in the cost of traveling from $u$ to its $(q+1)$-th nearest-neighbor minus the cost from $u$ and its $q$-th nearest neighbor. In other words, 
\begin{align}
\label{eq:in_proof_costs_increments}
  C_u(q+1) - C_u(q) \, \cond \, S(u, \cdot) \weq \min_{ \substack{ w \in U_c(q) \\ w' \in \Gamma_{c'} \backslash U_{c'}(q) } }  c_{ w w'},
\end{align}
with $c_{ww'} \sim \Exp(\lambda_{cc'})$. Statement~\eqref{eq:key_prop_fpp} follows because $\min \left\{ \Exp(\lambda), \Exp(\mu) \right\} \sim \Exp(\lambda+\mu)$. 

Let $q \le \sqrt{ n \log n}$. The number of edges $S_{\ell \ell'}(u,q)$ between the sets $U_{\ell}(q)$ and $\Gamma_{\ell'} \backslash U_{\ell'}(q)$ is binomially distributed such that 
\begin{align*}
 S_{\ell \ell'}(u,q) \wsim \Bin \left( \left| U_\ell(q) \right| \times \left| \Gamma_{\ell'} \backslash U_{\ell'}(q) \right|, p_{cc'} \right). 
\end{align*}
The number of vertices in $\Gamma_{\ell'}$ is $n_{\ell'} = \Theta(n)$ and the number of vertices in $|U_{\ell'}(q)|$ verifies $|U_{\ell'}(q)| \le q \ll n$. Therefore,  
\[
 \E \left[ S_{\ell \ell'}(u,q) \right] 
 \weq \left| U_{\ell}(q) \right| \left( n_{\ell'} -  \left| U_{\ell'}(q) \right| \right) p_{\ell \ell'} 
 \weq (1+o(1)) \left| U_{\ell}(q) \right| n_{\ell'} p_{\ell \ell'}.
\]
Hence $\E \left[ S_{\ell \ell'}(u,q) \right]  \gg 1$, and the concentration of binomial distributions ensure that whp 
\begin{align}
\label{eq:in_proof_concentration_Sll}
  S_{\ell \ell'}(u,q) \weq (1+o(1)) \left| U_{\ell}(q) \right| (n_{\ell'} - |U_\ell(q)|)  p_{\ell \ell'} 
  \weq (1+o(1))  \left| U_\ell(q) \right| n_{\ell'} p_{\ell \ell'}. 
\end{align}
 Therefore, using $n_{\ell'} = (1+o(1)) \pi_{\ell'} n$ and $p_{\ell \ell'} = B_{\ell \ell'} \rho_n$, we have from~\eqref{eq:in_proof_costs_increments}
\begin{align*}
 \E \left[ C_u(q+1) - C_u(q) \cond S(u,\cdot) \right] 
 & \weq \frac{1+o(1)}{ \sum_{\ell} \left| U_\ell(q) \right| \sum_{\ell'} n_{\ell'} \lambda_{\ell \ell'} p_{\ell \ell'} } \\
 & \weq \frac{1+o(1)}{ n \rho_n \sum_{\ell} \left| U_\ell(q) \right| \sum_{\ell'} \pi_{\ell'} \lambda_{\ell \ell'} B_{\ell \ell'} }.
\end{align*}
We notice that $\sum_{l} \left| U_{\ell}(q) \right| \sum_{\ell'} \pi_{\ell'} \lambda_{\ell \ell'} B_{\ell \ell'} = U^T(q) T 1_k $ where $T = ( \Lambda \odot B ) \diag( \pi )$ is the operator defined in~\eqref{eq:def_operator}. Then, using $\sum_{\ell} \left|U_\ell(q)\right| = q$, we have 
\begin{align}
\label{eq:in_proof_bounding_exponential_rate}
 \tau_{\min} q 
 \wle \sum_{\ell} \left| U_\ell(q) \right| \sum_{\ell'} \pi_{\ell'} \lambda_{\ell \ell'} B_{\ell \ell'}  \wle \tau_{\max} q.
\end{align}
Therefore, 
\begin{align}
\label{eq:in_proof_boundExpectedCostsIncrements}
 \frac{1+o(1)}{ n \rho_n \tau_{\max} q }
 \wle \E \left[ C_u(q+1) - C_u(q) \cond S(u,\cdot) \right]
 \wle \frac{1+o(1)}{ n \rho_n \tau_{\min} q }.
\end{align}
This bound is uniform in $S(u,\cdot)$. Thus, using the total law of probability and summing over all $1\le q \le \sqrt{n \log n}$ leads to  
\begin{align}
\label{eq:in_proof_bounding_expected_cost}
 \frac{1+o(1)}{ 2 \tau_{\max} } 
 \wle \frac{n \rho_n}{\log n}
 \E C_u( \sqrt{n \log n} )
 \wle \frac{1+o(1)}{ 2 \tau_{\min} }, 
\end{align}
where we used $\sum_{q=1}^{ \sqrt{n \log n}} q^{-1} = 2^{-1} ( \log n + \log \log n) + \Theta(1)$. 

Let us now upper-bound the variance of $C_u\left( \sqrt{n \log n} \right)$. By the law of total variance, we have  
\begin{align}
 \label{eq:in_proof_totalVariance}
 \Var \left[ C_u(\sqrt{n \log n}) \right] & \weq \E \left[ \Var\left[ C_u(\sqrt{n \log n}) \cond S(u,\cdot) \right] \right] + \Var \left[ \E \left[ C_u(\sqrt{n \log n}) \cond S(u,\cdot) \right] \right].
\end{align}
The first term on the right-hand side of~\eqref{eq:in_proof_totalVariance} can be upper bounded by proceeding similarly as for the expectation. Indeed, we have
\begin{align*}
  \Var \left[ C_u( q+1 ) - C_u(q) \cond S(u,\cdot) \right] \wle \frac{1+o(1)}{n^2 \rho_n^2 \tau_{\min}^2 q^2}, 
\end{align*}
and the independence of $C_u(q+1) - C_u(q)$ conditioned on $S(u,\cdot)$ leads to 
\begin{align}
 \Var \left[ C_u( \sqrt{n \log n} ) \cond S(u,\cdot) \right]
 & \weq \sum_{q=1}^{ \sqrt{n \log n} - 1 } \Var \left( C_u(q+1) - C_u(q) \cond S(u,\cdot) \right) \notag \\ 
 & \wle \frac{1+o(1)}{ n^2 \rho_n^2  \tau_{\min}^2 } \sum_{q=1}^{ \sqrt{n \log n} } q^{-2} \notag \\
 & \wle \frac{1+o(1)}{n^2 \rho_n^2 \tau_{\min}^2} \frac{\pi^2}{6}.
 \label{eq:in_proof_firstVarianceTerm}
\end{align}
To upper bound the second term on the right-hand side of~\eqref{eq:in_proof_totalVariance}, we notice that 
\begin{align*}
 \Var \left[ \E \left[ C_u(\sqrt{n \log n}) \cond S(u,\cdot) \right] \right]
 & \weq \Var \left[ \sum_{q=1}^{\sqrt{n \log n}-1 } \E \left[ C_u( q+1 ) - C_u(q) \cond S(u,\cdot) \right] \right] \\
 & \weq \sum_{q=1}^{\sqrt{n \log n}-1 } \Var \left[ \E \left[ C_u( q+1 ) - C_u(q) \cond S(u,\cdot) \right] \right],
\end{align*}
where the first line holds by the linearity of the conditional expectation, and the second one by the independence of $C_u(q+1) - C_u(q)$ conditioned on $S(u,\cdot)$. 
Moreover, recall that $\Var(X) \le (b-a)^2/4$ if $a \le X \le b$. Hence, using the upper and lower bound on $C_u( q+1 ) - C_u(q)$ obtained in~\eqref{eq:in_proof_boundExpectedCostsIncrements} leads to
\begin{align*}
 \Var \left[ \E \left[ C_u( q+1 ) - C_u(q) \cond S(u,\cdot) \right] \right] 
 & \wle \frac{1+o(1)}{4} \left( \frac{ 1 }{ n \rho_n \tau_{\min} q } - \frac{ 1 }{ n \rho_n \tau_{\max} q } \right)^2 \\
 & \wle \frac{1+o(1)}{4} \left( \frac{ 1 }{ n \rho_n \tau_{\min} q } \right)^2,
\end{align*}
and therefore
\begin{align}
\label{eq:in_proof_secondVarianceTerm}
 \Var \left[ \E \left[ C_u(\sqrt{n \log n}) \cond S(u,\cdot) \right] \right] \wle \frac{1+o(1)}{4 n^2 \rho_n^2 \tau_{\min}^2} \frac{\pi^2}{6}.
\end{align}
Combining~\eqref{eq:in_proof_firstVarianceTerm} and~\eqref{eq:in_proof_secondVarianceTerm} into~\eqref{eq:in_proof_totalVariance} provides
\begin{align}
\label{eq:in_proof_bounding_variance_cost}
 \Var \left[ C_u(\sqrt{n \log n}) \right] \wle \frac{5}{4} \frac{(1+o(1)) \pi^2}{6 \tau_{\min}^2 n^2 \rho_n^2}.
\end{align}
This upper bound~\eqref{eq:in_proof_bounding_variance_cost} ensures that $\Var \left[ C_u(\sqrt{n \log n}) \right] = o \left( \left( \E \left[ C_u(\sqrt{n \log n}) \right] \right)^2 \right)$, and therefore an application of Chebyshev’s inequality ensures that 
\begin{align*}
 \frac{1+o(1)}{ 2 \tau_{\max} } 
 \wle \frac{n \rho_n}{\log n}
 C_u\left( \sqrt{n \log n} \right) 
 \wle \frac{1+o(1)}{ 2 \tau_{\min} }
\end{align*}
with high probability. 
Likewise, by symmetry for a FPP starting from $v$ instead of $u$, we find  
\begin{align*}
 \frac{1+o(1)}{ 2 \tau_{\max} } 
 \wle \frac{n \rho_n}{\log n}
 C_v \left( \sqrt{n \log n} \right)
 \wle \frac{1+o(1)}{ 2 \tau_{\min} }. 
\end{align*}
This establishes~\eqref{eq:in_proof_upperbound_Cuq}. Moreover, we establish~\eqref{eq:in_proof_lowerbound_Cuq} by doing a slight modification of this proof. More precisely, we sum over all $q \le \sqrt{n/\log n}$ instead of all $q \le \sqrt{n \log n}$ in step~\eqref{eq:in_proof_bounding_expected_cost}. We obtain the same bound as in~\eqref{eq:in_proof_bounding_expected_cost} because $\sum_{q=1}^{\sqrt{n / \log n}} q^{-1} = 2^{-1} ( \log n - \log \log n) + \Theta(1)$.

\paragraph{(ii) $\cB(u, C_u(\sqrt{ n \log n }) ) \cap \cB(v, C_v(\sqrt{ n \log n }) ) \ne \emptyset$ whp.} 

 For ease of notations, let us shorten by $\cB_u = \cB(u, C_u(\sqrt{ n \log n }) )$ the set of the $\sqrt{n \log n}$ nearest neighbors of $u$ and by $\cB_v(q)$ the set of the $q$-nearest neighbors of $v$. We also denote by $w_q$ the $q$-nearest neighbor of $v$. A key property is that the two FPPs (starting from vertex $u$ and starting from vertex~$v$) are independent of each other as long as they do not intersect. To make this rigorous, we denote by $Q$ the random variable counting the number of steps made by the FPP starting from $v$ without intersecting with $\cB_u$, \textit{i.e.,}     
 \begin{align*}
   Q \weq \min \left\{ q \in [n] \colon \cB_u \cap \cB_v(q) \ne \emptyset \right\}.
 \end{align*}
 We now show that $\pr \left( Q > q \right) = o(1)$ whenever $q \gg \sqrt{n / \log n }$, which implies that $\cB_u \cap \cB_v( \sqrt{n \log n} ) \ne \emptyset$. Using~\cite[Lemma~B.1]{bhamidi2010fpp}, we have 
 \begin{align}
 \label{eq:expression_proba_intersection_larger_than_m}
    \pr \left( Q > m \right) & \weq \E \left[ \prod_{q=1}^m \Q^{(q)} \left( Q > q \cond Q > q-1 \right) \right], 
 \end{align}
 where $\Q^{(q)}$ denotes the conditional distribution given $\cB_u$ and $\cB_v(q)$. We further notice that 
 \begin{align*}
    \Q^{(q)} \left( Q > q \cond Q > q-1 \right) \weq 1 - \Q^{(q)} \left( Q = q \cond Q > q-1 \right),
 \end{align*}
 and that the event $\{ Q > q-1 \}$ is equivalent to the event $\{ w_1 \not\in \cB_u, \cdots, w_{q-1} \not\in \cB_u\}$, \textit{i.e.,} the FPP starting from $v$ has not yet collided with the one starting from $u$. Conditionally on this event, all vertices within the same block have an equal probability of being chosen at the $q$-th step of the FPP. Therefore, 
 \begin{align*}
   \Q^{(q)} \left( Q = q \cond Q > q-1 \right) 
   & \weq \sum_{\ell \in [k]} \pr \left( w_q \in \cB_u \cond w_q \in \Gamma_{\ell}, Q > q-1 \right) \pr \left( w_q \in \Gamma_{\ell} \right) \\
   & \weq \sum_{\ell \in [k]} \frac{\left| \cB_u \cap \Gamma_{\ell} \right|}{ \left| \Gamma_{\ell} \right| } \pr \left( w_q \in \Gamma_{\ell} \right). 
 \end{align*}
 Because $\sum_{\ell \in [k] } \left| \cB_u \cap \Gamma_{\ell} \right| = \sqrt{ n \log n} $, we have that\footnote{Recall that if $x_1 + \cdots + x_k = m$ with $x_\ell\ge 0$ then $\max_\ell x_\ell \ge k^{-1} m$.} $ \max_{\ell} \left| \cB_u \cap \Gamma_{\ell} \right| \ge k^{-1} \sqrt{n \log n} $. Moreover, $\left| \Gamma_{\ell} \right| \le (1+o(1))\pi_{\max} n$ whp and $\sum_{\ell \in [k]} \pr \left( w_q \in \Gamma_{\ell} \right) = 1$. Therefore,
\begin{align*}
 \Q^{(q)} \left( Q = q \cond Q > q-1 \right) 
 & \wge \frac{ \max_{ \ell \in [k]} \left| \cB_u \cap \Gamma_{\ell} \right| }{ \min_{\ell \in [k] } \left| \Gamma_{\ell} \right| } \sum_{\ell \in [k]} \pr \left( w_q \in \Gamma_{\ell} \right) \\
 & \wge \frac{1+o(1)}{ \pi_{\max} } \sqrt{\frac{\log n}{n}}. 
\end{align*} 
Going back to~\eqref{eq:expression_proba_intersection_larger_than_m}, this implies that 
\begin{align*}
    \pr \left( Q > m \right) \wle \left( 1 - \frac{1+o(1)}{\pi_{\max} } \sqrt{\frac{\log n}{n} } \right)^m, 
\end{align*}
which indeed goes to $0$ when $m \gg \sqrt{n / \log n}$.

\paragraph{(iii) $ \cB(u, C_u(\sqrt{ n / \log n }) ) \cap \cB(v, C_v(\sqrt{ n / \log n }) ) = \emptyset$ whp.} We proceed similarly to step (ii) by considering the FPP starting from $v$, and denote by $w_q$ the $q$-nearest neighbor of $v$. For ease of notations, let us denote in this paragraph $\cB_u = \cB \left(u, C_u \left( \sqrt{ n / \log n } \right) \right)$ and $\cB_v(q) = \cB \left(v, C_v \left( q \right) \right)$. Note that, in contrast to step (ii), we now look at the FPP up to step $\sqrt{n / \log n}$ instead of the FPP up to step $\sqrt{n \log n}$. 
We define 
\begin{align*}
    Q \weq \min \{ q \in [n] \colon \cB_u \cap \cB_v(q) \not= \emptyset\}.
\end{align*}
We have 
\begin{align*}
 \pr \left( Q = q \right) 
  & \weq \pr \left( w_q \in \cB_u \cond Q > q - 1 \right) \pr \left( Q > q-1 \right) \\
  & \weq \sum_{\ell \in [k]} \frac{\left| \cB_u \cap \Gamma_{\ell} \right|}{ \left| \Gamma_{\ell} \right| } \pr \left( w_q \in \Gamma_{\ell} \cond Q > q-1 \right) \pr \left( Q > q-1 \right),
\end{align*}
 where, as in step~(ii), the second equality holds because all vertices within the same block have the same probability of being chosen at the next step of the FPP as long as the two FPP have not collided. Using $\left| \Gamma_{\ell} \right| \ge (1+o(1))\pi_{\min} n$ and $\left| \cB_u \cap \Gamma_{\ell} \right| \le \left| \cB_u \right| \le \sqrt{n / \log n}$ leads to
\begin{align*}
  \pr \left( Q = q \right) \wle \frac{1}{ \pi_{\min} } \frac{1+o(1)}{\sqrt{n \log n}}.
\end{align*}
Therefore,
\begin{align*}
 \pr\left( \cB_u \cap \cB_v \ne \emptyset \right) \weq \sum_{q=1}^{ \sqrt{ n / \log n} } \pr \left( Q = q \right) \wle \frac{1+o(1) }{\pi_{\min} \log n} = o(1). 
\end{align*}

This concludes the proof of Proposition~\ref{prop:weight_hop_count_shortest_paths} when $F_{ab} \sim \Exp(\lambda_{ab})$. 
\end{proof}

\subsection{General case}
\label{subsection:proof_general_weights}

\begin{proof}[Proof of Proposition~\ref{prop:weight_hop_count_shortest_paths} with non-exponentially distributed costs]
 In this section, the probability densities $F_{ab}$ are regular (see Assumption~\ref{assumption:regularity_fs}) but not necessarily exponentially distributed. We adapt the argument provided at the beginning of Section~2 of~\cite{janson1999one}. The strategy is to transform the graph $G = (V,E,c)$ generated from a $\wSBM(n,\pi,F)$ into a graph $G_{\exp} = (V,E,c_{\exp})$ with the same vertices and edge set but where the costs $c_{\exp}$ are obtained from $F$ and are exponentially distributed so that $G_{\exp}$ has the same distribution as a $\wSBM\left( n, \pi, \left( \Exp(\lambda_{ab}) \right)_{ab} \right)$. We will then show that the shortest paths in $G_{\exp}$ and $G$ are the same. 

Let $f_{ab}$ be the density function of the cdf $F_{ab}$, and remember that $\lambda_{ab} = F_{ab}'(0)$. Let $F^{-1}_{ab}$ be the generalized inverse function\footnote{More precisely, we define $F_{ab}^{-1}(t) = \inf \{ x \in \R \colon F_{ab}(x) \ge t \}$.  Because the cdf $F_{ab}$ is increasing, $F_{ab}^{-1}(t)$ is well-defined.} of $F_{ab}$. By regularity of $F_{ab}$ (Assumption~\ref{assumption:regularity_fs}), we have $\lim_{t \rightarrow 0} F_{ab}(t) / t = \lambda_{ab}$ and thus $ \lim_{t \rightarrow 0} F_{ab}^{-1}(t) / t  = \lambda_{ab}^{-1}$. We also denote by $g_{ab}(x) = \lambda_{ab} e^{-\lambda_{ab}x}$ and by $G_{ab}(x) = 1 - e^{-\lambda_{ab} x}$ the density and cumulative distribution functions of $\Exp(\lambda_{ab})$. 

 We first show that Proposition~\ref{prop:weight_hop_count_shortest_paths} holds for uniformly distributed edge weights, \textit{i.e.,} when $f_{ab} (x) = \lambda_{ab} 1(x \in [0 , \lambda_{ab}^{-1}])$. Denote by $F_{ab}$ the cdf associated to $f_{ab}$, and let $(z, G ) \sim \wSBM(n, \pi, p, F)$ where $G = (V,E,c)$. We construct the graph $G_{\exp} = (V, E, c_{\exp})$ such that for all $(w,w') \in E$ with $z_w = \ell$ and $z_{w'} = \ell'$ we have $c_{\exp}(w,w') = G_{\ell \ell'}^{-1}( \lambda_{\ell \ell'} c(w,w') )$. Note that the (unweighted) edges of $G$ and $G_{\exp}$ are the same, only the costs $c$ and $c_{\exp}$ differ. 
 Since $ \lambda_{\ell\ell'} c(w,w') \sim \Unif(0,1) $ we have  $c_{\exp}(w,w') \sim \Exp( \lambda_{\ell \ell'})$. 
 In particular, $G_{\exp}$ has the same distribution of a weighted SBM whose costs are exponentially distributed, \textit{i.e.,} $(z,G_{\exp}) \sim \wSBM(n,\pi,p,G )$ with $G = (G_{ab})_{a,b}$.

 Let $\cP(u,v)$ (resp., $\cP_{\exp}(u,v)$) be the shortest path from $u$ to $v$ in $G$ (resp., in $G_{\exp}$), and denote by $C(u,v)$ (resp., by $C_{\exp}(u,v)$) its cost. We know from Section~\ref{section:proof_exponential_costs} that $C_{\exp}(u,v) = \Theta \left( \frac{ \log n }{ n \rho_n } \right)$ whp. 

 Suppose that the edge $(w,w') \in E$ belongs to $\cP_{\exp}(u,v)$. From  $ c_{\exp}(w,w') \le C_{\exp}(u,v)$ we notice that $c_{\exp}(w,w') = O( \frac{ \log n }{ n \rho_n})$, and hence by Assumption~\ref{assumption:wSBM_scaling} we have $c_{\exp}(w,w') = o(1)$.  Moreover, by definition of $c_{\exp}$ we have
 \[
  \frac{c(w,w')}{c_{\exp}(w,w')} = \frac{1}{\lambda_{\ell \ell'}} \frac{G_{\ell \ell'}(c_{\exp}(w,w'))}{c_{\exp}(w,w')}.
 \]
 Recall that $\lim_{t \to 0} G_{\ell \ell'}(t) / t = \lambda_{\ell \ell'}$. Thus, we have for any $\epsilon > 0$ and for $n$ large enough, $1-\epsilon < \frac{c(w,w')}{c_{\exp}(w,w')} < 1+\epsilon$. This holds for any edge $(w,w')$ belonging to $\cP_{\exp}(u,v)$. Therefore, the sum of the costs $c(w,w')$ over $\cP_{\exp}(u,v)$ is at most $(1+\epsilon)C_{\exp}(u,v)$, and hence 
 \begin{align}
 \label{eq:inproof_upperBoundCost}
   C(u,v) \ < \ (1+\epsilon) C_{\exp}(u,v).
 \end{align} 
 Similarly, if $(w,w') \in \cP(u,v)$, then the upper bound~\eqref{eq:inproof_upperBoundCost} together with Assumption~\ref{assumption:wSBM_scaling} imply that $C(u,v) = o(1)$. This in turn implies $c(w,w') = o(1)$ and hence $ 1-\epsilon < \frac{c(w,w')}{c_{\exp}(w,w')} < 1 +\epsilon$. Thus, summing over all edges in $\cP(u,v)$ leads to 
 \begin{align}
 \label{eq:inproof_lowerBoundCost}
  (1-\epsilon)C_{\exp}(u,v) \ < \ C(u,v). 
 \end{align}
 Combining~\eqref{eq:inproof_upperBoundCost} with~\eqref{eq:inproof_lowerBoundCost} shows that Proposition~\ref{prop:weight_hop_count_shortest_paths} holds for uniformly distributed edges~weights. 

 Finally, if the costs $c(w,w')$ are sampled from general distributions $F_{ab}$ verifying Assumption~\ref{assumption:regularity_fs}, then  we construct the graph $G_{\rm unif} = (V,E,c_{\rm unif})$ where $c_{\rm unif}(u,v) = \lambda_{z_u z_v}^{-1} F_{z_u z_v} (c(u,v))$. We have $c_{\rm unif}(u,v) \sim \Unif( [0, \lambda_{z_u z_v }^{-1} ])$ and we apply the previous reasoning (by replacing the exponential distributions with uniform distributions) to conclude. 
\end{proof}

\begin{comment}
\section{Algorithms of Section~\ref{sec:digit_recognition}}

\subsection{Thresholding-based subspace clustering (TSC) algorithm}

\begin{algorithm}[!ht]
\KwInput{Set of data points $x_1, \cdots, x_n$ (such that $x_u \in \R^d$), number of clusters $k$, parameter $q$.}
\KwOutput{Predicted clusters $\hz$.}
\KwProcessStep{ \\
\begin{enumerate}
 \item For every $u \in [n]$, identify the set $\Nei(u,q) \subseteq [n] \backslash \{u\}$ of the $q$-nearest neighbors of $u$, defined as the set of cardinality $q$ such that  
$
 \forall v \in \Nei(u,q) \quad \forall w \not\in \Nei(u,q) \colon \quad \left| x_u^T x_v \right| \wge \left| x_u^T x_w \right| .
$
 \item Let $S =(s_{uv})_{1\le u,v\le n}$ such that 
$
s_{uv} \weq  
    \begin{cases}
        \exp \left( - 2 \arccos ( \left| <x_u, x_v>\right| \right) & \text{ if } v \in \Nei(u,q) \\
        0 & \text{ otherwise.}
    \end{cases}
$
 \item Construct the proximity matrix $G = (V,E,p)$ where $p(u,v) = s_{uv} + s_{vu}$. 
 \item 
\end{enumerate}
\KwReturn{ The predicted communities $\hz$. }
}
\caption{Thresholding-based subspace clustering.}
\label{algo:tsc}
\end{algorithm}
\end{comment}

\section{Proof of Sections~\ref{subsection:metric_backbone_wSBMs} and~\ref{subsection:recovering_communities_from_backbone}}

\subsection{Proof of Theorem~\ref{thm:mb_keeps_community}}
\label{subsection:proof_thm:mb_keeps_community}

\begin{proof}[Proof of Theorem~\ref{thm:mb_keeps_community}]
 Let $G = (V, E, c)$ be a wSBM, and let $u, v \in V$ be two arbitrary distinct vertices such that $z_u = a$ and $z_v = b$. Then, adapting the proof\footnote{We note that~\cite{VanMieghem2009ObservablePart} state the result for a weighted \Erdos-\Renyi random graph, but their proof holds for a wSBM as well.} of~\cite[Corollary~1]{VanMieghem2009ObservablePart}, we can write 
 \begin{equation}
   \label{eq:integral_thm1}
   p_{ab}^{\mb} \weq -\int^{\infty}_0 c^*_{uv}(x) \log( 1 - p_{ab} F_{ab} (x) ) \, \diff x,
 \end{equation}
 where $c^*_{uv}(x)$ is the probability density function of the weight of the shortest path between $u$ and $v$ and $F_{ab}(x) = \int_0^x f_{ab}(x) dx$ is the cumulative distribution function of the length of an edge between two vertices belonging to communities $a$ and~$b$. 

 Proposition~\ref{prop:weight_hop_count_shortest_paths} ensures that, with high probability, the cost $C(u,v)$ is a random variable whose support is lower and upper bounded by $ \frac{1+o(1)}{\tau_{\max}} \frac{\log n}{n \rho_n}$ and $ \frac{1+o(1)}{\tau_{\min}} \frac{\log n}{n \rho_n}$, respectively. Its density function $c_{uv}^*(x)$ tends therefore to zero outside these two bounds, and hence by setting the $\log(\cdot)$ factor in Equation~\eqref{eq:integral_thm1} to the lower and, respectively, the upper bound of the support of $C(u,v)$, and by integrating next the pdf $c_{uv}^*(x)$ over the whole interval, Equation~\eqref{eq:integral_thm1} implies that
 \begin{align*}
  - \log \left( 1 - p_{ab} F_{ab} \left( \frac{1+o(1) }{\tau_{\max}} \frac{\log n}{ n \rho_n} \right) \right) 
  \wle p_{ab}^{\mb}
  \wle - \log \left( 1 - p_{ab} F_{ab} \left( \frac{1+o(1)}{\tau_{\min}}  \frac{\log n}{ n \rho_n} \right) \right).
 \end{align*}
 We finish the proof using $ F_{ab} \left( \frac{(1+o(1))\log n}{\tau_{\min} n \rho_n} \right) = (1+o(1)) \lambda_{ab} \frac{\log n}{\tau_{\min} n \rho_n} $ and $F_{ab} \left( \frac{(1+o(1))\log n}{\tau_{\max} n \rho_n} \right) = (1+o(1)) \lambda_{ab} \frac{\log n}{\tau_{\max} n \rho_n} $  (Assumption~\ref{assumption:regularity_fs}). 
 \end{proof}

\subsection{Proof of Theorem~\ref{thm:proof_spectralClustering_consistency}}
\label{appendix:proof_thm:proof_spectralClustering_consistency}

\begin{proof}[Proof of Theorem~\ref{thm:proof_spectralClustering_consistency}]
 Let $G=(V,E,c)$ be the original graph and $G^{\mb} = (V,E^{\mb}, c^{\mb})$ its metric backbone. Let $E^{\theta} \subseteq E$ be the subset of edges whose cost is no more than $\theta$: 
  \[
  (u,v) \in E^{\theta} \iff c(u,v) \wle \theta,
  \]
 and which is the edge set of the corresponding threshold graph $G^{\theta} = (V, E^{\theta}, c^{\theta})$, where
  $c^{\theta}$ is the restriction of $c$ to $E^{\theta}$. 
  
 Denote by $W, W^{\mb}, W^{\theta} \in \R_+^{n \times n}$ the adjacency matrices of $G$, $G^{\mb}$, and $G^{\theta}$, respectively. 

 \paragraph{Overview of the proof.} To prove Theorem~\ref{thm:proof_spectralClustering_consistency}, the key idea is to choose a threshold $\theta$ large enough such that the threshold graph contains the metric backbone (\textit{i.e.,} $E^{\mb} \subseteq E^{\theta}$), but not too large so that the adjacency matrices $W^{\mb}$ and $W^{\theta}$ are not too different. 

 Lemma~\ref{lemma:max_cost} ensures that, for any $\epsilon > 0$, we have  $\pr \left(\max_{1\le u,v \le n} C(u,v) \le \frac{ 3 + \epsilon }{ \tau_{\min} } \frac{ \log n }{ n \rho_n } \right) = 1 - o(1)$. Hence, $E^{\mb} \subseteq E^{\theta}$ whp as soon as $\theta \ge \frac{ 3 + \epsilon }{ \tau_{\min} } \frac{ \log n }{ n \rho_n }$. We choose $\theta = \frac{ 4 }{ \tau_{\min} } \frac{ \log n }{ n \rho_n }$, and we proceed by conditioning on the event $E^{\mb} \subseteq E^{\theta}$, which occurs with high probability given our choice of $\theta$.
 
 We will first prove that the clusters can exactly be recovered using the eigenvectors of $\E W^{\mb}$. Then, using Davis-Kahan's Theorem~\cite[Theorem~2]{yu2015useful}, we show that the clusters can also be recovered from the adjacency matrix $W^{\mb}$, provided that $\| W^{\mb} - \E W^{\mb} \|$ is small enough. More precisely, we obtain an upper-bound on $\loss(z, \tz)$ that depends on $\| W^{\mb} - \E W^{\mb} \|$. The main ingredients of the proof are thus (i) the justification of the choice of $\theta$ in Lemma~\ref{lemma:max_cost} and (ii) the careful upper-bounding of $\| W^{\mb} - \E W^{\mb} \|$. 
 
 \paragraph{Starting point: eigenstructure of the expected adjacency matrix $\E W^{\mb}$ and Davis-Kahan.} 
  Let $Z \in \{0,1\}^{n \times k}$ be the one-hot encoding of the true community structure $z \in [k]^n$, \textit{i.e.,}
 \begin{align}
 \label{eq:in_proof_one_hot_encoding}
   \forall u \in [n], \ \forall a \in [k] \colon \ Z_{ua} \weq \begin{cases}
        1 & \text{ if } z_u = a, \\
        0 & \text{ otherwise.}
    \end{cases}
 \end{align}
For any two vertices $u, v$ belonging to communities $a$ and $b$, respectively, we write $\E W_{uv}^{\mb} = m_{ab}$. Let $M = (m_{ab}) \in \R^{k \times k}$. We have 
 \begin{align*}
   \E W^{\mb} \weq Z M Z^T. 
 \end{align*}
 %Notice that because $\E W^{\mb}$ is symmetric, $M$ is also symmetric. 
  Denote by $|\bsigma_1| \ge \cdots \ge |\bsigma_k|$ the $k$ eigenvalues of $\E W^{\mb}$, and by $\bu_1, \cdots, \bu_k$ their associated eigenvectors. Let $\bSigma = \diag (\bsigma_1, \cdots, \bsigma_k) \in \R^{k \times k}$ and $\bU = [ \bu_1, \cdots, \bu_k] \in \R^{n \times k}$. We have  
 \begin{align}
 \label{eq:in_proof_eigendecomposition_expected}
   \E W^{\mb} \weq \bU \bSigma \bU^T.
 \end{align}
 Let $\Delta = \diag(\sqrt{n \pi}) \in \R^{k \times k}$ be the diagonal matrix whose diagonal elements are $\sqrt{n \pi_1}, \cdots, \sqrt{n \pi_k}$. We have 
 \[
  Z M Z^T \weq \left(Z \Delta^{-1} \right) \Delta M \Delta \left( Z \Delta^{-1} \right)^{T}. 
 \]
 Notice that $Z \Delta^{-1} \in \R^{n \times k}$ has orthonormal rows (indeed $\left(Z \Delta^{-1}\right)^T \left( Z \Delta^{-1} \right) = \Delta^{-1} Z^T Z \Delta^{-1} = I_K$ because $Z^T Z = \Delta^2$). Let $O D O^T$ be an eigendecomposition of the symmetric real-valued matrix $\Delta M \Delta$ (that is, $D \in \R^{k \times k}$ is a diagonal matrix whose diagonal elements are in decreasing order (in absolute value) and $O \in \R^{k \times k}$ is an orthonormal matrix). Then 
 \[
 Z M Z^T \weq \left(Z \Delta^{-1} O \right) D \left( Z \Delta^{-1} O \right)^{T} 
 \]
 is an eigendecomposition of $Z M Z^T$ (because $Z \Delta^{-1} O$ is orthonormal). Hence, going back to~\eqref{eq:in_proof_eigendecomposition_expected}, we have $\bSigma = D $ and $ \bU = Z \Delta^{-1} O$ for some orthonormal matrix $O \in \R^{k\times k}$. 

 Because $\bU = Z \Delta^{-1} O$, two vertices are in the same cluster if and only if their corresponding rows in~$\bU$ are the same. In other words, the spectral embedding of the expected graph $\E W^{\mb}$ is condensed into $k$ points $(\Delta^{-1} O)_{1\cdot}, \cdots, (\Delta^{-1} O)_{k\cdot}$ of $\R^k$. Consequently, $k$-means on $\bU$ recovers the true clusters (up to a permutation). 

 Next, \cite[Lemma 5.3]{Lei_Rinaldo_2015} ensures that any $(1+\epsilon)$ solution $\tz$ of the $k$-means problem on $U \Sigma$ verifies 
\begin{align}
 \label{eq:in_proof_LeiRinaldoMeanField}
 \loss(z, \tz) \wle 4 (4+2\epsilon) \min_{ O \in \mathbf{O}_{k \times k} } \| U O - \bU \|_F^2
\end{align}
where $\mathbf{O}_{k \times k}$ denotes the group of orthonormal $k$-by-$k$ matrices and $\| \cdot \|_F$ is the Frobenius norm. 
Davis-Kahan's Theorem \cite[Theorem~2]{yu2015useful} ensures the existence of an orthogonal matrix $O \in \mathbf{O}_{k\times k}$ such that 
\begin{align}
\label{eq:in_proof_DavisKahan}
 \| U O - \bU \|_F 
 & \wle 2^{3/2} k^{1/2} \frac{ \| W^{\mb} - \E W^{\mb} \| }{ |\bsigma_{k}| }, 
\end{align}
where $\| \cdot \|$ denotes the matrix operator norm. 

  Let us now establish an expression of $|\bsigma_k|$. Observe firstly that $\Delta M \Delta$ and $M \Delta^2$ have the same eigenvalues.\footnote{Let $A, B$ be two symmetric matrices of the same size. Let $\lambda$ be an eigenvalue of $ABA$, with corresponding eigenvector $x$. Multiplying both sides of $ABA x = \lambda x$ by $A$, we get $A^2By = \lambda y$ with $y = Ax$.} From Lemma~\ref{lemma:expectedWeight_metricBackbone}, we have 
 \begin{align*}
   \frac{1}{2\tau_{\max}^2} (\Lambda \odot B)_{ab} \frac{\log^2 n}{n^2 \rho_n} 
   \wle m_{ab} 
   \wle \frac{1}{2\tau_{\min}^2} (\Lambda \odot B)_{ab} \frac{\log^2 n}{n^2 \rho_n}. 
 \end{align*}
 The definition of $T = \left[ \Lambda \odot B\right] \diag(\pi)$ in~\eqref{eq:def_operator} and the fact that $\Delta^2 = n \diag(\pi)$ further imply that 
 \begin{align*}
  \frac{1}{2\tau_{\max}^2} T_{ab} \frac{\log^2 n}{n \rho_n} 
  \wle 
  \left( M \Delta^2 \right)_{ab}
  \wle \frac{1}{2\tau_{\min}^2} T_{ab} \frac{\log^2 n}{n \rho_n}. 
 \end{align*}
 Recall that $\theta = \frac{4}{\tau_{\min}} \frac{\log n}{ n \rho_n}$ and $\mu$ is the smallest (in absolute value) eigenvalue of $T$. Using the assumption $\tau_{\min} = \tau_{\max}$, we obtain that $M\Delta^2 =  \frac{\theta \log n}{8 \tau_{\min}} T$ and thus 
 \begin{align}
  \label{eq:in_proof_lowerBoundsigmak}
  \bsigma_k \weq \frac{ \mu }{ 8\tau_{\min} } \theta \log n.
 \end{align}
Therefore, combining~\eqref{eq:in_proof_LeiRinaldoMeanField},~\eqref{eq:in_proof_DavisKahan} and~\eqref{eq:in_proof_lowerBoundsigmak}, we obtain 
\begin{align}
\label{eq:in_proof_upper_bound_loss}
 \loss(z, \tz) \wle (4+2\epsilon) 2^{5} k \left( \frac{ 8 \tau_{\min} }{\mu} \cdot \frac{ \| W^{\mb} - \E W^{\mb} \| }{ \theta \log n } \right)^2. 
\end{align}

\paragraph{Core of the proof: concentration of $W^{\mb}$ around $\E W^{\mb}$.} We finish the proof by showing that $\| W^{\mb} - \E W^{\mb} \| = O\left( \theta \sqrt{ \log n } \right)$ whp. First, by a triangle inequality, we have 
\begin{align}
\label{eq:decomposition_concentrationWeightMatrix}
 \| W^{\mb} - \E W^{\mb} \| 
 \wle \| W^{\mb} - W^{\theta} - \E \left[ W^{\mb} - W^{\theta} \right] \| 
 %+ \| \E W^{\mb} - \E W^{\theta} \| 
 + \| W^{\theta} - \E W^{\theta} \|
\end{align}
 For ease of the exposition, we will upper-bound the term of~\eqref{eq:decomposition_concentrationWeightMatrix} in the following order: (i) the second term $\| W^{\theta} - \E W^{\theta} \|$, and then (ii) the first term $ \| W^{\mb} - W^{\theta} - \E \left[ W^{\mb} - W^{\theta} \right] \| $. 

(i) Let us first study $\| W^{\theta} - \E W^{\theta} \|$. The matrix $X = W^{\theta} / \theta$ is symmetric with zero-diagonal, whose entries $\{ X_{uv}, u < v \}$ are independent $[0,1]$-valued random variables. Moreover, Lemma~\ref{lemma:expectedWeight_thresholdGraph} shows that $\E \left[ X_{uv} \cond z_u = a, z_v = b \right] = p_{ab} \lambda_{ab} \theta = \Theta( \log n / n )$. Thus,~\cite[Theorem~5]{hajek2016achieving} ensures that for any $c>0$ there exists $c' > 0$ such that 
\begin{align}
\label{eq:in_proof_concentration_thresholdGraph}
 \pr \left( \| W^{\theta} - \E W^{\theta} \| \ge c' \, \theta \sqrt{\log n} \right) \wle n^{-c}.  
\end{align}

(ii) Next, let us study $ \| W^{\mb} - W^{\theta} - \E \left[ W^{\mb} - W^{\theta} \right] \|_2$. Denote $Y = - \left( W^{\mb} - W^{\theta} \right) / \theta $. 
By Lemma~\ref{lemma:max_cost} and our choice of $\theta$, we have $E^{\mb} \subseteq E^{\theta}$ (for $n$ large enough). Moreover, $ W_{uv}^{\theta} = W_{uv}^{\mb}$ for all $\{ u,v \} \in E^{\mb} \cap E^{\theta} = E^{\mb}$. Hence,  
\begin{align*}
 Y_{uv} \weq 
 \begin{cases}
  \frac{ W_{uv}^{\theta} }{ \theta } & \text{ if }  \{u,v\} \in E^{\theta} \backslash E^{\mb}, \\
  0 & \text{ otherwise,}
 \end{cases}
 \quad
 \text{ and }
 \quad 
 \E Y_{uv} \weq 
 \begin{cases}
  \frac{ \E W_{uv}^{\theta} }{ \theta } & \text{ if } \{u,v\} \in E^{\theta} \backslash E^{\mb}, \\
  0 & \text{ otherwise.} 
 \end{cases}
\end{align*}
We can thus rewrite the matrices $Y$ and $\E Y$ as $Y = R \odot W^{\theta} / \theta$ and $\E Y = R \odot \E W^{\theta} / \theta$, where $\odot$ denote the Hadamard (entry-wise) matrix product and $R \in \{0,1\}^{n \times n}$ is defined by 
\[
 R_{uv} \weq R_{vu} \weq 
 \begin{cases}
     1 & \text{ if } \{u,v\} \in E^{\theta} \backslash E^{\mb}, \\
     0 & \text{ otherwise.}
 \end{cases}
\]
Let $\cE_{c_1}$ be the event that all rows of $R$ have at most $ c_1 \log n $ non-zero entries (with $c_1>0$), \textit{i.e.,}
\begin{align*}
 \cE_{c_1} \weq \left\{ \forall u \in [n] \colon \sum_{v =1 }^n \1\{ R_{uv} \ne 0 \} \wle c_1 \log n \right\}.
\end{align*}
Because $E^{\mb} \subseteq E^{\theta}$, we have  
\begin{align*}
 \1 \{ R_{uv} \ne 0 \} 
 \weq R_{uv} 
 \weq \1\{ (u,v) \in E^{\theta} \backslash E^{\mb} \} 
 \wle \1\{ (u,v) \in E^{\theta} \}. 
\end{align*}
Thus, $\cE_{c_1,\theta} \subseteq \cE_{c_1}$, where 
\begin{align*}
 \cE_{c_1,\theta} \weq \left\{ \forall u \in [n] \colon \sum_{v =1 }^n \1\{ (u,v) \in E^{\theta} \} \wle c_1 \log n \right\}.
\end{align*}
Hence, $\pr(\cE_{c_1} ) \ge \pr (\cE_{c_1,\theta})$. Recall also that $\1\{ (u,v) \in E^{\theta} \} = \1 \{ c(u,v) \le \theta \}$. Thus, by Lemma~\ref{lemma:degree_thresholdGraph}, for any $c_0>0$ there exists a $c_1>0$ such that 
 \begin{align}
\label{eq:in_proof_high_probability_E}
   \pr \left( \cE_{c_1} \right) 
   \wge 1 - n^{-c_0}.  
 \end{align}
 
 Conditioned on the high probability event $\cE_{c_1}$, every row of the matrix $R$ has at most $c_1 \log n$ non-zero elements. Moreover, $W/\theta$ is symmetric and has bounded (hence sub-gaussian) entries. Therefore 
 \cite[Corollary~3.9]{bandeira2016} ensures the existence of constants $C, C'>0$ such that  
 \begin{align*}
   \pr \left( R \odot \left( W^{\theta} - \E W^{\theta} \right) / \theta \wge C \sqrt{\log n} + t \Big| \cE_{c_1} \right) \wle e^{-C' t^2}. 
 \end{align*}
 Using $t = C'' \sqrt{\log n}$, and because $Y - \E Y = R \odot \left( W^{\theta} - \E W^{\theta} \right) / \theta$, we obtain the following statement: for any $c>0$, there exists $c'>0$ such that 
 \begin{align*}
   \pr \left( \| Y - \E Y \|_2 \wge c' \sqrt{\log n} \, \big| \, \cE_{c_1} \right) \wle n^{-c}. 
 \end{align*}
 Using~\eqref{eq:in_proof_high_probability_E}, we finally obtain
 \begin{align}
 \label{eq:in_proof_concentration_mb_minus_thresholdGraph}
  \pr \left( \| W^{\mb} - W^{\theta} - \E \left[ W^{\mb} - W^{\theta} \right] \| \wge c' \theta \sqrt{\log n} \right) \wle 2 n^{-c}. 
 \end{align}

\paragraph{Conclusion} Using~\eqref{eq:decomposition_concentrationWeightMatrix},~\eqref{eq:in_proof_concentration_thresholdGraph} and~\eqref{eq:in_proof_concentration_mb_minus_thresholdGraph}, we have 
 \begin{align}
  \label{eq:in_proof_concentration_metricBackbone}
  \| W^{\mb} - \E W^{\mb} \| 
  \wle 2 c' \theta \sqrt{\log n},
 \end{align}
 with probability at least $1-3n^{-c}$. 
 The proof ends by combining~\eqref{eq:in_proof_concentration_metricBackbone} with~\eqref{eq:in_proof_upper_bound_loss}. 
\end{proof}

\subsection{Additional Lemmas}

\begin{lemma}
\label{lemma:max_cost}
Let $(z,G) \wsim \wSBM(n, \pi, p, F)$ and suppose that Assumptions~\ref{assumption:wSBM_scaling} and~\ref{assumption:regularity_fs} hold. Then, 
\begin{align*}
  \max_{u,v \in [n]} C(u,v) \wle ( 1+o(1) ) \frac{ 3 }{ \tau_{\min} } \frac{ \log n }{ n \rho_n }. 
 \end{align*}
\end{lemma}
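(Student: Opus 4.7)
The plan is to bound the weighted diameter $\max_{u,v} C(u,v)$ by the worst weighted eccentricity $\max_u C_u(n-1)$ of the graph, and to control the latter via a stochastic-domination argument, the R\'enyi representation of the order statistics of i.i.d.\ exponentials, and a union bound over the source vertex. I would first prove the bound for exponentially distributed costs and then transfer it to general cost distributions satisfying Assumption~\ref{assumption:regularity_fs} using the coupling of Section~\ref{subsection:proof_general_weights}.

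Fix $u \in [n]$ and consider the FPP started at $u$ exactly as in the proof of Proposition~\ref{prop:weight_hop_count_shortest_paths}. Conditionally on the exploration history, the increment $C_u(q+1) - C_u(q)$ is exponentially distributed with rate equal to the weighted cut between $\cB(u, C_u(q))$ and its complement. Concentration of the binomial cut sizes together with the inequality $\sum_\ell |U_\ell(q)| \sum_{\ell'} \pi_{\ell'} \lambda_{\ell\ell'} B_{\ell\ell'} \ge \tau_{\min} q$ (Equation~\eqref{eq:in_proof_bounding_exponential_rate}) show that, in the bulk of the exploration, this rate is at least $(1-o(1))\tau_{\min} q n\rho_n$, which yields
\[
  C_u(n-1) \ \le_{\mathrm{st}} \ \frac{1+o(1)}{\tau_{\min}\, n\rho_n} \sum_{q=1}^{n-1} E_q, \qquad E_q \sim \Exp(q) \text{ independent.}
\]
The R\'enyi representation of the spacings between order statistics identifies $\sum_{q=1}^{n-1} E_q$ in law with $\max_{1 \le i \le n-1} Y_i$, where $Y_i$ are i.i.d.\ $\Exp(1)$. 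A direct union bound gives $\pr(\max_i Y_i \ge (3+\epsilon)\log n) \le n\, e^{-(3+\epsilon)\log n} = n^{-(2+\epsilon)}$, and a further union bound over $u \in [n]$ yields $\pr\!\bigl(\max_u C_u(n-1) \ge (1+o(1))\tfrac{3+\epsilon}{\tau_{\min}}\tfrac{\log n}{n\rho_n}\bigr) \le n^{-(1+\epsilon)} = o(1)$. Since the graph $G$ is connected whp under Assumption~\ref{assumption:wSBM_scaling}, one has $\max_{u,v} C(u,v) = \max_u C_u(n-1)$, and the claim for exponential costs follows by taking $\epsilon > 0$ arbitrarily small.

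\textbf{Main obstacle.} The delicate point is justifying the rate lower bound $R_q \ge (1-o(1))\tau_{\min} q n\rho_n$ when $q$ is close to $n$: the binomial cut sizes $S_{\ell\ell'}(u, q)$ lose their concentration once the unexplored component in community $\ell'$ is small, and the rate derivation from the proof of Proposition~\ref{prop:weight_hop_count_shortest_paths} breaks down. I would address this by splitting the FPP at $q^* = n - n^{1-\delta}$, applying the sharp bound of Proposition~\ref{prop:weight_hop_count_shortest_paths} on the bulk regime $q \le q^*$ and the coarser cut estimate $R_q \ge \Omega\!\bigl((n-q) n\rho_n\bigr)$ on the short tail $q > q^*$, which by a second R\'enyi/maximum-of-exponentials argument contributes at most $O\!\bigl((1-\delta)\log n /(n\rho_n)\bigr)$ and is absorbed in the $(1+o(1))$ factor by letting $\delta$ tend to $1$. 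The extension to general continuous cost distributions is then routine along the lines of Section~\ref{subsection:proof_general_weights}: every edge on a shortest path has cost $o(1)$, Assumption~\ref{assumption:regularity_fs} makes $F_{ab}$ locally linear with slope $\lambda_{ab}$ near zero, and the exponential bound transfers with only a $1\pm o(1)$ multiplicative loss.
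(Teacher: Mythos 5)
Your strategy is genuinely different from the paper's: you bound the weighted diameter by the worst one-sided eccentricity $\max_u C_u(n-1)$ and run the exploration from a single source all the way to the end, whereas the paper never explores past $\sqrt{n\log n}$ steps — it bounds $C(u,v)\le C_u(\sqrt{n\log n})+C_v(\sqrt{n\log n})$ via the whp intersection of the two FPP balls, and then applies a Chernoff bound and a union bound over $u$ to $C_u(\sqrt{n\log n})$ alone, each side contributing $\tfrac{(3/2+\epsilon)\log n}{\tau_{\min} n\rho_n}$. Your route forces you through the regime $q=\Theta(n)$, and that is exactly where the argument breaks.

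The gap is twofold. First, the rate bound $\theta_q\ge(1-o(1))\tau_{\min}\,q\,n\rho_n$ that you import from Equations~\eqref{eq:in_proof_concentration_Sll} and~\eqref{eq:in_proof_bounding_exponential_rate} is only valid for $q=o(n)$: it uses $n_{\ell'}-|U_{\ell'}(q)|=(1-o(1))n_{\ell'}$, which fails once a constant fraction of the graph is explored. For $q=\alpha n$ the cut rate degrades by a factor of order $(n-q)/n$, and its leading constant is no longer $\tau_{\min}$ but depends on how the explored set is distributed across communities (in the worst case it is governed by $\min_{\ell\ell'}\lambda_{\ell\ell'}B_{\ell\ell'}$ unless you additionally prove the exploration covers the blocks proportionally). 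Hence your "bulk" $q\le n-n^{1-\delta}$ is not covered by the cited estimate and the stochastic domination $C_u(n-1)\le_{\mathrm{st}}\tfrac{1+o(1)}{\tau_{\min}n\rho_n}\sum_q E_q$ is false. Second, the tail $q>q^*$ is not absorbed into the $(1+o(1))$: to survive the union bound over the $n$ source vertices, each per-vertex tail bound must fail with probability $o(n^{-1})$, and $\pr\bigl(\max_{i\le n^{1-\delta}}Y_i\ge t\bigr)\le n^{1-\delta}e^{-t}$ forces $t\ge(2-\delta+\epsilon)\log n\ge(1+\epsilon)\log n$; the bound $O\bigl((1-\delta)\log n/(n\rho_n)\bigr)$ you quote is only the typical value, while the uniform whp contribution is $\Theta(\log n/(n\rho_n))$ — the same order as the main term. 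With a correct budget (roughly $2\log n$ for the bulk from the union bound, at least $\log n$ for the tail, plus an uncontrolled middle regime) your decomposition does not deliver the stated constant $3/\tau_{\min}$. The fix is essentially to adopt the paper's two-sided construction, which stops each exploration at $\sqrt{n\log n}$ where the $o(n)$ rate estimate holds.
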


\begin{proof}
 We proceed as in the proof of Proposition~\ref{prop:weight_hop_count_shortest_paths} by firstly assuming that $F_{ab} \sim \Exp(\lambda_{ab}$). The extension to more general weight distributions follows from the same coupling argument presented in Section~\ref{subsection:proof_general_weights} and is thus omitted. 

 We use the same notations as in the proof of Proposition~\ref{prop:weight_hop_count_shortest_paths}. Recall in particular the notations $C_u(q)$, $U_{\ell}(q)$ and $X_{\ell \ell'}(u,q)$. Because we established that $\cB(u, C_u(\sqrt{n \log n})) \cap \cB(v, C_v(\sqrt{n \log n})) \ne \emptyset$ whp, we have 
 \begin{align*}
    C(u,v) \wle C_u(\sqrt{n \log n}) + C_v(\sqrt{n \log n}).
 \end{align*}
 Therefore,
 \begin{align*}
    \max_{u,v} C(u,v) & \wle \max_{u,v} \left( C_u(\sqrt{n \log n}) + C_v(\sqrt{n \log n}) \right) \\
    & \weq 2 \max_u C_u(\sqrt{n \log n}). 
 \end{align*}
 Using a union bound, we obtain for any $t > 0$,  
 \begin{align}
    \pr \left( \max_u C_u(\sqrt{n \log n}) \ge t \right) 
    & \wle \sum_{u=1}^n \pr \left( C_u(\sqrt{n \log n}) \ge t \right) \nonumber \\
    & \weq n \pr \left( C_{u^*}(\sqrt{n \log n}) \ge t \right),  \label{eq:in_proof_unionBound_maxCost}
 \end{align}
 where $u^*$ denotes an arbitrary vertex chosen in $[n]$. For any $s > 0$, we have (by Chernoff bounds) 
 \begin{align}
 \label{eq:in_proof_chernoff}
  \pr \left( C_{u^*}(\sqrt{n \log n}) \ge t \right) 
  \wle e^{- s t} \E \left[ e^{ s C_{u^*}(\sqrt{n \log n})} \right]. 
 \end{align}
 Recall from~\eqref{eq:key_prop_fpp} that $C_{u^*}(q+1) - C_{u^*}(q) \cond S(u,\cdot)$ are i.i.d. $\Exp(\theta_q)$ with $\theta_q = \sum_{\ell, \ell'} \lambda_{\ell \ell'} S_{\ell \ell'}(u,q)$. As for $X\sim \Exp(\theta)$ we have $\E[e^{s X}] = \theta / (\theta-s) = 1 + s / (\theta - s)$, 
 \begin{align*}
  \E \left[ e^{ s C_{u^*}(\sqrt{n \log n})} \cond S(u,\cdot) \right] 
  & \weq \prod_{q=1}^{\sqrt{n \log n} } \E \left[ e^{ s \left( C_{u^*}(q+1) - C_{u^*}(q) \right) } \cond S(u,\cdot) \right] \\
  %& \weq \prod_{q=1}^{ \sqrt{n \log n} } \frac{ \theta_q }{ \theta_q - t } \\
  & \weq \prod_{q=1}^{ \sqrt{n \log n} } \left( 1 + \frac{ s }{ \theta_q - s } \right) \\
  & \weq \exp \left( \sum_{q=1}^{\sqrt{n \log n } } \log \left( 1 + \frac{ s }{ \theta_q - s } \right) \right) \\
  & \wle \exp \left( \sum_{q=1}^{\sqrt{n \log n } } \frac{ s }{ \theta_q - s } \right). 
 \end{align*} 
 Let $0 < \epsilon < 1/6$. From Equations~\eqref{eq:in_proof_concentration_Sll} and~\eqref{eq:in_proof_bounding_exponential_rate}, we have whp (for $n$ large enough) 
 \begin{align*}
   \theta_q \wge ( 1 - \epsilon ) n \rho_n \tau_{\min} q.
 \end{align*}
 Choose $s = ( 1 - 2 \epsilon ) n \rho_n \tau_{\min}$ and let $\alpha_n$ be a diverging sequence verifying $\alpha_n = o(\sqrt{n \log n})$ to be chosen later. 
 We split the sum as follows 
 \[
 \sum_{q=1}^{\sqrt{n \log n } } \frac{ s }{ \theta_q - s }
 \weq 
 \sum_{ q = 1 }^{ \alpha_n } \frac{ s }{ \theta_q - s }
 +
 \sum_{ q = \alpha_n+1 }^{\sqrt{n \log n } } \frac{ s }{ \theta_q - s }. 
 \]
 We have for any $1 \le q \le \alpha_n$ 
 \begin{align*}
   \theta_q - s \wge n \rho_n \tau_{\min} q \left( 1-\epsilon - \frac{ 1 - 2 \epsilon }{q} \right) \wge  n \rho_n \tau_{\min} q \epsilon, 
 \end{align*}
 while for $\alpha_n \ge q$ we have $(1-2\epsilon)\alpha_n \le \epsilon$ (for $n$ large enough) and thus
 \begin{align*}
   \theta_q - s 
   & \wge n \rho_n \tau_{\min} q \left( 1-\epsilon - \frac{ 1 - 2 \epsilon }{q} \right) 
   \wge n \rho_n \tau_{\min} q \left( 1-\epsilon - \frac{1-2\epsilon}{\alpha_n} \right) 
   \wge  n \rho_n \tau_{\min} q (1-2\epsilon), 
 \end{align*}
 and thus 
 \begin{align*}
   \sum_{q=1}^{\sqrt{n \log n } } \frac{ s }{ \theta_q - s } 
   \wle (1-2\epsilon) \left[ \frac1\epsilon \sum_{q=1}^{\alpha_n} \frac1q 
   +
   \frac{1}{ 1-2\epsilon } \sum_{q=\alpha_n+1}^{\sqrt{n \log n} } \frac1q \right]. 
 \end{align*}
 Recalling that $\sum_{q=m+1}^{n} q^{-1} \le \log( n / m )$ for any $m \ge 1$ (by integration) further leads to 
 \begin{align*}
   \sum_{q=1}^{\sqrt{n \log n } } \frac{ s }{ \theta_q - s } 
   \wle 
   (1-2\epsilon) \left[ \frac1\epsilon (1+\log \alpha_n) + 
   \frac{1}{ 1-2\epsilon } \log \left( \frac{\sqrt{n\log n}}{\alpha_n} \right) \right]. 
 \end{align*}
 Therefore, 
 \begin{align*}
  \E \left[ e^{ s C_{u^*}(\sqrt{n \log n})} \cond S(u,\cdot) \right] 
  \wle \exp \left( (1-2\epsilon) \left[ \frac1\epsilon (1+\log \alpha_n) 
  + \frac{1}{ 1-2\epsilon } \log \left( \frac{\sqrt{n\log n}}{\alpha_n} \right) \right] \right). 
 \end{align*}
 Because the right-hand side of this last upper bound does not depend on $S(u,\cdot)$, we have by the total law of probabilities, 
 \begin{align}
 \label{eq:in_proof_bound_mgf}
  \E \left[ e^{ s C_{u^*}(\sqrt{n \log n})} \right] 
  \wle \exp \left( (1-2\epsilon) \left[ \frac1\epsilon (1+\log \alpha_n) 
  + \frac{1}{ 1-2\epsilon } \log \left( \frac{\sqrt{n\log n}}{\alpha_n} \right) \right] \right). 
 \end{align}
 We choose $\alpha_n = \lceil \sqrt{\log n} \rceil$. Because $\alpha_n = \omega(1)$, $\alpha_n = o(\log n)$, and $\epsilon$ is fixed, we have for $n$ large enough, 
 \begin{align*}
    \frac{1}{\epsilon } (1 + \log \alpha_n) & \wle \epsilon \frac{\log n}{2} \\
    \frac{1}{ 1-2\epsilon } \log \left( \frac{\sqrt{n\log n}}{\alpha_n} \right) & \wle (1+3\epsilon) \frac{\log n}{2},
\end{align*}
where the second inequality uses $\epsilon < 1/6$. 
Thus, we can rewrite the inequality~\eqref{eq:in_proof_bound_mgf} (for $n$ large enough) as
 \begin{align*}
  \E \left[ e^{ s C_{u^*}(\sqrt{n \log n})} \right] 
  \wle \exp \left( (1-2\epsilon)(1+4\epsilon) \frac{\log n}{2} \right) 
  \wle \exp \left( (1+2\epsilon) \frac{\log n}{2} \right).  
 \end{align*} 
 Let $t = (3+\epsilon') \log n / (2 n \rho_n \tau_{\min})$ with $\epsilon' = 15\epsilon$. Recall that $s = ( 1 - 2\epsilon ) n \rho_n \tau_{\min}$. From~\eqref{eq:in_proof_chernoff} we obtain 
 \begin{align*}
  \pr \left( C_{u^*}(\sqrt{n \log n}) \ge t \right) 
  & \wle e^{ - (1-2\epsilon) (3+\epsilon') \frac{\log n} {2 } + (1+2\epsilon) \frac{\log n}{2} } \\
  & \weq e^{ - \frac{\log n}{2} \left( 2 - 8\epsilon - 2 \epsilon \epsilon' + \epsilon' \right) } \\
  & \wle e^{ - \log n \left( 1+\epsilon \right)}, 
 \end{align*}
 where in the last line we uses $ - 8\epsilon - 2 \epsilon \epsilon' + \epsilon' = 6\epsilon-30\epsilon^2 \ge \epsilon $ (using $\epsilon < 1/6$). 
  Because $1+\epsilon > 1$, we have $\pr \left( C_{u^*}(\sqrt{n \log n}) \ge t \right)  = o(n^{-1})$, and we conclude using~\eqref{eq:in_proof_unionBound_maxCost}. 
\end{proof}

\begin{lemma}
\label{lemma:degree_thresholdGraph}
Let $(z,G) \sim \wSBM(n, \pi, p, F)$ and suppose Assumptions~\ref{assumption:wSBM_scaling} and~\ref{assumption:regularity_fs} hold. 
Let $c_0, c_1 > 0$ be two constants (independent of $n$), and let $\theta = c_0 \frac{\log n}{ n \rho_n }$. Denote 
 \[
  \cE_{c_1,\theta} \weq \left\{ \forall u \in [n] \colon \sum_{v =1 }^n \1\{ c(u,v) \le \theta \} \wle \left( 4 B_{\max} \lambda_{\max} c_0 + c_1 \right) \log n \right\}.
 \]
 Then, $\pr \left( \cE_{c_1,\theta} \right) \wge 1 - n^{-c_1}$. 
\end{lemma}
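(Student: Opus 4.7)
The plan is a routine Chernoff-plus-union-bound argument, exploiting that Assumption~\ref{assumption:regularity_fs} makes the weight distribution behave linearly at the origin and Assumption~\ref{assumption:wSBM_scaling} makes the threshold $\tau$ tend to $0$.

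\textbf{Step 1: Reduce to a sum of independent Bernoullis.} Fix a vertex $u \in [n]$ and let
\[
 D_u \weq \sum_{v \ne u} \1\{ c(u,v) \le \tau \}.
\]
Conditionally on the cluster labels $z$, the variables $\1\{ c(u,v) \le \tau \}$ are independent Bernoullis with means $\E[\1\{c(u,v) \le \tau\} \mid z] = p_{z_u z_v} F_{z_u z_v}(\tau)$, because $(u,v) \in E$ with probability $p_{z_u z_v}$, and, conditionally on being present, its cost is drawn from $F_{z_u z_v}$.

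\textbf{Step 2: Bound the expectation.} Since $\rho_n \gg n^{-1}\log n$, we have $\tau = c_0 \log n / (n \rho_n) = o(1)$. Assumption~\ref{assumption:regularity_fs} then gives $F_{ab}(\tau) = \lambda_{ab} \tau (1 + o(1))$ uniformly in $a,b$. Using $p_{ab} = B_{ab} \rho_n \le B_{\max} \rho_n$ and $\lambda_{ab} \le \lambda_{\max}$,
\[
 \E [ D_u \mid z ] \ \le\ (n-1) B_{\max} \lambda_{\max} \rho_n \tau (1+o(1)) \ \le\ (1+o(1))\, B_{\max} \lambda_{\max} c_0 \log n.
\]
Call this upper bound $\mu_n$; it is at most $(1+o(1)) B_{\max} \lambda_{\max} c_0 \log n$, hence in particular at most, say, $2 B_{\max} \lambda_{\max} c_0 \log n$ for $n$ large.

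\textbf{Step 3: Chernoff tail.} Apply the standard Chernoff bound for sums of independent Bernoullis, $\Pr(X \ge a) \le \exp\!\bigl(-a \log(a/\mu) + a - \mu\bigr)$ valid for $a \ge \mu$, with $a = (4 B_{\max}\lambda_{\max} c_0 + c_1) \log n$ and $\mu = \mu_n$. Since $a/\mu_n \ge 4 + o(1) > e$, the exponent is at least
\[
 a \log(a/\mu_n) - a + \mu_n \ \ge\ (1 + c_1) \log n
\]
for all $n$ large enough (the slack comes from the factor $4$ in front of $B_{\max}\lambda_{\max} c_0$ combined with $\log 4 - 1 > 0$; when $c_1$ is small compared to $B_{\max}\lambda_{\max} c_0$ the slack is supplied by the $4B_{\max}\lambda_{\max} c_0$ term, and when $c_1$ is large it is supplied by the $c_1 \log (c_1/B_{\max}\lambda_{\max}c_0)$ factor). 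Consequently $\Pr(D_u \ge (4 B_{\max}\lambda_{\max} c_0 + c_1) \log n \mid z) \le n^{-(1+c_1)}$.

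\textbf{Step 4: Union bound and deconditioning.} Take a union bound over the $n$ choices of $u$; this gives
\[
 \Pr\!\bigl( \cE_{c_1, \tau}^c \,\big|\, z \bigr) \ \le\ n \cdot n^{-(1+c_1)} \ =\ n^{-c_1}.
\]
Because the right-hand side does not depend on $z$, integrating out $z$ yields $\Pr(\cE_{c_1,\tau}) \ge 1 - n^{-c_1}$, proving the lemma.

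\textbf{Expected obstacle.} There is no substantive difficulty: the estimate $F_{ab}(\tau) \sim \lambda_{ab}\tau$ follows directly from Assumption~\ref{assumption:regularity_fs}, and the rest is bookkeeping. The only mildly delicate point is verifying that the particular constant $4B_{\max}\lambda_{\max} c_0 + c_1$ in the statement really yields failure probability $n^{-(1+c_1)}$ per vertex via the Chernoff bound for every $c_1 > 0$; this is where the factor $4$ is used, and one may have to split into the regimes ``$c_1$ small compared to $B_{\max}\lambda_{\max}c_0$'' and ``$c_1$ large'' to check the exponent cleanly.
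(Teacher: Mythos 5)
Your strategy---condition on $z$, view the thresholded degree $D_u$ as a sum of independent Bernoullis with mean at most $(1+o(1))B_{\max}\lambda_{\max}c_0\log n$, apply Chernoff, and union-bound over $u$---is exactly the paper's. Steps 1, 2 and 4 are fine. The gap sits precisely at the point you flagged as ``mildly delicate'' in Step 3: the inequality $a\log(a/\mu_n)-a+\mu_n\ge(1+c_1)\log n$ with $a=(4K+c_1)\log n$ and $\mu_n=(1+o(1))K\log n$, where $K=B_{\max}\lambda_{\max}c_0$, does \emph{not} hold for all admissible constants. Dividing by $\log n$, you need $(4K+c_1)\log(4+c_1/K)-3K-c_1\ge 1+c_1$; taking $K=0.2$ and $c_1=0.1$, the left-hand side is $0.9\log(4.5)-0.7\approx 0.65$ while the right-hand side is $1.1$. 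The per-vertex tail is then only $n^{-0.65}$, and the union bound over $n$ vertices yields nothing. The factor $4$ supplies enough slack only when $K\gtrsim 0.4$, and the $\log(c_1/K)$ divergence only rescues you when $K$ is much smaller than $c_1$; in between, the argument fails.

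This is not merely a slip in your bookkeeping: in the homogeneous case the Chernoff exponent is essentially tight for binomials at this scale, so for those parameters roughly $n^{0.35}$ vertices exceed the threshold $(4K+c_1)\log n$ and the event $\cE_{c_1,\tau}$ fails whp --- the lemma with that exact constant cannot be proved. The paper's own proof in fact establishes the statement for the larger threshold $(4B_{\max}\lambda_{\max}c_0+1+c_1)\log n$: it evaluates the moment generating function at the fixed parameter $t=1$, giving $\E[e^{D_u}]\le\exp(np_{\tau}(e-1))\le\exp(4B_{\max}\lambda_{\max}c_0\log n)$ with $np_{\tau}=2B_{\max}\lambda_{\max}c_0\log n$, so that $\pr\left(D_u\ge(4B_{\max}\lambda_{\max}c_0+1+c_1)\log n\right)\le n^{-(1+c_1)}$; the extra additive $+1$ in the constant is exactly what makes the exponent close. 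Your proof goes through verbatim once you target that weaker threshold (or simply run Chernoff at $t=1$ as the paper does), and that weaker form is all that is used downstream in the proof of Theorem~\ref{thm:proof_spectralClustering_consistency}, where only the order $O(\log n)$ of the degrees matters.
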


\begin{proof}
 For $u \ne v \in [n]$ such that $z_u = a$ and $z_v = b$, we have (recall that the probability of having an edge $(u,v) \in E$ is $p_{ab}$, and the cost of this edge is sampled from $F_{ab}$)
 \begin{align*}
  \pr \left( c(u,v) \le \theta \cond z_u = a, z_v = b \right) 
  \weq p_{ab} F_{ab}(\theta).
 \end{align*}
 Recall also that, by Assumption~\ref{assumption:wSBM_scaling}, we have $p_{ab} = B_{ab} \rho_n$ and by Assumption~\ref{assumption:wSBM_scaling} $F_{ab}(x) = (1+o(1)) \lambda_{ab} x$. 
 Let $\epsilon > 0$. Using the law of total probability, we obtain   
 \begin{align*}
  \pr \left( c(u,v) \le \theta \right) \wle  (1+o(1)) B_{\max} \lambda_{\max} \theta \rho_n,
 \end{align*}
 where $B_{\max} = \max_{1 \le a, b \le k} B_{ab}$ and $\lambda_{\max} = \max_{1 \le a,b \le k} \lambda_{ab}$. Thus, for $n$ large enough we have 
 \begin{align*}
  \pr \left( c(u,v) \le \theta \right) \wle  2 B_{\max} \lambda_{\max} \theta \rho_n, 
 \end{align*}

 For ease of notations, let $p_{\theta} = 2 B_{\max} \lambda_{\max} \theta \rho_n$ and $\tc = 4 B_{\max} \lambda_{\max} c_0 + 1 + c_1$. 
 
 For $u \in [n]$ we denote $d_u = \sum_{v =1 }^n \1\{ c(u,v) \le \theta \}$. We have 
 \begin{align}
 \label{eq:in_proof_union_bound}
   \pr \left( \max_{u \in [n] } d_u \ge \tc \log n \right) \le \sum_{u \in [n]} \pr \left( d_u \ge \tc \log n \right). 
 \end{align}
 Moreover, for any $t > 0$, we have 
 \begin{align*}
  \E \left[ e^{ t d_u} \right] 
  \wle \left( 1 - p_{\theta} + p_{\theta} e^t \right)^n 
  \wle e^{ n p_{\theta} \left( e^t - 1 \right) },
 \end{align*}
 where the second inequality used $\log(1+x) \le x$. Let $\tc > 0$. Using Chernoff's bounds, we have 
 \begin{align*}
   \pr \left( d_u \ge \tc \log n \right) 
   \wle e^{ - t \tc \log n + n p_{\theta} (e^t-1) }. 
 \end{align*}
 Let $t=1$. Because $e^1 - 1 \le 2$ and $np_{\theta} = 2 B_{\max} \lambda_{\max} c_0 \log n$, we obtain 
  \begin{align*}
    \pr \left( d_u \ge \tc \log n \right) 
    \wle e^{ - \log n \left( \tc \log n - 4 B_{\max} \lambda_{\max} c_0 \right) }. 
 \end{align*}
 We finish the proof by letting $\tc = 4 B_{\max} \lambda_{\max} c_0 + 1 + c_1$ and using~\eqref{eq:in_proof_union_bound}. 
\end{proof}

\begin{lemma}
\label{lemma:expectedWeight_thresholdGraph}
 Let $(z,G) \sim \wSBM(n, \pi, p, F)$ and suppose Assumptions~\ref{assumption:wSBM_scaling} and~\ref{assumption:regularity_fs} hold. Let $\theta = c \frac{\log n}{n \rho_n}$. Denote $W$ the weighted adjacency matrix of $G$, and $W^{\theta}$ the weighted adjacency matrix of the threshold graph. 
 Let $u, v \in [n]$ such that $z_u = a$ and $z_v = b$. We have $\E \left[W_{uv}^{\theta} \right] = p_{ab} \lambda_{ab} \theta^2/2$. 
\end{lemma}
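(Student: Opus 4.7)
The plan is to unfold the definition of $W^{\tau}_{uv}$ as an edge indicator times a truncated cost, split the expectation using the conditional independence built into the wSBM, and then apply the local linearization of $F_{ab}$ near $0$ guaranteed by Assumption~\ref{assumption:regularity_fs}.

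First I would write, for $u \ne v$ with $z_u = a$, $z_v = b$,
\[
W^{\tau}_{uv} \weq c(u,v) \, \1\{ \{u,v\} \in E \} \, \1\{ c(u,v) \le \tau \},
\]
where we follow the convention that $c(u,v) = 0$ when the edge is absent so that the product is well defined. Conditionally on $z_u = a$, $z_v = b$, the edge indicator $\1\{\{u,v\} \in E\}$ is Bernoulli with parameter $p_{ab}$, and, \emph{given} that the edge is present, $c(u,v) \sim F_{ab}$ independently of everything else. Therefore
\[
\E\bigl[W^{\tau}_{uv}\bigr] \weq p_{ab}\, \E\bigl[c(u,v)\,\1\{c(u,v)\le\tau\} \,\big|\, \{u,v\}\in E \bigr]
\weq p_{ab} \int_0^{\tau} x\, f_{ab}(x)\, \diff x,
\]
where $f_{ab} = F_{ab}'$ is the density.

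Next I would observe that $\tau = c \log n/(n\rho_n) = o(1)$, since Assumption~\ref{assumption:wSBM_scaling} gives $n\rho_n \gg \log n$. By Assumption~\ref{assumption:regularity_fs}, $f_{ab}$ is continuous at $0$ with $f_{ab}(0) = \lambda_{ab}$, so for every $\epsilon > 0$ there is $n_0$ such that $|f_{ab}(x) - \lambda_{ab}| \le \epsilon \lambda_{ab}$ for all $x \in [0,\tau]$ and $n \ge n_0$. Plugging this into the integral,
\[
\int_0^{\tau} x\, f_{ab}(x)\, \diff x \weq \bigl(1+o(1)\bigr)\, \lambda_{ab}\int_0^{\tau} x\, \diff x \weq \bigl(1+o(1)\bigr)\,\frac{\lambda_{ab}\,\tau^2}{2},
\]
which yields $\E[W^{\tau}_{uv}] = (1+o(1))\, p_{ab}\lambda_{ab}\tau^2/2$, matching the claim (the $(1+o(1))$ factor is implicit in the paper's formula since the same factor is kept elsewhere in the proof of Theorem~\ref{thm:proof_spectralClustering_consistency}).

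There is essentially no obstacle here: the only point worth being careful about is that the factorization of the expectation into $p_{ab}$ times a truncated moment of $F_{ab}$ uses the precise conditional-independence structure of the wSBM (edge presence is independent of its cost given the community labels), and that the approximation $\int_0^\tau x f_{ab}(x)\diff x \sim \lambda_{ab}\tau^2/2$ relies on $\tau \to 0$, which follows from the sparsity scaling of Assumption~\ref{assumption:wSBM_scaling}.
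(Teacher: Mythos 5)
Your proof is correct and follows essentially the same route as the paper's: factor the expectation into $p_{ab}$ times the truncated first moment of $F_{ab}$, then use the local behaviour $f_{ab}(x)\approx\lambda_{ab}$ near $0$ together with $\tau=o(1)$ to evaluate $\int_0^{\tau}x\,f_{ab}(x)\,\diff x$ as $(1+o(1))\lambda_{ab}\tau^2/2$. You are also right that the equality in the lemma statement should carry an implicit $(1+o(1))$ factor, which is how it is used in the proof of Theorem~\ref{thm:proof_spectralClustering_consistency}.
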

\begin{proof}
 We have $W_{uv}^{\theta} = W_{uv} \1 \{ W_{uv} \le \theta \}$. Moreover, $W_{uv} \cond W_{uv} \ne 0$ is sampled from $F_{ab}$. Thus,  
 \begin{align*}
  \E \left[ W_{uv}^{\theta} \right] 
  & \weq p_{ab} \E \left[ W_{uv} \1 \{ W_{uv} \le \theta \} \right]. 
 \end{align*}
 Denote also $f_{ab}$ the pdf of $F_{ab}$. Recall by Assumption~\ref{assumption:regularity_fs} that $f_{ab}(x) = \lambda_{ab}+ O(x)$. Because $\theta \ll 1$ we have 
 \begin{align*}
  \E \left[ W_{uv}^{\theta} \right] 
  & \weq p_{ab} \lambda_{ab} \frac{\theta^2}{2}. 
 \end{align*}
\end{proof}

\begin{lemma}
\label{lemma:expectedWeight_metricBackbone}
 Let $(z,G) \sim \wSBM(n, \pi, p, F)$ and suppose Assumptions~\ref{assumption:wSBM_scaling} and~\ref{assumption:regularity_fs} hold. Let $\theta = c \frac{\log n}{n \rho_n}$. Denote $W$ the weighted adjacency matrix of $G$, and $W^{\mb}$ the weighted adjacency matrix of the metric backbone of $G$. 
 Let $u, v \in [n]$ such that $z_u = a$ and $z_v = b$. We have 
 \[
   \frac{1}{2\tau_{\max}^2} (\Lambda \odot B)_{ab} \frac{\log^2 n}{n^2 \rho_n} 
   \wle \E \left[ W^{\mb}_{uv} \right] 
   \wle \frac{1}{2\tau_{\min}^2} (\Lambda \odot B)_{ab} \frac{\log^2 n}{n^2 \rho_n}. 
 \]
\end{lemma}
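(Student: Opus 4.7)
The plan is to reduce $\E[W^{\mb}_{uv}]$ to an expectation of the square of a shortest-path cost, and then apply Proposition~\ref{prop:weight_hop_count_shortest_paths}. Let $C^{\setminus uv}(u,v)$ denote the cost of the shortest $u$-$v$ path in $G$ that avoids the direct edge $(u,v)$. The edge $(u,v)$ lies in $E^{\mb}$ if and only if $(u,v) \in E$ and $c(u,v) \le C^{\setminus uv}(u,v)$; moreover, conditional on $(u,v) \in E$, the cost $c(u,v)$ is distributed as $F_{ab}$ and is independent of $C^{\setminus uv}(u,v)$ (which depends only on the other edges). A standard Fubini argument therefore gives
\[
\E\!\left[W^{\mb}_{uv}\right] \;=\; p_{ab}\, \E\!\left[\int_0^{C^{\setminus uv}(u,v)} x\, f_{ab}(x)\, dx\right].
\]

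By Assumption~\ref{assumption:regularity_fs}, $f_{ab}(x) = \lambda_{ab} + o(1)$ as $x \to 0$, so $\int_0^y x\, f_{ab}(x)\, dx = (1+o(1))\lambda_{ab}\, y^2/2$ uniformly for $y \to 0$. I would then apply Proposition~\ref{prop:weight_hop_count_shortest_paths} to $C^{\setminus uv}$ rather than to $C$: the upper-bound construction in that proof routes $u \to v$ via an intermediate vertex at which two FPPs collide and never uses the direct edge, while the lower bound carries over trivially since $C(u,v) \le C^{\setminus uv}(u,v)$. Hence whp
\[
\frac{1+o(1)}{\tau_{\max}}\frac{\log n}{n\rho_n} \;\le\; C^{\setminus uv}(u,v) \;\le\; \frac{1+o(1)}{\tau_{\min}}\frac{\log n}{n\rho_n}.
\]

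Combining these ingredients yields $\E[W^{\mb}_{uv}] = (1+o(1))\, p_{ab}\, \lambda_{ab}\, \E[(C^{\setminus uv}(u,v))^2]/2$, and substituting $p_{ab} = B_{ab}\rho_n$, $(\Lambda \odot B)_{ab} = \lambda_{ab} B_{ab}$, and the above two-sided bound on $C^{\setminus uv}$ delivers the claim. The main technical difficulty is promoting a whp bound on $C^{\setminus uv}$ to a bound on $\E[(C^{\setminus uv})^2]$: the rare event $\{C^{\setminus uv} > \beta\}$ with $\beta = (1+o(1))\log n/(\tau_{\min} n \rho_n)$ could a priori contribute a non-negligible amount to the integral. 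I would control this via Lemma~\ref{lemma:max_cost}, whose Chernoff-based proof actually yields a polynomially small tail estimate $\pr(C^{\setminus uv}(u,v) > K \log n/(n\rho_n)) = O(n^{-\alpha})$ with $\alpha$ arbitrarily large (by tuning the Chernoff parameter $\epsilon$), which together with the deterministic bound $\int_0^\infty x f_{ab}(x)\, dx = \E[X_{ab}] < \infty$ shows that the tail contribution is $o(\log^2 n/(n^2\rho_n))$ and is absorbed into the $(1+o(1))$ factor of the main term.
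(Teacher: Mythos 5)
Your proof takes essentially the same route as the paper's: both reduce the event $\{(u,v)\in E^{\mb}\}$ to a threshold on the direct cost $c(u,v)$ at the shortest-path scale supplied by Proposition~\ref{prop:weight_hop_count_shortest_paths}, and then compute the truncated first moment of $F_{ab}$ using $F_{ab}'(0)=\lambda_{ab}$ (the paper packages this last computation as Lemma~\ref{lemma:expectedWeight_thresholdGraph}). The one substantive difference is that you keep the random threshold $C^{\setminus uv}(u,v)$ and explicitly address the passage from a whp bound to a bound in expectation, a step the paper silently elides by treating the whp inclusion of events as if it were deterministic; your version is the more careful one. A small caveat on your tail argument: Assumption~\ref{assumption:regularity_fs} constrains $F_{ab}$ only near $0$ and does not guarantee $\E[X_{ab}]<\infty$, so bounding the contribution of the rare event $\{C^{\setminus uv}>K\log n/(n\rho_n)\}$ by $\E[X_{ab}]\cdot O(n^{-\alpha})$ requires a finite-mean (or truncation) assumption that is not stated in the paper either — this is a gap in the lemma as written rather than in your argument specifically.
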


\begin{proof}
Let $u, v \in [n]$ be two arbitrary vertices such that $z_u = a$ and $z_v=b$. 
We have $W^{\mb}_{uv} = c(u,v) \1 \{ (u,v) \in E^{\mb} \}$. Recall from Proposition~\ref{prop:weight_hop_count_shortest_paths} that 
\[
\left( \tau_{\max} \right)^{-1} \wle \frac{n \rho_n}{\log n } C(u,v) \wle \left( \tau_{\min} \right)^{-1}
\]
whp, where $C(u,v)$ is the cost of the shortest path from $u$ to $v$. Therefore, 
\begin{align*}
 \E \left[ W^{\mb}_{uv} \right] 
 & \weq \E \left[ c(u,v) \cap (u,v) \in E^{\mb} \right] \\
 & \wle \E \left[ c(u,v) \1 \left\{ c(u,v) \le \frac{1}{\tau_{\min}} \frac{\log n}{n \rho_n} \right\} \right] \\ 
 & \weq p_{ab} \lambda_{ab} \frac12 \left( \frac{\log n}{ \tau_{\min} n \rho_n } \right)^2,
 \end{align*}
 where we used Lemma~\ref{lemma:expectedWeight_thresholdGraph} with $\theta = \frac{1}{\tau_{\min}} \frac{\log n}{n \rho_n} $. Similarly, 
 \begin{align*}
   \E \left[ W^{\mb}_{uv} \right] 
   & \wge \E \left[ c(u,v) \1\left\{ c(u,v) \le \frac{1}{ \tau_{\max} } \frac{\log n}{n \rho_n} \right\} \right] \\ 
   & \weq p_{ab} \lambda_{ab} \frac12 \left( \frac{\log n}{ \tau_{\max} n \rho_n } \right)^2.
 \end{align*}
 Recalling that $p_{ab} = B_{ab} \rho_n$, we obtain 
 \begin{align*}
   \frac{1}{2\tau_{\max}^2} (\Lambda \odot B)_{ab} \frac{\log^2 n}{n^2 \rho_n} 
   \wle \E \left[ W^{\mb}_{uv} \right] 
   \wle \frac{1}{2\tau_{\min}^2} (\Lambda \odot B)_{ab} \frac{\log^2 n}{n^2 \rho_n}. 
 \end{align*}
\end{proof}

\section{Additional Numerical Experiments}
\label{appendix:additional_experiments}

\subsection{Additional Material for Section~\ref{sec:numerics}}
\label{appendix_additional_sec:numerics}

\begin{table}[!ht]
\centering
 \begin{tabular}{lrrrc} \hline  \toprule
 Data set & $n$ & $|E|$ & $k$ & $\bd$ \\ 
 \midrule
 High school & 327 & 5,818 & 9 & 36 \\
 Primary school & 242 & 8,317 & 11 & 69 \\
% Political Blogs & 1,222 & 16,714 & 2 \\
 DBLP & 3,762  & 17,587  & 193 & 9 \\
 Amazon & 8,035 & 183,663  & 195 & 46 \\
 \bottomrule \\
 \end{tabular}
 \caption{Dimensions of networks considered. \textit{High school} and \textit{primary school} data sets are taken from \url{http://www.sociopatterns.org}, where the weights represent the number of interactions between students. \textit{DBLP} and \textit{Amazon} data sets are unweighted and taken from \url{https://snap.stanford.edu/data/}. $\bd$ denotes the average unweighted degree, \textit{i.e.,} $2|E| / n$. }
 \label{tab:data}
\end{table}

\begin{table}[!ht]
\centering
 \begin{tabular}{lcccc} \hline  \toprule
 %& \multicolumn{2}{c|}{Backbone} 
 %& \multicolumn{2}{|c}{Sparsifier} \\ 
 Data set & High school & Primary school & DBLP & Amazon \\ 
 \midrule
 $\frac{|E^{\mb}|}{|E|}$ & 0.29 & 0.34 & 0.82 & 0.72 \\
 \bottomrule \\
 \end{tabular}
 \caption{Ratio of edges kept by the \textit{metric backbone} in various real graphs.}
 \label{tab:stat_remaining_edges}
\end{table}

\begin{comment}
\begin{table}[!ht]
\centering
 \begin{tabular}{lrr} \hline  \toprule
 %& \multicolumn{2}{c|}{Backbone} 
 %& \multicolumn{2}{|c}{Sparsifier} \\ 
 Data set & $\frac{|E^{\mb}|}{|E|}$ &  $\frac{|E^{\sss}|}{|E|}$ \\ 
 \midrule
 High school & 0.29 & 0.45 %0.183 
 \\
 Primary school & 0.34 & 0.36 %0.131 
 \\
% Political Blogs & 0.547 & 0.293 \\
 DBLP & 0.82 & 0.53 %0.528 
 \\
 Amazon & 0.72 & 0.28 \\
 \bottomrule 
 \end{tabular}
 \caption{Ratio of edges kept by the \textit{metric backbone} ($\mb$) and spectral sparsification ($\sss$) in various real graphs.}
 \label{tab:stat_remaining_edges}
\end{table}
\end{comment}

We highlight the difference between the metric backbone and the spectral sparsification of the \textit{Primary school} data set in Figure~\ref{fig:primarySchool_backboneVsSpectralSparsification}. Whereas Figure~\ref{fig:primarySchool_backboneVsThreshold} highlights a clear pattern between the metric backbone and the threshold graph, finding from Figure~\ref{fig:primarySchool_backboneVsSpectralSparsification} any pattern in the edges present in the metric backbone but not in the spectral sparsification (and vice-versa) is much harder. Although both sparsified graphs preserve the community structure very well, they do so by keeping a very different set of edges.

\begin{figure}[!ht]
 \centering
\begin{subfigure}[b]{0.42\textwidth}
     \includegraphics[width=\textwidth]{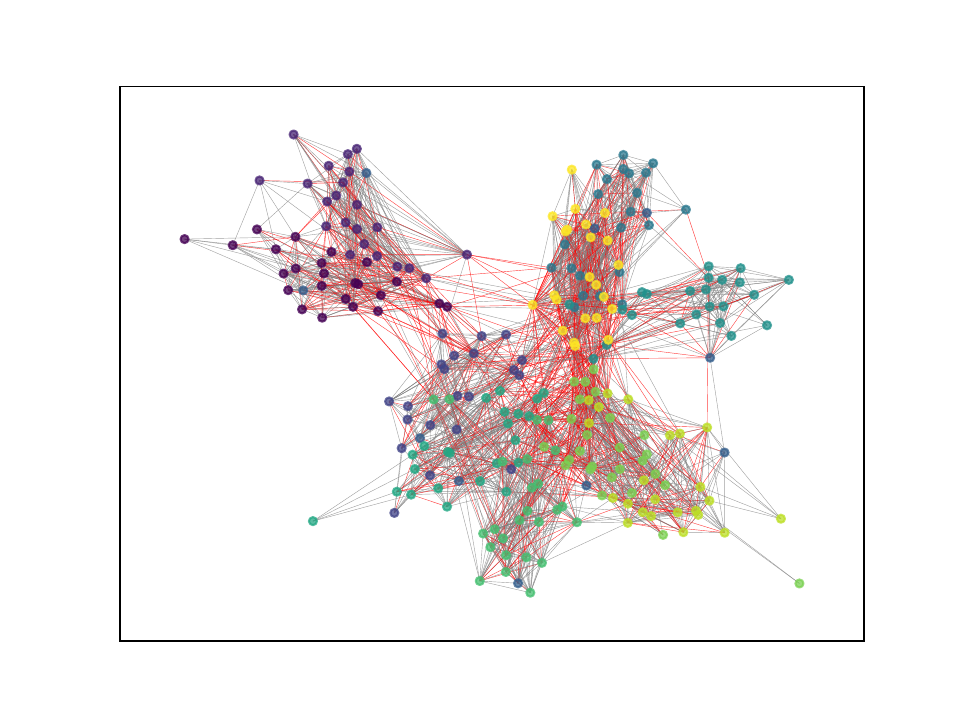}
     \caption{\textit{Metric Backbone}}
     \label{fig:primarySchool_backbone_vs_Sp}
 \end{subfigure} 
 \hfil
 \begin{subfigure}[b]{0.42\textwidth}
     \includegraphics[width=\textwidth]{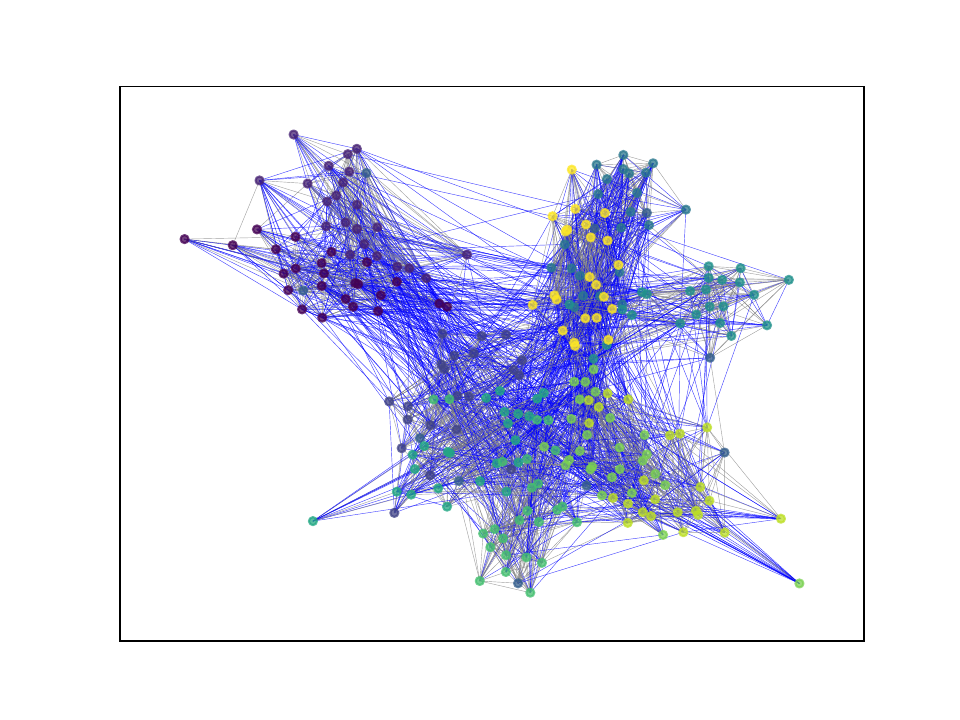}
     \caption{\textit{Spectral Sparsifier}}
     \label{fig:primarySchool_Sp}
 \end{subfigure}
 \caption{Graphs obtained from \textit{Primary school} data set, after taking the metric backbone (Figure~\ref{fig:primarySchool_backbone_vs_Sp}) and after \textit{spectral sparsification} (Figure~\ref{fig:primarySchool_Sp}), drawn using the same layout. Vertex colors represent the true clusters. 
 Edges present in the metric backbone but not in the spectral sparsifier graph are highlighted in \textcolor{red}{red}. Similarly, edges present in the spectral sparsifier graph, but not in the metric backbone, are highlighted in \textcolor{blue}{blue}. }
\label{fig:primarySchool_backboneVsSpectralSparsification}
\end{figure}

We preprocessed the DBLP and Amazon datasets by only keeping components where the second eigenvalue of the normalized Laplacian of each component that is kept is larger than 0.1. This ensures that we do not consider communities that are not well-connected.

 We additionally show in Figure~\ref{fig:additional_expe} the performance of the three clustering algorithms (Bayesian, Leiden, and spectral clustering) on four other data sets. As we do not have any ground truth for these additional data sets, we computed the ARI between the clustering obtained on the original network and on the sparsified network. We observe that the largest ARI is almost always obtained for the metric backbone. This shows that the metric backbone is the sparsification that best preserves the community structures of the original networks. 

\begin{figure}[!ht]
 \centering
 \begin{subfigure}[b]{0.3\textwidth}
     \includegraphics[width=\textwidth]{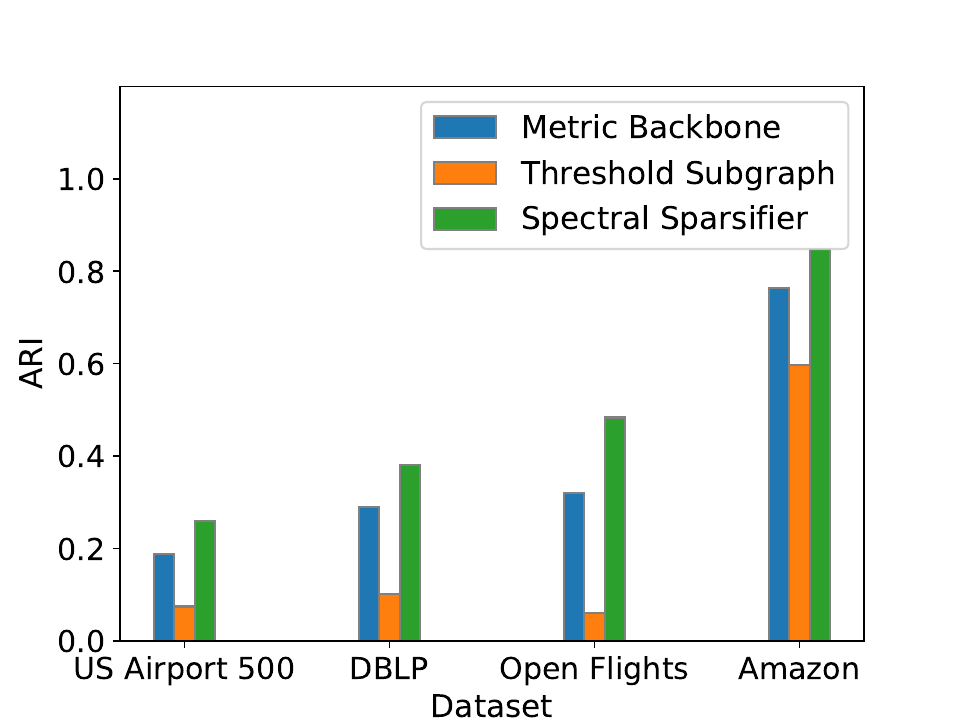}
     \caption{\textit{Bayesian MCMC}}
 \end{subfigure} \hfil
\begin{subfigure}[b]{0.3\textwidth}
     \includegraphics[width=\textwidth]{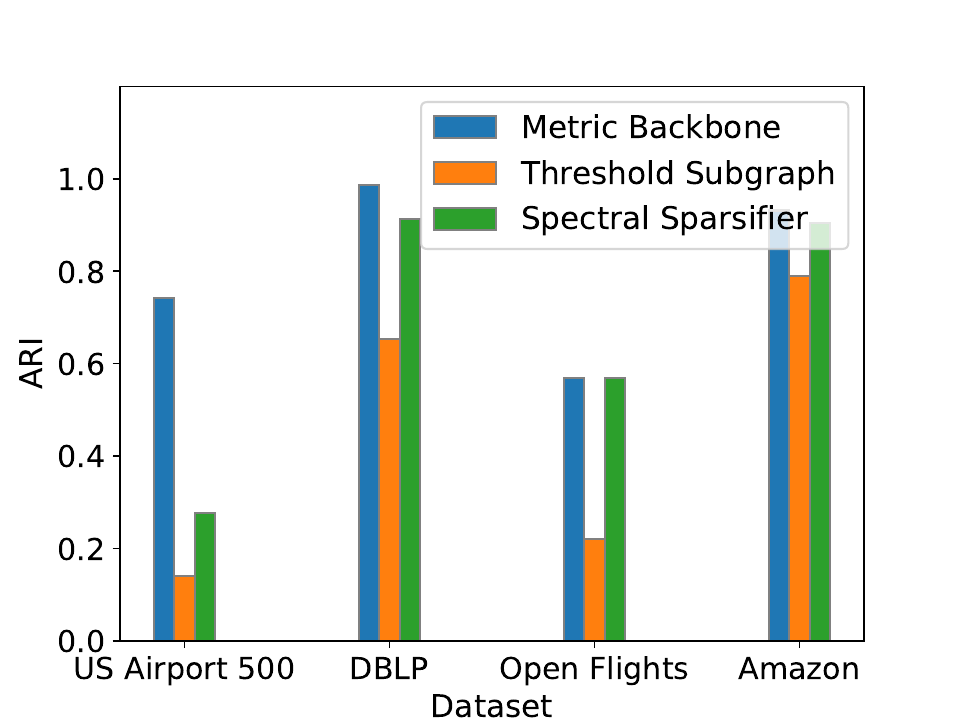}
     \caption{\textit{Leiden}}
 \end{subfigure} \hfil
\begin{subfigure}[b]{0.3\textwidth}
     \includegraphics[width=\textwidth]{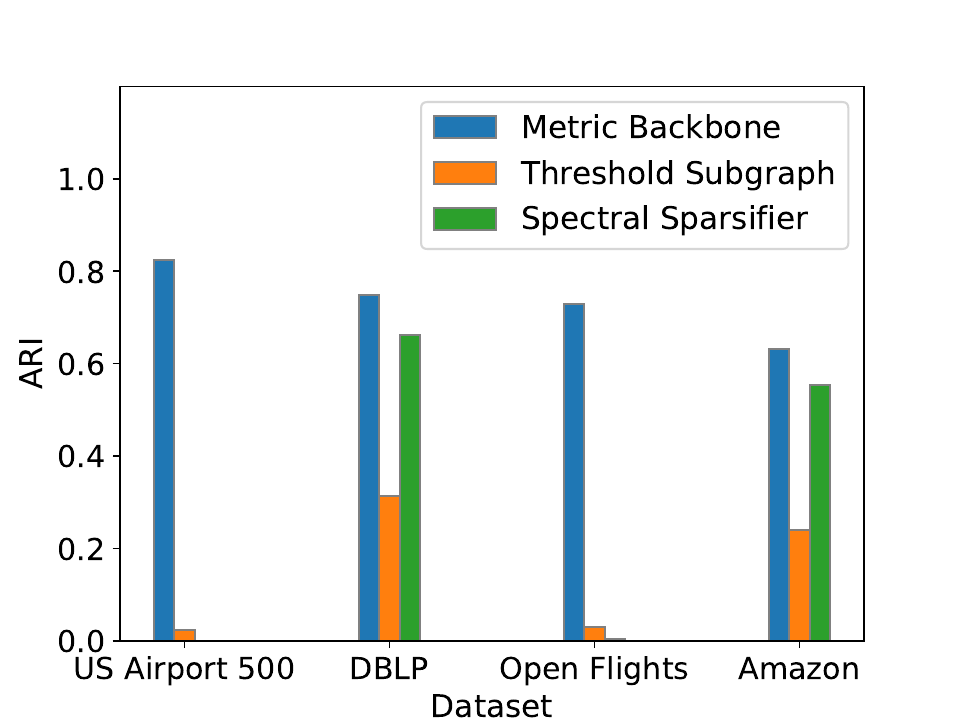}
     \caption{\textit{Spectral Clustering}}
 \end{subfigure} 
 \caption{Effect of sparsification on the performance of clustering algorithms compared to the results of the same clustering algorithms ran on the original graph.
 }
 \label{fig:additional_expe}
\end{figure}

\subsection{Additional Material for Section~\ref{sec:applications}}
\label{appendix_additional_sec:application}

\paragraph{Number of edges in the $q$-nearest neighbor graph and its metric backbone} We show in Figure~\ref{fig:remainingEdges_GaussianSimilarity} the number of edges in $G_q$ and $G_q^{\mb}$ (we do not show the number of edges of the spectral sparsified $G^{\sss}_q$ because the hyperparameters of $G_q^{\sss}$ are chosen such that $G_q^{\mb}$ and $G_q^{\sss}$ have the same number of edges). 

\begin{figure}[!ht]
 \centering
 \begin{subfigure}[b]{0.32\textwidth}
     \includegraphics[width=\textwidth]{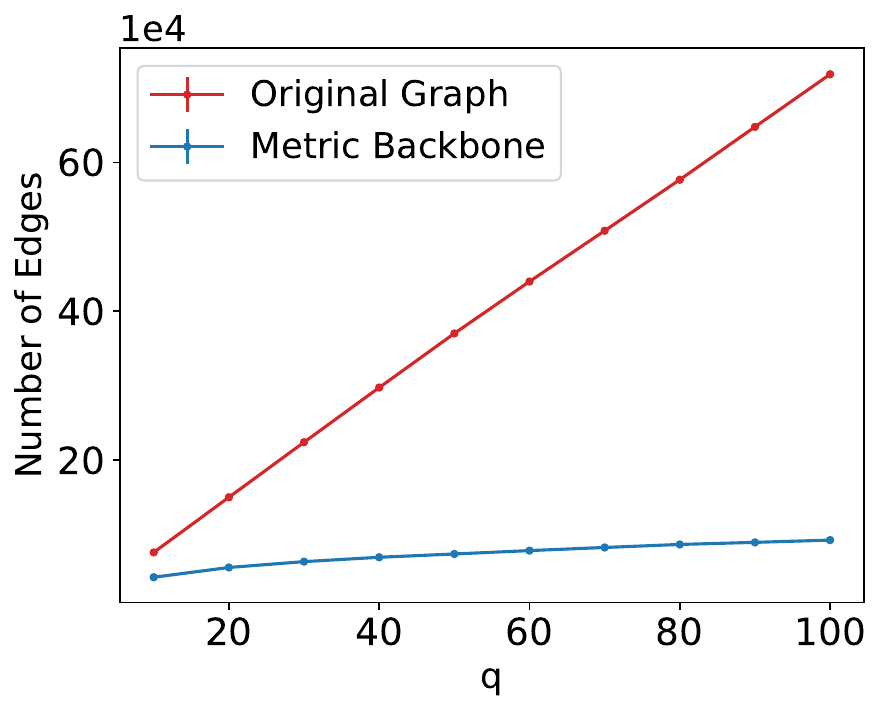}
     \caption{MNIST}
 \end{subfigure} 
  \begin{subfigure}[b]{0.32\textwidth}
     \includegraphics[width=\textwidth]{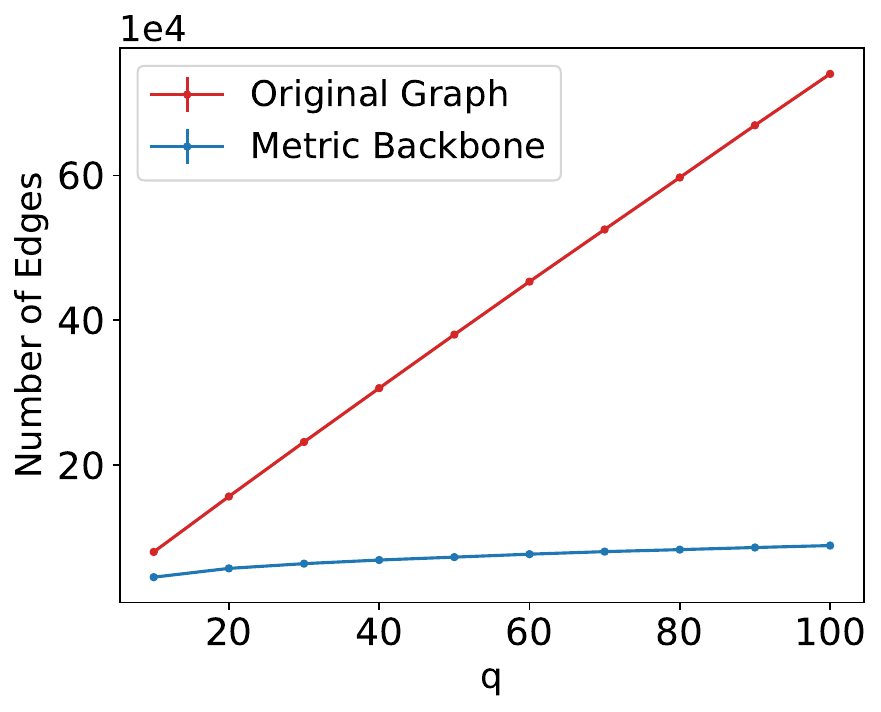}
     \caption{FashionMNIST}
 \end{subfigure} 
   \begin{subfigure}[b]{0.32\textwidth}
     \includegraphics[width=\textwidth]{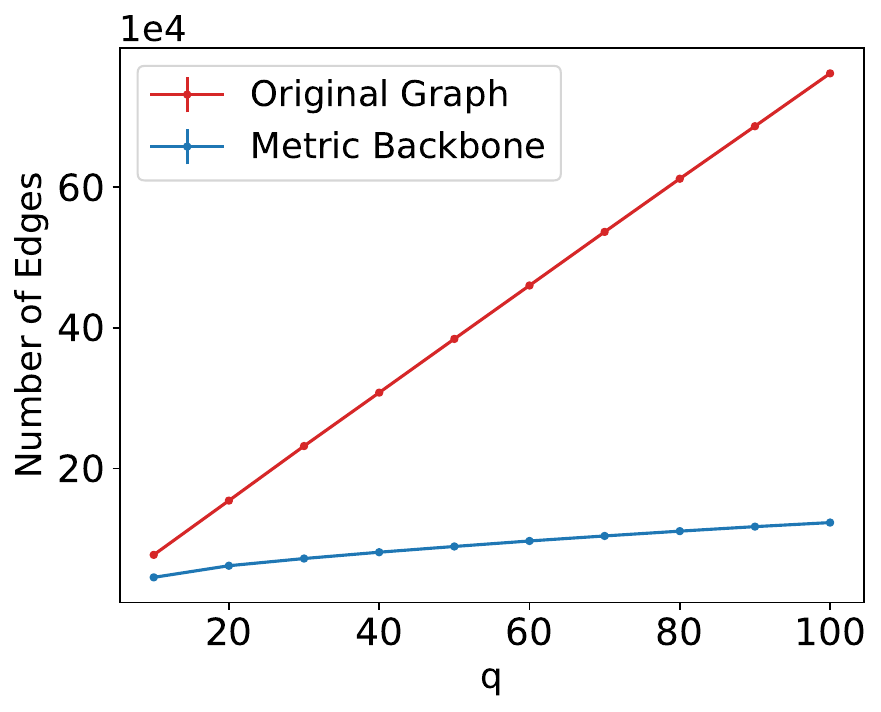}
     \caption{HAR}
 \end{subfigure} 
 \caption{Number of edges in the $q$-nearest neighbor graph (built using Gaussian kernel similarity) and its metric backbone.}
  \label{fig:remainingEdges_GaussianSimilarity}
\end{figure}

\paragraph{Exploring another similarity measure}

 Finally, we provide additional evidence supporting the robustness of the metric backbone with respect to the number of nearest neighbors $q$ by providing the results of another clustering algorithm, namely \new{threshold-based subspace clustering} (TSC). \textit{TSC} is a subspace clustering algorithm and was shown to succeed under very general conditions on the high-dimensional data set to be clustered~\cite{heckel2015robust}. TSC performs spectral clustering on a $q$-nearest neighbors graph obtained using the following similarity measure 
 \[ 
 \similarity(x_u, x_v) = \exp\left( - 2 \arccos \left( <x_u, x_v > \right) \right).
 \]
 Because in Section~\ref{sec:applications} we showed the performance of spectral clustering using the Gaussian kernel similarity, these additional experiments also show the robustness of our results with respect to the choice of the similarity measure. Figure~\ref{fig:subspace_clustering_results} shows the performance of TSC on $G_q$ and its sparsified versions $G_q^{\mb}$ and $G_q^{\sss}$, and Figure~\ref{fig:remainingEdges_dotProduct} shows the number of edges (before and after sparsification). 

\begin{figure}[!ht]
 \centering
 \begin{subfigure}[b]{0.32\textwidth}
     \includegraphics[width=1\textwidth]{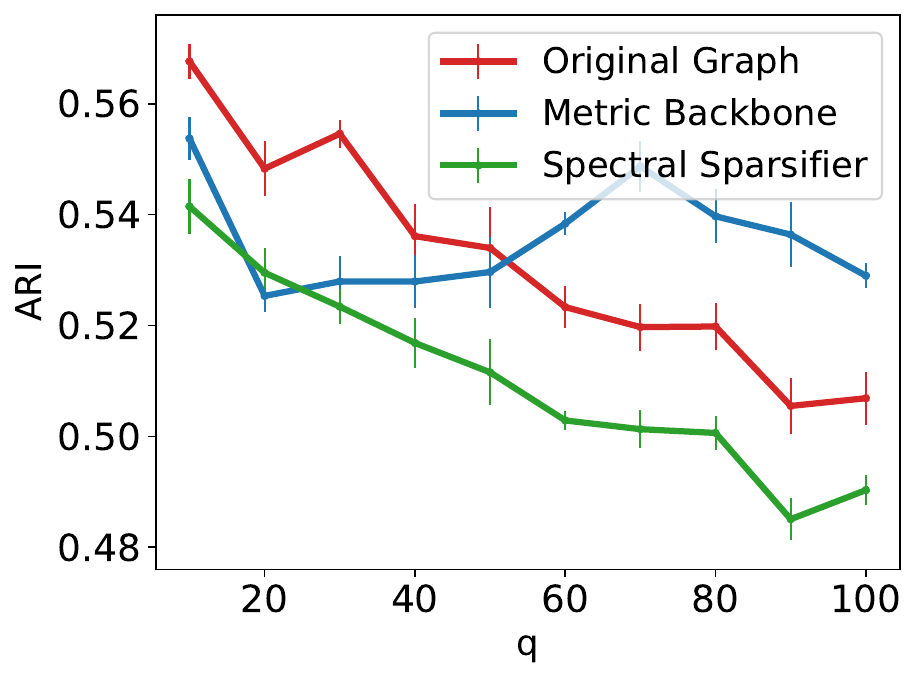}
     \caption{\textit{MNIST}}
     \label{fig:subspace_clustering_results_mnist}
 \end{subfigure} \hfil
\begin{subfigure}[b]{0.32\textwidth}
     \includegraphics[width=1\textwidth]{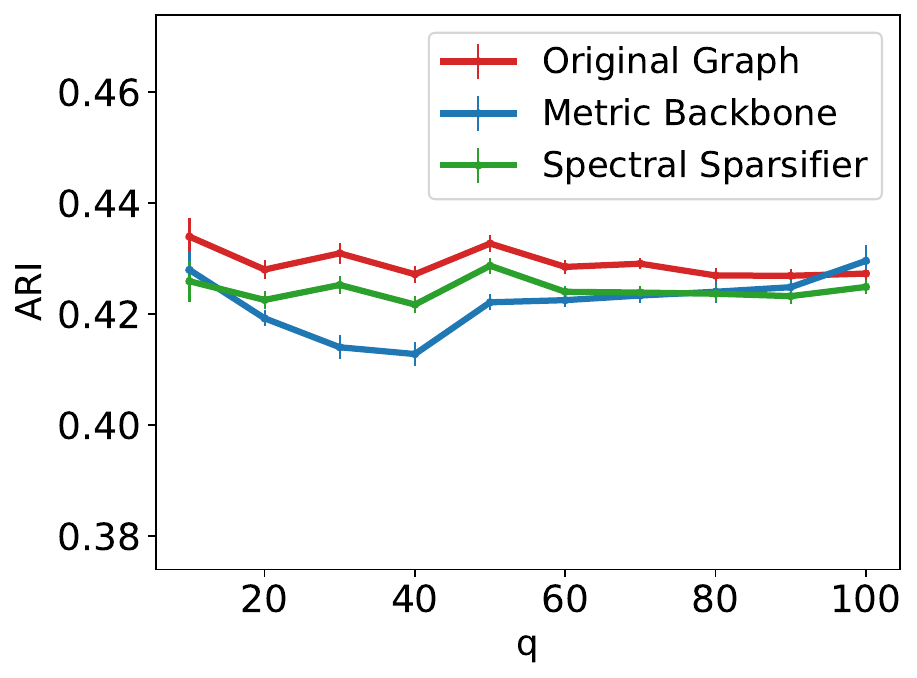}
     \caption{\textit{Fashion MNIST}}
     \label{fig:subspace_clustering_results_Fashionmnist}
 \end{subfigure} \hfil
   \begin{subfigure}[b]{0.32\textwidth}
     \includegraphics[width=1\textwidth]{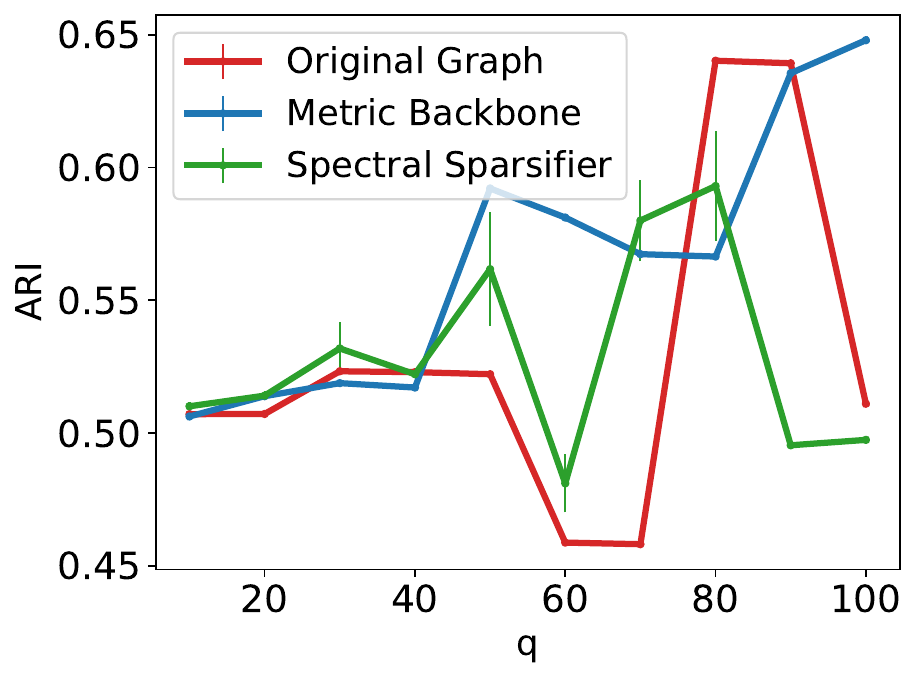}
     \caption{\textit{HAR}}
     \label{fig:subspace_clustering_results_HAR}
 \end{subfigure} 
 \caption{Performance of TSC on subsets of MNIST and FashionMNIST datasets, and on the HAR dataset. The plots show the ARI averaged over 10 trials, and the error bars show the standard error. 
%As proposed by~\cite{heckel2015robust}, we use the similarity $\similarity(x_u, x_v) = \exp\left( - 2 \arccos \left( <x_u, x_v > \right) \right)$.
}
\label{fig:subspace_clustering_results}
\end{figure}

\begin{figure}[!ht]
 \centering
 \begin{subfigure}[b]{0.32\textwidth}
     \includegraphics[width=\textwidth]{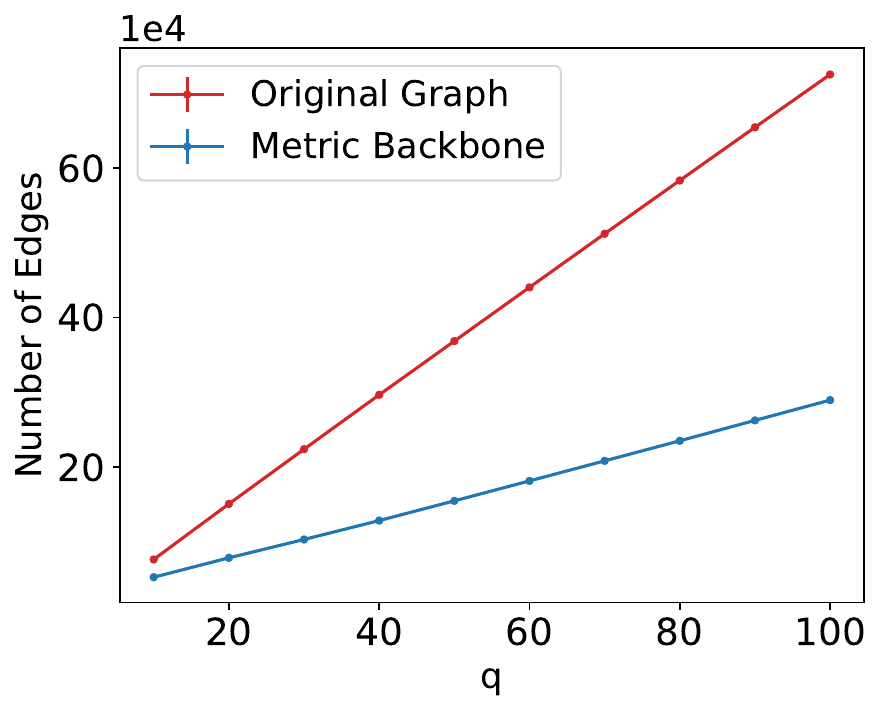}
     \caption{MNIST}
 \end{subfigure}
 \hfill 
 \begin{subfigure}[b]{0.32\textwidth}
     \includegraphics[width=\textwidth]{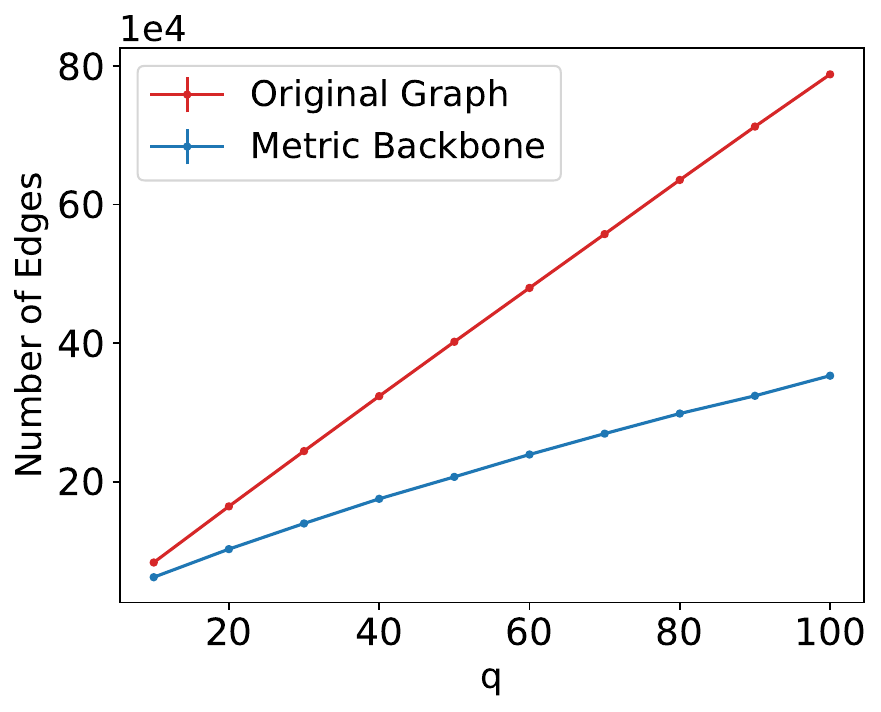}
     \caption{FashionMNIST}
 \end{subfigure} 
 \hfill
 \begin{subfigure}[b]{0.32\textwidth}
     \includegraphics[width=\textwidth]{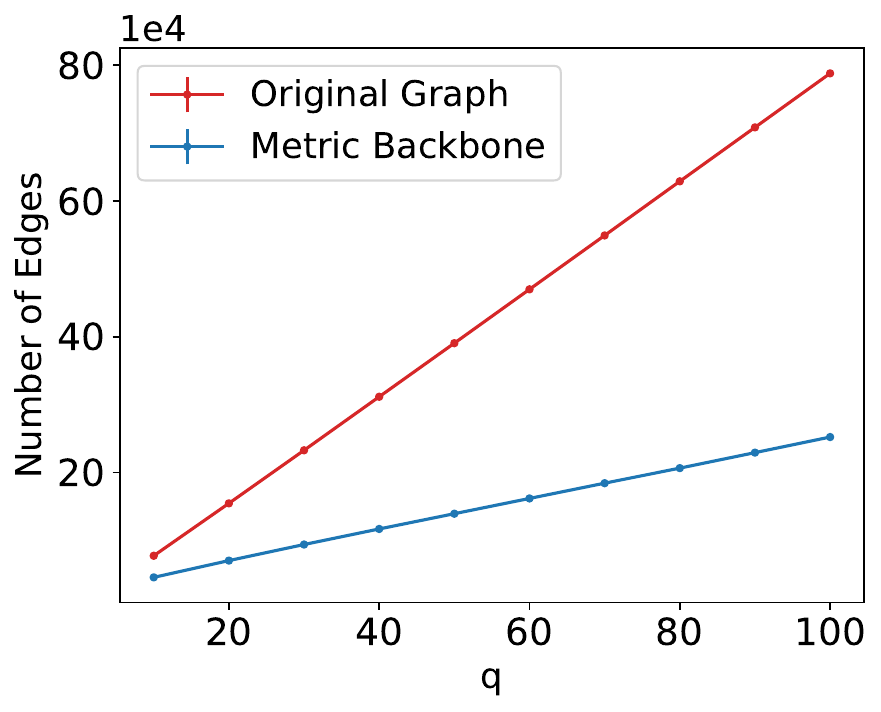}
     \caption{HAR}
 \end{subfigure} 
 \caption{Number of edges remaining in each graph, using dot product similarity.}
 \label{fig:remainingEdges_dotProduct}
\end{figure}

%\paragraph{Code availability} We provide the code used for the experiments: \url{https://github.com/Charbel-11/Why-the-Metric-Backbone-Preserves-Community-Structure}

\paragraph{Computing Resources}

All experiments were run on a standard laptop with 16GB of RAM and 12th Gen Intel(R) Core(TM) i7-1250U CPU. 

To compute the full metric backbone, we used the \texttt{igraph} distances implementation~\cite{igraph}. It takes around 7 minutes for graphs with $10,000$ vertices and $q=10$. For the same graph, computing the spectral sparsification takes around 15 minutes. In general, computing the spectral sparsification was 2 to 3 times slower than computing the metric backbone. Computing the approximate metric backbone takes around 100 seconds for $70,000$ vertices and $q=132$ on our \texttt{C++} implementation. 

For spectral clustering and TSC algorithms, the bottleneck is the clustering part: running the \textit{scikit-learn} implementation of spectral clustering takes 3.5 hours for MNIST and 2 hours for Fashion-MNIST while we only needed 100 seconds to obtain the metric backbone approximation.

%%%%%%%%%%%%%%%%%%%%%%%%%%%%%%%%%%%%%%%%%%%%%%%%%%%%%%%%%%%%

%\include{checklist}

\end{document}